\def \VersionLong {}
\def \VersionFinal {}
	\newcommand{\LongVersion}[1]{#1}
	\newcommand{\ShortVersion}[1]{}
	\newcommand{\LongVersion}[1]{}
	\newcommand{\ShortVersion}[1]{#1}
\newenvironment{myitemize}
	{\ifdefined\VersionLong\begin{itemize}\else\begin{inparaitem}[]\fi}
	{\ifdefined\VersionLong\end{itemize}\else\end{inparaitem}\fi}
\definecolor{darkblue}{rgb}{0.0,0.0,0.6}
\definecolor{darkgreen}{rgb}{0, 0.5, 0}
\definecolor{darkpurple}{rgb}{0.7, 0, 0.7}
\definecolor{darkblue}{rgb}{0, 0, 0.7}
\crefname{line}{\text{line}}{\text{lines}} 
\crefname{item}{\text{item}}{\text{items}} 
\crefname{example}{\text{Example}}{\text{Examples}} 
\crefname{assumption}{\text{Assumption}}{\text{Assumptions}} 
\crefname{algorithm}{\text{Algorithm}}{\text{Algorithms}}
\tikzstyle{every node}=[initial text=]
\tikzstyle{location}=[rectangle, rounded corners, minimum size=12pt, draw=black, fill=blue!10, inner sep=2pt]
\tikzstyle{invariant}=[draw=black, dotted, inner sep=1pt] 
\tikzstyle{final}=[double]
\tikzstyle{accepting}=[final]
\tikzstyle{PTPMOPT}=[,dashed,color=red,semithick]
\newcommand{\styleact}[1]{\ensuremath{\textcolor{coloract}{\mathrm{#1}}}}
\definecolor{coloract}{rgb}{0.0, 0.0, 0.0}
\definecolor{coloract}{rgb}{0.50, 0.70, 0.30}
\definecolor{colorclock}{rgb}{0.4, 0.4, 1}
\definecolor{colorconst}{rgb}{0.50, 0.20, 0.00}
\definecolor{colordisc}{rgb}{1, 0, 1}
\definecolor{colorloc}{rgb}{0.4, 0.4, 0.65}
\definecolor{colorparam}{rgb}{1, 0.6, 0.0}
\pgfplotsset{compat=1.12}
\tikzset{
accepting/.style={double distance=1pt}
}
\newcommand{\N}{{\mathbb{N}}}
\newcommand{\R}{{\mathbb{R}}}
\newcommand{\Rp}{{\mathbb{R}_{>0}}}
\newcommand{\ttrue}{\mathrm{t{\kern-1.5pt}t}}
\newcommand{\ffalse}{\mathrm{f{\kern-1.5pt}f}}
\newcommand{\Rnn}{\R_{\ge 0}}
\newcommand{\powerset}[1]{\mathcal{P} ({#1})}
\newcommand{\imply}{\Rightarrow}
\newcommand{\figcaption}[1]{\def\@captype{figure}\caption{#1}}
\newcommand{\tblcaption}[1]{\def\@captype{table}\caption{#1}}
\newcommand{\recallResult}[2]
{%
	\smallskip

	\noindent\fcolorbox{black}{green!15}{
		\begin{minipage}{.95\columnwidth}
			\noindent\textbf{\cref{#1} (recalled).}
			{\em{}#2}
		\end{minipage}
	}

	\smallskip
}
\newcommand{\gennote}[3]{\todo[linecolor=#2,backgroundcolor=#2!25,bordercolor=#2]{#3: #1}}
\newcommand{\mw}[1]{\gennote{#1}{orange}{MW}}
\newcommand{\instructions}[1]{{\gennote{\bfseries #1}{red}{Instructions}}}
\newcommand{\ourTool}{\textsc{LearnTA}}
\newcommand{\DOTA}{\textsc{OneSMT}}
\newcommand{\Random}{\textsf{Random}}
\newcommand{\unbalanced}{\textsf{Unbalanced}}
\newcommand{\AKM}{\textsf{AKM}}
\newcommand{\CAS}{\textsf{CAS}}
\newcommand{\light}{\textsf{Light}}
\newcommand{\PC}{\textsf{PC}}
\newcommand{\TCP}{\textsf{TCP}}
\newcommand{\Train}{\textsf{Train}}
\newcommand{\FDDI}{\textsf{FDDI}}
\newcommand{\setdiff}{\triangle}
\newcommand{\integer}{\mathrm{int}}
\newcommand{\fractional}{\mathrm{frac}}
\newcommand{\project}[2]{\ensuremath{#1{\downarrow_{#2}}}}
\newcommand{\timedomain}{\Rnn}
\newcommand{\Alphabet}{\Sigma}
\newcommand{\TimedWords}{\mathcal{T}(\Alphabet)}
\newcommand{\word}[1][]{w#1}
\newcommand{\action}{a}
\newcommand{\actionSequence}{\action_1,\action_2,\dots,\action_n}
\newcommand{\timestamp}{\tau}
\newcommand{\timestampSequence}[1][]{\timestamp#1_0,\timestamp#1_1,\dots,\timestamp#1_{n#1}}
\newcommand{\wordInside}[1][]{\timestamp#1_0 \action#1_1 \timestamp#1_1 \action#1_2 \dots \action#1_{n#1} \timestamp#1_{n#1}}
\newcommand{\wordWithInside}[1][]{\word[#1]=\wordInside[#1]}
\newcommand{\Lg}{\mathcal{L}}
\newcommand{\untimed}[1]{{\mu({#1})}}
\newcommand{\duration}[1]{\lambda(#1)}
\newcommand{\emptyword}{\varepsilon}
\newcommand{\timeValuation}[1]{\kappa(#1)}
\newcommand{\Clock}{C}
\newcommand{\clock}{c}
\newcommand{\cval}{\nu}
\newcommand{\CVal}{\clockvaluations}
\newcommand{\dConstant}{d}
\newcommand{\clockvaluations}[1][\Clock]{(\timedomain)^{#1}}
\newcommand{\zerovalue}[1][C]{\mathbf{0}_{#1}}
\newcommand{\loc}{l}
\newcommand{\Loc}{L}
\newcommand{\initLoc}{\loc_0}
\newcommand{\Final}{F}
\newcommand{\A}{\mathcal{A}}
\newcommand{\TA}{\A}
\newcommand{\TAInside}{(\Alphabet,\Loc,\initLoc,\Clock,\Invariant,\Edge,\Final)}
\newcommand{\Invariant}{I}
\newcommand{\InvariantAt}[1]{I ({#1})}
\newcommand{\TAWithInside}{\TA=\TAInside}
\newcommand{\edgeInside}{(\loc,\guard,\action,\resets,\loc')}
\newcommand{\Edge}{\Delta}
\newcommand{\Constraint}{\mathcal{C}_{\Clock}}
\newcommand{\ConstraintWithClock}{\Constraint}
\newcommand{\constraint}{\varphi}
\newcommand{\Guard}{\mathcal{G}_{\Clock}}
\newcommand{\GuardWithClock}{\Guard}
\newcommand{\guard}{g}
\newcommand{\Resets}{\powerset{\Clock}}
\newcommand{\resets}{\rho}
\newcommand{\reset}[2]{#1[#2 \coloneqq 0]}
\newcommand{\TTS}{\mathcal{S}}
\newcommand{\TTSstate}{q}
\newcommand{\TTSState}{Q}
\newcommand{\initTTSState}{\TTSstate_{0}}
\newcommand{\AccTTSState}{\TTSState_{F}}
\newcommand{\TTSTransitionRel}{\to}
\newcommand{\TTSTransition}{{\TTSTransitionRel}}
\newcommand{\TTStransitionRel}{\to}
\newcommand{\TTStransition}{{\TTStransitionRel}}
\newcommand{\TTSInside}{(\TTSState,\initTTSState,\AccTTSState,\TTSTransition)}
\newcommand{\TTSWithInside}{\TTS{} = \TTSInside}
\newcommand{\TTStransitionWithLabel}[1]{\stackrel{#1}{\TTStransition}}
\newcommand{\TTStransitionRelWithLabel}[1]{\stackrel{#1}{\TTStransitionRel}}
\newcommand{\runInside}{\TTSstate_0, \TTStransition_1,\TTSstate_1,\dots, \TTStransition_n,\TTSstate_n}
\newcommand{\region}[1]{{#1}^{\mathrm{r}}}
\newcommand{\regionAutom}{\region{\A}}
\newcommand{\RegionState}{\region{\TTSState}}
\newcommand{\regionState}[1][]{\region{\TTSstate#1}}
\newcommand{\initRegionState}{\region{\TTSstate}_{0}}
\newcommand{\AccRegionState}{\region{\TTSState}_{F}}
\newcommand{\RegionTransition}{E}
\newcommand{\regionAutomInside}{(\Alphabet, \RegionState, \initRegionState, \AccRegionState, \RegionTransition)}
\newcommand{\regionAutomWithInside}{\regionAutom{} = \regionAutomInside}
\newcommand{\regionEquiv}{\approx}
\newcommand{\maxConstant}{K}
\newcommand{\timedCondition}[1][]{\Lambda#1}
\newcommand{\elementaryInside}[1][]{u#1, \timedCondition[#1]}
\newcommand{\elementary}[1][]{(\elementaryInside[#1])}
\newcommand{\Elementary}[1][\Alphabet]{\mathcal{E} (#1)}
\newcommand{\SimpleElementary}[1][\Alphabet]{\mathcal{SE} (#1)}
\newcommand{\CRM}{\Phi}
\newcommand{\CRMTuple}{(\elementaryInside, \elementaryInside['], R)}
\newcommand{\RecognizableFinal}{F}
\newcommand{\timeVariables}{\ensuremath{\mathbb{T}}}
\newcommand{\sumTimestamp}[3][]{\ensuremath{\timeVariables#1_{#2,#3}}}
\newcommand{\sem}[1]{\llbracket{ #1 }\rrbracket}
\newcommand{\fractionalCondition}[1][]{\Theta#1}
\newcommand{\fractionalElementaryInside}[1][]{\elementaryInside[#1]}
\newcommand{\fractionalElementary}[1][]{(\fractionalElementaryInside[#1])}
\newcommand{\sumTimestampSequence}[1][]{\sumTimestamp#1{0}{n#1},\sumTimestamp#1{1}{n#1},\dots,\sumTimestamp#1{n#1}{n#1}}
\newcommand{\fractionalSequence}[1][]{\fractional(\sumTimestamp#1{0}{n#1}),\fractional(\sumTimestamp#1{1}{n#1}),\dots,\fractional(\sumTimestamp#1{n#1}{n#1})}
\newcommand{\memQKey}[1][\Lg]{\mathtt{mem}_{#1}}
\newcommand{\memQ}[2][\Lg]{\memQKey[#1](#2)}
\newcommand{\eqQKey}[1][\Lg]{\mathtt{eq}_{#1}}
\newcommand{\eqQ}[2][\Lg]{\eqQKey[#1](#2)}
\newcommand{\nerode}[1]{\equiv_{#1}}
\newcommand{\symbolicMemQ}[1]{\mathtt{mem}^{\mathtt{sym}}_{#1}}
\newcommand{\Rename}{R}
\newcommand{\rename}[2]{#1 \land #2}
\newcommand{\prefix}{p}
\newcommand{\PrefixSet}{P}
\newcommand{\suffix}{s}
\newcommand{\SuffixSet}{S}
\newcommand{\exteriorSymbol}[1][]{\mathrm{ext}^{#1}}
\newcommand{\exterior}[2][]{\exteriorSymbol[#1] (#2)}
\newcommand{\successorSymbol}[1][]{\mathrm{succ}^{#1}}
\newcommand{\successor}[2][]{\successorSymbol[#1] (#2)}
\newcommand{\predecessorSymbol}[1][]{\mathrm{pred}^{#1}}
\newcommand{\predecessor}[2][]{\predecessorSymbol[#1] (#2)}
\newcommand{\ObsTable}[1][]{O#1}
\newcommand{\ObsTableInside}[1][]{(\PrefixSet#1, \SuffixSet#1, \Table#1)}
\newcommand{\ObsTableWithInside}[1][]{\ObsTable[#1] = \ObsTableInside[#1]}
\newcommand{\TimedObsTableInsideInside}[1][]{\PrefixSet#1, \SuffixSet#1, \Table#1}
\newcommand{\TimedObsTableInside}[1][]{(\TimedObsTableInsideInside[#1])}
\newcommand{\targetLg}{\Lg_{\mathrm{tgt}}}
\newcommand{\targetA}{\A_{\mathrm{tgt}}}
\newcommand{\hypothesisA}{\A_{\mathrm{hyp}}}
\newcommand{\Table}{T}
\newcommand{\TableCell}[3][]{\Table#1(#2, #3)}
\newcommand{\CellRel}{\Rename}
\newcommand{\CellSub}{\sqsubseteq}
\newcommand{\CellSim}{\sim}
\newcommand{\NCellSim}{{\not\sim}}
\newcommand{\cex}{\mathit{cex}}
\newcommand{\shiftConstraint}[1]{\mathrm{shift} (#1)}
\newcommand{\tbcolor}{\cellcolor{green!25}\bf}
 	\definecolor{colorok}{RGB}{80,80,150}
	\definecolor{colorok}{RGB}{0,0,0}
\newcommand{\eg}{\textcolor{colorok}{e.\,g.,}\xspace}
\newcommand{\ie}{\textcolor{colorok}{i.\,e.,}\xspace}
\newcommand{\resp}{\textcolor{colorok}{resp.}\xspace}
\def\orcidID#1{\smash{\href{https://orcid.org/#1}{\protect\raisebox{-1.25pt}{\protect\includegraphics{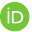}}}}}
\def\@doi#1{\href{https://doi.org/#1}
      {\ttfamily https://doi.org/#1}\egroup}}
\def\@doi#1{\ttfamily https://doi.org/#1\egroup}}
  \def\doi{\bgroup\catcode`\_=12\relax\@doi}}
\title{Active Learning of Deterministic Timed Automata with Myhill-Nerode Style Characterization} 
\titlerunning{Active Learning of DTAs with Myhill-Nerode Style Characterization} 
\author{
\ifdefined\VersionAnonymous%
\else
Masaki Waga\orcidID{0000-0001-9360-7490}\LongVersion{\thanks{%
    This is the author (and extended) version of the manuscript of the same name published in the proceedings of the 35th International Conference on Computer Aided Verification (CAV 2023).
    The final version is available at \url{www.springer.com}.
    }%
}
\fi
}
\institute{%
\ifdefined\VersionAnonymous%
\else
Graduate School of Informatics, Kyoto University, Kyoto, Japan
\fi
}
\begin{document}

\maketitle

\begin{abstract}
We present an algorithm to learn a deterministic timed automaton (DTA) via membership and equivalence queries.
Our algorithm is an extension of the L* algorithm with a Myhill-Nerode style characterization of recognizable timed languages,
which is the class of timed languages recognizable by DTAs.
We first characterize the recognizable timed languages with a Nerode-style congruence.
Using it, we give an algorithm with a smart teacher answering \emph{symbolic} membership queries in addition to membership and equivalence queries.
With a symbolic membership query, one can ask the membership of a certain set of timed words at one time.
We prove that for any recognizable timed language, our learning algorithm returns a DTA recognizing it.
We show how to answer a symbolic membership query with finitely many membership queries.
We also show that our learning algorithm requires a polynomial number of queries with a smart teacher and an exponential number of queries with a normal teacher.
We applied our algorithm to various benchmarks and confirmed its effectiveness with a normal teacher.
\keywords{timed automata, active automata learning, recognizable timed languages, L* algorithm, observation table}
\end{abstract}

\instructions{(CAV'23) Regular papers: 20 pages, excluding bibliography (i.e., 2 extra pages)}
\mw{hello}

\section{Introduction}\label{section:introduction}

\emph{Active automata learning} is a class of methods to infer an automaton recognizing an unknown target language $\targetLg \subseteq \Alphabet^*$ through finitely many queries to a teacher.
The L* algorithm~\cite{Angluin87}, the best-known active DFA learning algorithm, infers the minimum DFA recognizing $\targetLg$ using \emph{membership} and \emph{equivalence} queries.
In a membership query, the learner asks if a word $\word \in \Alphabet^*$ is in the target language $\targetLg$, which is used to obtain enough information to construct a hypothesis DFA $\hypothesisA$.
Using an equivalence query, the learner checks if the hypothesis $\hypothesisA$ recognizes the target language $\targetLg$.
If $\Lg(\hypothesisA) \neq \targetLg$, the teacher returns a counterexample  $\cex \in \targetLg \setdiff \Lg(\hypothesisA)$ differentiating the target language and the current hypothesis.
The learner uses $\cex$ to update $\hypothesisA$ to classify $\cex$ correctly.
Such a learning algorithm has been combined with formal verification, \eg{} for testing~\cite{PVY99,MP19,Waga20,SWS21} and controller synthesis~\cite{DBLP:journals/tase/ZhangFL20}.

Most of the DFA learning algorithms rely on the characterization of regular languages by \emph{Nerode's congruence}.
For a language $\Lg$, words $\prefix$ and $\prefix'$ are equivalent if for any extension $\suffix$,
$\prefix \cdot \suffix \in \Lg$ if and only if $\prefix' \cdot \suffix \in \Lg$.
It is well known that if $\Lg$ is regular, such an equivalence relation has finite classes, corresponding to the locations of the minimum DFA recognizing $\Lg$ (known as \emph{Myhill-Nerode theorem}; see, \eg{}~\cite{HMU07}).
Moreover, for any regular language $\Lg$, there are finite extensions $\SuffixSet$ such that $\prefix$ and $\prefix'$ are equivalent if and only if
for any $\suffix \in \SuffixSet$,
$\prefix \cdot \suffix \in \Lg$ if and only if $\prefix' \cdot \suffix \in \Lg$.
Therefore, one can learn the minimum DFA by learning such finite extensions $\SuffixSet$ and the finite classes induced by Nerode's congruence.

\begin{figure}[tbp]
 \begin{subfigure}{0.18\linewidth}
  \centering
  \begin{tikzpicture}[shorten >=1pt,scale=0.8,every node/.style={transform shape},every initial by arrow/.style={initial text={}}]
   \def\distance{1.6}
  \node[initial,state,accepting] (l0) at (0,0) [align=center]{$\loc_0$};
  \node[state] (l1) at (\distance,0) [align=center]{$\loc_1$};
  \node[state] (l2) at (0,-\distance) [align=center]{$\loc_2$};
  \node[state] (l3) at (\distance,-\distance) [align=center]{$\loc_3$};

  \path[->]
  (l0) edge [above, bend left=10] node {$\styleact{a}$} (l1)
  (l0) edge [right,bend left=10] node {$\styleact{b}$} (l2)
  (l1) edge [right,bend left=10] node {$\styleact{a}$} (l3)
  (l1) edge [below, bend left=10] node {$\styleact{b}$} (l0)
  (l2) edge [left,bend left=10] node {$\styleact{a}$} (l0)
  (l2) edge [bend left=10] node[above] {$\styleact{b}$} (l3)
  (l3) edge [bend left=10] node[below] {$\styleact{a}$} (l2)
  (l3) edge [left,bend left=10] node {$\styleact{b}$} (l1)
  ;
  \end{tikzpicture}
  \caption{A DFA $\A$}
 \end{subfigure}
 \hfill
 \begin{subfigure}{.45\textwidth}
  \scriptsize
  \centering
  \begin{tabular}{c|c c}
   & $\styleact{\emptyword}$ & $\styleact{a}$\\\hline
   $\styleact{\emptyword}$& $\top$ & $\bot$\\
   $\styleact{a}$& $\bot$ & $\bot$\\
   \scalebox{0.8}{$\vdots$}&\\
   $\styleact{a}\styleact{a}$& $\bot$ & $\bot$\\
  \end{tabular}
  $\xrightarrow{\text{add $\styleact{b}$ to $\SuffixSet$}}$
  \begin{tabular}{c|c c c}
   & $\styleact{\emptyword}$ & $\styleact{a}$ & $\styleact{b}$\\\hline
   $\styleact{\emptyword}$& $\top$ & $\bot$ & $\bot$\\
   $\styleact{a}$& $\bot$ & $\bot$ & $\top$\\
   \scalebox{0.8}{$\vdots$}&\\
   $\styleact{a}\styleact{a}$& $\bot$ & $\bot$ & $\bot$\\
  \end{tabular}
  \caption{Intermediate observation tables for learning $\A$. $\styleact{a}$ and $\styleact{a}\styleact{a}$ are deemed equivalent with extensions $\SuffixSet = \{\styleact{\emptyword}, \styleact{a}\}$ but distinguished with $\SuffixSet = \{\styleact{\emptyword}, \styleact{a}, \styleact{b}\}$.}%
  \label{figure:observation_table_illustration}
 \end{subfigure}
 \hfill
 \begin{subfigure}{0.30\linewidth}
  \centering
  \begin{tikzpicture}[shorten >=1pt,scale=0.8,every node/.style={transform shape},every initial by arrow/.style={initial text={}}]
   \node[initial,state,accepting] (l0) at (0,0) [align=center]{$\loc_0$};
   \node[state] (l1) at (3.5,0) [align=center]{$\loc_1$};

   \path[->]
   (l0) edge [bend left=10] node[above] {$\styleact{a}, \clock \geq 1 / \clock \coloneqq 0$} (l1)
   (l0) edge [loop above] node {$\styleact{a}, \clock < 1$} (l0)
   (l1) edge [bend left=10] node[below] {$\styleact{a}, \clock \leq 1$} (l0)
   (l1) edge [loop above] node {$\styleact{a}, \clock > 1$} (l1)
   ;
  \end{tikzpicture}
  \caption{A DTA $\A'$ with one clock variable $\clock$}%
  \label{figure:timed_automaton}
 \end{subfigure}
 \begin{subfigure}{1.0\linewidth}
  \scriptsize
  \centering
  \begin{tabular}{c|c c}
   & $\{\timestamp'_0 \mid \timestamp'_0 = 0\}$ & $\{\timestamp'_0 \styleact{a} \mid \timestamp'_0 \in (0,1)\}$\\\hline
   $\{\timestamp_0 \mid \timestamp_0 = 0\}$ & $\top$ & $\top$\\
   $\{\timestamp_0 \mid \timestamp_0 \in (0,1)\}$ & $\top$ & $\timestamp_0 + \timestamp'_0 \in (0,1)$\\
   \scalebox{0.8}{$\vdots$}&\\
   $\{\timestamp_0 \styleact{a} \timestamp_1 \mid \timestamp_0 \in (0,1), \timestamp_1 \in (0,1), \timestamp_0 + \timestamp_1 \in (0,1)\} (= \prefix_1)$ & $\top$ & $\timestamp_0 + \timestamp_1 + \timestamp'_0 \in (0,1)$\\
   \scalebox{0.8}{$\vdots$}&\\
   $\{\timestamp_0 \styleact{a} \timestamp_1 \styleact{a} \timestamp_2 \mid \timestamp_0 \in (1,2), \timestamp_1 \in (0,1), \timestamp_2 \in (0,1), \timestamp_1 + \timestamp_2\in (0,1)\} (= \prefix_2)$ & $\top$ & $\timestamp_1 + \timestamp_2 + \timestamp'_0 \in (0,1)$\\
   \scalebox{0.8}{$\vdots$}&\\
  \end{tabular}
  \caption{Timed observation table for learning $\A'$. Each cell is indexed by a pair $(\prefix, \suffix) \in \PrefixSet \times \SuffixSet$ of elementary languages. The cell indexed by $(\prefix, \suffix)$ shows a constraint $\timedCondition$ such that $\word \in \prefix \cdot \suffix$ satisfies $\word \in \targetLg$ if and only if $\timedCondition$ holds. Elementary languages $\prefix_1$ and $\prefix_2$ are deemed equivalent with the equation $\tau^1_0 + \tau^1_1 = \tau^2_1 + \tau^2_2$, where $\tau^j_i$ represents $\tau_i$ in $\prefix_j$.}%
  \label{figure:timed_observation_table_illustration}
 \end{subfigure}
 \caption{Illustration of observation tables in the L* algorithm for DFA learning (\cref{figure:observation_table_illustration}) and our algorithm for DTA learning (\cref{figure:timed_observation_table_illustration})}%
 \label{figure:comparison}
\end{figure}

The L* algorithm learns the minimum DFA recognizing the target language $\targetLg$ using a 2-dimensional array called an \emph{observation table}.
\cref{figure:observation_table_illustration} illustrates observation tables.
The rows and columns of an observation table are indexed with finite sets of words $\PrefixSet$ and $\SuffixSet$, respectively.
Each cell indexed by $(\prefix, \suffix) \in \PrefixSet \times \SuffixSet$ shows if $\prefix \cdot \suffix \in \targetLg$.
The column indices $\SuffixSet$ are the current extensions approximating Nerode's congruence.
The L* algorithm increases $\PrefixSet$ and $\SuffixSet$ until: 1) the equivalence relation defined by $\SuffixSet$ converges to Nerode's congruence and 2) $\PrefixSet$ covers all the classes induced by the congruence.
The equivalence between $\prefix, \prefix'\in\PrefixSet$ under $\SuffixSet$ can be checked by comparing the rows in the observation table indexed with $\prefix$ and $\prefix'$.
For example, \cref{figure:observation_table_illustration} shows that $\styleact{a}$ and $\styleact{a}\styleact{a}$ are deemed equivalent with extensions $\SuffixSet = \{\styleact{\emptyword}, \styleact{a}\}$ but distinguished by adding $\styleact{b}$ to $\SuffixSet$.
The refinement of $\PrefixSet$ and $\SuffixSet$ is driven by certain conditions to validate the DFA construction and 
by addressing the counterexample obtained by an equivalence query.

\emph{Timed} words are extensions of conventional words with real-valued dwell time between events.
\emph{Timed} languages, sets of timed words, are widely used to formalize real-time systems and their properties, \eg{} for formal verification.
Among various formalisms representing timed languages, \emph{timed automata (TAs)}~\cite{AD94} is one of the widely used formalisms.
A TA is an extension of an NFA
 with finitely many clock variables to represent timing constraints.
\cref{figure:timed_automaton} shows an example.

Despite its practical relevance, learning algorithms for TAs are only available for limited subclasses of TAs, \eg{} real-time automata~\cite{AnWZZZ21,AnZZZ21}, event-recording automata~\cite{GJL10,GJP06}, event-recording automata with unobservable reset~\cite{HJM20}, and one-clock deterministic TAs~\cite{ACZZZ20,XAZ22}.
Timing constraints representable by these classes are limited, \eg{}
by restricting the number of clock variables or
by restricting the edges where a clock variable can be reset.
Such restriction simplifies the inference of timing constraints in learning algorithms.
%

\paragraph{Contributions}
In this paper, we propose an active learning algorithm for \emph{deterministic} TAs (DTAs).
The languages recognizable by DTAs are called \emph{recognizable timed languages}~\cite{MP04}.
Our strategy is as follows: first, we develop a Myhill-Nerode style characterization of recognizable timed languages;
then, we extend the L* algorithm for recognizable timed languages using the similarity of the Myhill-Nerode style characterization.

Due to the continuity of dwell time in timed words, it is hard to characterize recognizable timed languages by a Nerode-style congruence between timed words.
For example, for the DTA in \cref{figure:timed_automaton}, for any $\tau, \tau' \in [0,1)$ satisfying $\tau < \tau'$,
$(1 - \tau') \styleact{a}$ distinguishes $\tau$ and $\tau'$ because $\tau (1 - \tau') \styleact{a}$ leads to $\loc_0$ while $\tau (1 - \tau) \styleact{a}$ leads to $\loc_1$.
Therefore, such a congruence can make \emph{infinitely} many classes.

Instead, we define a Nerode-style congruence between sets of timed words called \emph{elementary languages}~\cite{MP04}.
An elementary language is a timed language defined by a word with a conjunction of inequalities constraining the time difference between events.
We also use an equality constraint, which we call, a \emph{renaming equation} to define the congruence.
Intuitively, a renaming equation bridges the time differences in an elementary language and the clock variables in a TA.\@
We note that there can be multiple renaming equations showing the equivalence of two elementary languages.

%
%
\begin{example}%
 \label{example:intuition_congruence}
 Let $\prefix_1$ and $\prefix_2$ be elementary languages $\prefix_1 = \{\tau^1_0 \styleact{a} \tau^1_1 \mid \tau^1_0 \in (0,1), \tau^1_1 \in (0,1), \tau^1_0 + \tau^1_1\in (0,1)\}$ and $\prefix_2 = \{\tau^2_0 \styleact{a} \tau^2_1 \styleact{a} \tau^2_2 \mid \tau^2_0 \in (1,2), \tau^2_1 \in (0,1), \tau^2_2 \in (0,1), \tau^2_1 + \tau^2_2\in (0,1)\}$.
 For the DTA in \cref{figure:timed_automaton}, $\prefix_1$ and $\prefix_2$ are equivalent with the renaming equation $\tau^1_0 + \tau^1_1 = \tau^2_1 + \tau^2_2$
 because for any $\word_1 = \tau^1_0 \styleact{a} \tau^1_1 \in \prefix_1$ and $\word_2 = \tau^2_0 \styleact{a} \tau^2_1 \styleact{a} \tau^2_2 \in \prefix_2$:
 1) we reach $\loc_0$ after reading either of $\word_1$ and $\word_2$ and 2) the values of $\clock$ after reading $\word_1$ and $\word_2$ are $\tau^1_0 + \tau^1_1$ and $\tau^2_1 + \tau^2_2$, respectively.
\end{example}

We characterize recognizable timed languages by the finiteness of the equivalence classes defined by the above congruence.
We also show that for any recognizable timed language, there is a finite set $\SuffixSet$ of elementary languages such that the equivalence of any prefixes can be checked by the extensions $\SuffixSet$.

By using the above congruence, we extend the L* algorithm for DTAs.
The high-level idea is the same as the original L* algorithm: 1) the learner makes membership queries to obtain enough information to construct a hypothesis DTA $\hypothesisA$ and 2) the learner makes an equivalence query to check if $\hypothesisA$ recognizes the target language.
The largest difference is in the cells of an observation table.
Since the concatenation $\prefix \cdot \suffix$ of an index pair $(\prefix, \suffix) \in \PrefixSet \times \SuffixSet$ is not a timed word but a set of timed words,
its membership is not defined as a Boolean value.
Instead, we introduce the notion of \emph{symbolic} membership and use it as the value of each cell of the \emph{timed} observation table.
Intuitively, the symbolic membership is the constraint representing the subset of $\prefix \cdot \suffix$ included by $\targetLg$.
Such a constraint can be constructed by finitely many (non-symbolic) membership queries.

\begin{example}%
 \label{example:intuition_timed_observation_table}
 \cref{figure:timed_observation_table_illustration} illustrates a \emph{timed} observation table.
 The equivalence between $\prefix_1, \prefix_2 \in \PrefixSet$ under $\SuffixSet$ can be checked by comparing the cells in the rows indexed with $\prefix_1$ and $\prefix_2$ with renaming equations.
 For the cells in rows indexed by $\prefix_1$ and $\prefix_2$,
 their constraints are the same by replacing $\tau_0 + \tau_1$ with $\tau_1 + \tau_2$ and vice versa.
 Thus, $\prefix_1$ and $\prefix_2$ are equivalent with the current extensions $\SuffixSet$.
\end{example}

Once the learner obtains enough information, it constructs a DTA via the monoid-based representation of recognizable timed languages~\cite{MP04}.
%
We show that for any recognizable timed language, our algorithm terminates and returns a DTA recognizing it.
We also show that the number of the necessary queries is polynomial to the size of the equivalence class defined by the Nerode-style congruence if symbolic membership queries are allowed and, otherwise, exponential to it.
Moreover, if symbolic membership queries are not allowed, the number of the necessary queries is at most
doubly exponential to the number of the clock variable of a DTA recognizing the target language and
singly exponential to the number of locations of a DTA recognizing the target language.
This worst-case complexity is the same as the one-clock DTA learning algorithm in~\cite{XAZ22}.

We implemented our DTA learning algorithm in a prototype library \ourTool{}.
Our experiment results show that it is efficient enough for some benchmarks taken from practical applications, \eg{} the FDDI protocol.
This suggests the practical relevance of our algorithm.

The following summarizes our contribution.
\begin{itemize}
 \item We characterize recognizable timed languages by a Nerode-style congruence.
 \item Using the above characterization, we give an active DTA learning algorithm.
 \item Our experiment results suggest its practical relevance.
\end{itemize}

\paragraph{Related work}\label{sec:related_work}

Among various characterization of timed languages~\cite{AD94,BPT01,ACM02,MP04,BL12,BRP17},
the characterization by \emph{recognizability}~\cite{MP04} is closest to our Myhill-Nerode-style characterization.
Both of them use finite sets of\LongVersion{ prefix} elementary languages for characterization.
Their main difference is that
\cite{MP04} proposes a formalism to define a timed language by relating prefixes by a morphism, whereas 
we propose a technical gadget to define an equivalence relation over timed words with respect to suffixes using symbolic membership.
This difference makes our definition suitable for an L*-style algorithm, where
the original L* algorithm is based on Nerode's congruence, which defines an equivalence relation over words with respect to suffixes using conventional membership.

As we have discussed so far, active TA learning~\cite{GJL10,GJP06,ACZZZ20,XAZ22,HJM20} has been studied mostly for limited subclasses of TAs,
where
 the number of the clock variables or 
 the clock variables reset at each edge is fixed.
In contrast, our algorithm infers both of the above information.
Another difference is in the technical strategy.
Most of the existing algorithms are related to the active learning of \emph{symbolic automata}~\cite{DD17,AD18}, enhancing the languages with clock valuations.
In contrast, we take a more semantic approach via the Nerode-style congruence.

Another recent direction is to use a \emph{genetic algorithm} to infer TAs in passive~\cite{TALL19} or active~\cite{APT20} learning.
This differs from our learning algorithm based on a formal characterization of timed languages.
Moreover, these algorithms may not converge to the correct automaton due to a genetic algorithm.


\section{Preliminaries}\label{sec:preliminaries}

For a set $X$, its powerset is denoted by $\powerset{X}$.
We denote the empty sequence by $\emptyword$.
For sets $X, Y$, we denote their symmetric difference by $X \triangle Y = \{x \mid x \in X \land x \notin Y\} \cup \{y \mid y \in Y \land y \notin X\}$.

\subsection{Timed words and timed automata}\label{subsec:timed_words_timed_automata}

\begin{definition}
 [timed word]%
 \label{def:timed_words}
 For a finite alphabet $\Alphabet$, a \emph{timed word} $\word$ is an alternating sequence
 $\wordInside$ of $\Alphabet$ and $\Rnn$.
 The set of timed words over $\Alphabet$ is denoted by $\TimedWords$.
 A \emph{timed language} $\Lg \subseteq \TimedWords$ is a set of timed words.
\end{definition}

For timed words $\wordWithInside$ and $\wordWithInside[']$, 
their concatenation $\word \cdot \word[']$ is
$\word \cdot \word['] = \timestamp_0 \action_1 \timestamp_1 \action_2 \dots \action_{n} (\timestamp_{n} + \timestamp'_0) \action'_1 \timestamp'_1 \action'_2 \dots \action'_{n'} \timestamp'_{n'}$.
The concatenation is naturally extended to timed languages:
for a timed word $\word$ and timed languages $\Lg, \Lg'$,
we let $\word \cdot \Lg = \{ \word \cdot \word_{\Lg} \mid \word_{\Lg} \in \Lg\}$,
$\Lg \cdot \word = \{ \word_{\Lg} \cdot \word \mid \word_{\Lg} \in \Lg\}$, and
$\Lg \cdot \Lg' = \{ \word_{\Lg} \cdot \word_{\Lg'} \mid \word_{\Lg} \in \Lg, \word_{\Lg'} \in \Lg'\}$.
For timed words $\word$ and $\word'$, $\word$ is a \emph{prefix} of $\word'$ if there is a timed word $\word''$ satisfying $\word \cdot \word'' = \word'$.
A timed language $\Lg$ is \emph{prefix-closed} if for any $\word \in \Lg$, $\Lg$ contains all the prefixes of $\word$.

For a finite set $\Clock$ of clock variables, a \emph{clock valuation} is a function $\cval \in \CVal$.
We let $\zerovalue[\Clock]$ be the clock valuation\LongVersion{ $\zerovalue[\Clock] \in \CVal$}
 satisfying  $\zerovalue[\Clock](\clock) = 0$ for any $\clock \in \Clock$.
For\LongVersion{ a clock valuation} $\cval \in \CVal$\LongVersion{ over $\Clock$} and $\timestamp\in\timedomain$, we let
 $\cval + \timestamp$ be the clock valuation satisfying $(\cval+\timestamp)(\clock)=\cval(\clock)+\timestamp$ for any $\clock \in \Clock$.
For\LongVersion{ a clock valuation} $\cval\in\CVal$ and $\resets \subseteq \Clock$, we let
 $\reset{\cval}{\resets}$ be the clock valuation satisfying
 $(\reset{\cval}{\resets})(x)=0$ for $\clock \in \resets$ and
 $(\reset{\cval}{\resets})(\clock)=\cval(\clock)$ for $\clock \notin \resets$.
%
We let $\GuardWithClock$ be the set of constraints defined by a finite conjunction of inequalities $\clock \bowtie \dConstant$, where
$\clock \in \Clock$,
$\dConstant \in \N$, and
${\bowtie} \in \{>, \geq, \leq,<\}$.
 We let $\ConstraintWithClock$ be the set of constraints defined by a finite conjunction of inequalities $\clock \bowtie \dConstant$ or $\clock - \clock' \bowtie \dConstant$, where
 $\clock, \clock' \in \Clock$,
 $\dConstant \in \N$, and
 ${\bowtie} \in \{>, \geq, \leq,<\}$.
 We denote $\bigwedge \emptyset$ by $\top$.
 For\LongVersion{ a clock valuation} $\cval \in \CVal$ and $\constraint \in \ConstraintWithClock \cup \GuardWithClock$, we denote $\cval \models \constraint$ if $\cval$ satisfies $\constraint$.

\begin{definition}
 [timed automaton]\label{def:ta}
 A \emph{timed automaton} (TA)\LongVersion{\footnote{Unlike~\cite{MP04}, we define the acceptance by the accepting \emph{locations} rather than accepting \emph{conditions}. This choice does not affect the expressive power because accepting conditions can be encoded with unobservable edges and invariants.}} is a 7-tuple $\TAInside$, where:
 $\Alphabet$ is the finite alphabet,
 $\Loc$ is the finite set of locations,
 $\initLoc \in \Loc$ is the initial location,
 $\Clock$ is the finite set of clock variables,
 $\Invariant\colon \Loc \to \Constraint$ is the invariant of each location,
 $\Edge \subseteq \Loc \times \Guard \times (\Alphabet \cup \{\emptyword\}) \times\Resets\times\Loc$ is the set of edges, and
 $\Final \subseteq \Loc$ is the accepting locations.
\end{definition}

\LongVersion{An edge is \emph{unobservable} if it is labeled with $\emptyword$, and otherwise, it is \emph{observable}.}
\begin{LongVersionBlock}
 \begin{definition}
 [DTA]\label{def:deterministic}
 A TA $\TAWithInside$ is \emph{deterministic} if we have the following.
 \begin{itemize}
  \item For any $\action \in \Alphabet$ and $(\loc, \guard, \action, \resets, \loc'), (\loc, \guard', \action, \resets', \loc'') \in \Edge$, $\guard \land \guard'$ is unsatisfiable.
  \item For any $(\loc, \guard, \emptyword, \resets, \loc') \in \Edge$, $\guard \land \InvariantAt{\loc}$ is at most a singleton.
 \end{itemize}
 \end{definition}
\end{LongVersionBlock}
\begin{ShortVersionBlock}
 A TA is \emph{deterministic} if
 1) for any $\action \in \Alphabet$ and $(\loc, \guard, \action, \resets, \loc'), (\loc, \guard', \action, \resets', \loc'') \in \Edge$, $\guard \land \guard'$ is unsatisfiable, or
 2) for any $(\loc, \guard, \emptyword, \resets, \loc') \in \Edge$, $\guard \land \InvariantAt{\loc}$ is at most a singleton.
\end{ShortVersionBlock}
\cref{figure:timed_automaton} shows a deterministic TA (DTA).\@

The semantics of a TA is defined by a \emph{timed transition system (TTS)}.

\begin{definition}
 [semantics of TAs]%
 \label{def:semantics}
 For a TA $\TAWithInside$,
 the \emph{timed transition system (TTS)} is a 4-tuple
 $\TTS = (\TTSState, \initTTSState, \AccTTSState, \TTSTransition)$, where:
 \LongVersion{\begin{itemize}}
 \LongVersion{\item} $\TTSState = \Loc\times\clockvaluations$ is the set of \emph{(concrete) states},
 \LongVersion{\item} $\initTTSState = (\initLoc, \zerovalue)$ is the \emph{initial state},
 \LongVersion{\item} $\AccTTSState = \{ (\loc, \cval) \in \TTSState \mid \loc \in \Final\}$ is the set of \emph{accepting states}, and
 \LongVersion{\item} $\TTSTransition \subseteq \TTSState\times\TTSState$ is the \emph{transition relation} consisting of the following\footnote{We use $\TTStransitionWithLabel{\emptyword, \timestamp}$ to avoid the discussion with an arbitrary small dwell time in~\cite{MP04}.}.
        \begin{itemize}
         \item For each $(\loc, \cval) \in \TTSState$ and $\timestamp \in \Rp$, we have $(\loc, \cval)\TTStransitionRelWithLabel{\timestamp} (\loc, \cval + \timestamp)$ if $\cval + \timestamp' \models \InvariantAt{\loc}$ holds for each $\timestamp' \in [0, \timestamp)$. 
         \item For each $(\loc, \cval), (\loc', \cval') \in \TTSState$, $\action \in \Alphabet$, and $\edgeInside \in \Edge$, we have $(\loc, \cval) \TTStransitionRelWithLabel{\action} (\loc', \cval')$ if we have $\cval\models\guard$ and $\cval'= \reset{\cval}{\resets}$.
         \item For each $(\loc, \cval), (\loc', \cval') \in \TTSState$, $\timestamp \in \Rp$, and $(\loc, \guard, \emptyword, \resets, \loc') \in \Edge$,
               we have $(\loc, \cval) \TTStransitionRelWithLabel{\emptyword, \timestamp} (\loc', \cval' + \timestamp)$
               if we have $\cval\models\guard$, $\cval'= \reset{\cval}{\resets}$, and $\forall \timestamp' \in [0, \timestamp).\, \cval' + \timestamp' \models \InvariantAt{\loc'}$.
        \end{itemize}
\LongVersion{\end{itemize}}
\end{definition}

A \emph{run} of a TA $\A$ is an alternating sequence $\runInside$ of\LongVersion{ states} $\TTSstate_i \in \TTSState$ and\LongVersion{ transitions} $\TTStransition_i \in \TTSTransition$ satisfying $\TTSstate_{i-1} \TTStransitionRel_i \TTSstate_{i}$ for any $i \in \{1,2, \dots,n\}$.
A run $\runInside$ is accepting if $\TTSstate_n \in \AccTTSState$.
Given such a run, the associated timed word is the concatenation of the labels of the transitions.
The timed \emph{language} $\Lg(\TA)$ of a TA $\TA$ is the set of timed words associated with some accepting run of~$\TA$.

\newcommand{\DefinitionRegions}{%
\emph{Region abstraction} is an important theoretical gadget in the theory of timed automaton:
it reduces the \emph{infinite} state space $\TTSState$ of a TTS $\TTS$
to its \emph{finite} abstraction, the latter being amenable to algorithmic treatments, \eg{} reachability analysis with a breadth-first search.
Specifically, it relies on an equivalence relation $\regionEquiv_{\maxConstant}$ over clock valuations.
Given a TA $\TAWithInside$, for each $\clock \in \Clock$, let $\maxConstant_\clock$ denote the greatest number compared with $\clock$ in $\A$.
Writing $\integer(t)$ and $\fractional(t)$ for the integer and fractional parts of $t \in \Rnn$,
let $\regionEquiv_{\maxConstant}$ be the equivalence relation over clock valuations $\cval, \cval'$ defined as follows.
We have $\cval \regionEquiv_{\maxConstant} \cval'$ if:
\begin{itemize}
 \item for each $\clock\in \Clock$ we have  $\integer(\cval(\clock)) = \integer(\cval'(\clock))$ or ($\cval(\clock) > \maxConstant_{\clock}$ and $\cval'(\clock) > \maxConstant_{\clock}$);
 \item for any $\clock, \clock' \in \Clock$ satisfying $\cval(\clock) \leq \maxConstant_{\clock}$ and $\cval(\clock') \leq \maxConstant_{\clock'}$, $\fractional(\cval(\clock)) < \fractional(\cval(\clock'))$ if and only if $\fractional(\cval'(\clock)) < \fractional(\cval'(\clock'))$; and
 \item for any $\clock \in \Clock$ such that $\cval(\clock) \leq \maxConstant_{\clock}$, $\fractional(\cval(\clock)) = 0$ if and only if $\fractional(\cval'(\clock)) = 0$.
\end{itemize}

The equivalence relation $\regionEquiv_{\maxConstant}$ is extended over the states $\TTSState$ of a TTS $\TTS$ such that $(\loc, \cval) \regionEquiv_{\maxConstant} (\loc', \cval')$ if we have $\loc = \loc'$ and $\cval \regionEquiv_{\maxConstant} \cval'$.
By merging the equivalent states with respect to $\regionEquiv_{\maxConstant}$,
we abstract a TTS and obtain the \emph{region automaton}~\cite{AD94}.

\begin{definition}
 [region automata (RAs)]%
 \label{def:region_automata}
 For a TA $\TA$ and its TTS $\TTSWithInside$,
 the \emph{region automaton} (RA) is the NFA
  $\regionAutomWithInside$ defined as follows:
  $\RegionState = \TTSState / {\regionEquiv_{\maxConstant}}$;
  $\initRegionState \ni \initTTSState$;
  $\AccRegionState  = \{ \regionState \in \RegionState \mid \regionState \subseteq \AccTTSState\}$;
  $\RegionTransition = \{(\regionState, \action, \regionState[']) \mid \exists \TTSstate \in \regionState, \TTSstate' \in \regionState['].\, \TTSstate \TTStransitionRelWithLabel{\action} \TTSstate' \} \cup \{(\regionState, \emptyword, \regionState[']) \mid \exists \TTSstate \in \regionState, \TTSstate' \in \regionState['], \timestamp \in \Rp.\, \TTSstate \TTStransitionRelWithLabel{\timestamp} \TTSstate' \text{ or } \TTSstate \TTStransitionRelWithLabel{\emptyword, \timestamp} \TTSstate'\}$.
\end{definition}
}

\subsection{Recognizable timed languages}

Here, we review the \emph{recognizability}~\cite{MP04} of timed languages.
\begin{LongVersionBlock}

\end{LongVersionBlock}

\begin{definition}
 [timed condition]
 For a set $\timeVariables = \{\timestampSequence\}$ of ordered variables,
 a \emph{timed condition} $\timedCondition$ is a finite conjunction of inequalities $\sumTimestamp{i}{j} \bowtie \dConstant$, where
 $\sumTimestamp{i}{j} = \sum_{{k = i}}^{j} \timestamp_k$,
 ${\bowtie} \in \{>, \geq, \leq,<\}$, and
 $\dConstant \in \N$.
\end{definition}

A timed condition $\timedCondition$ is \emph{simple}\footnote{The notion of simplicity is taken from~\cite{GJL10}} if
for each $\sumTimestamp{i}{j}$,
$\timedCondition$ contains $\dConstant < \sumTimestamp{i}{j} < \dConstant + 1$ or $\dConstant \leq \sumTimestamp{i}{j} \land \sumTimestamp{i}{j} \leq \dConstant$
for some $\dConstant \in \N$.
A timed condition $\timedCondition$ is \emph{canonical} if
we cannot strengthen or add any inequality $\sumTimestamp{i}{j} \bowtie \dConstant$ to $\timedCondition$ without changing its semantics.

\begin{definition}
 [elementary language]
 A timed language $\Lg$ is \emph{elementary} if there are $u = \actionSequence \in \Alphabet^*$ and
 a timed condition $\timedCondition$ over $\{\timestampSequence\}$ satisfying
 $\Lg = \{\wordInside \mid \timestampSequence \models \timedCondition\}$, and
 the set of valuations of $\{\timestampSequence\}$ defined by $\timedCondition$ is bounded.
 We denote such\LongVersion{ an elementary language} $\Lg$ by $\elementary$.
 We let $\Elementary$ be the set of elementary languages over $\Alphabet$.
\end{definition}

 For\LongVersion{ elementary languages} $\prefix, \prefix' \in \Elementary$,
 $\prefix$ is a \emph{prefix} of $\prefix'$ if
 for any $\word' \in \prefix'$, 
 there is a prefix $\word \in \prefix$ of $\word'$, and
 for any $\word \in \prefix$, 
 there is $\word' \in \prefix'$ such that $\word$ is a prefix of $\word'$.
For any elementary language, the number of its prefixes is finite.
For a set of elementary languages,\LongVersion{ the notions of} \emph{prefix-closedness} is defined based on the above definition of prefixes.

 An elementary language $\elementary$ is \emph{simple} if there is a simple and canonical timed condition $\timedCondition'$ satisfying $\elementary = (u, \timedCondition')$.
 We let $\SimpleElementary$ be the set of simple elementary languages over $\Alphabet$.
 Without loss of generality, we assume that for any $\elementary \in \SimpleElementary$, $\timedCondition$ is simple and canonical.
 We remark that any DTA cannot distinguish timed words in a simple elementary language,
 \ie{} for any\LongVersion{ simple elementary language} $\prefix \in \SimpleElementary$ and a DTA $\TA$,
 we have either $\prefix \subseteq \Lg(\TA)$ or $\prefix \cap \Lg(\TA) = \emptyset$.
 We can decide if $\prefix \subseteq \Lg(\TA)$ or $\prefix \cap \Lg(\TA) = \emptyset$ by taking some $\word \in \prefix$ and checking if $\word \in \Lg(\TA)$.

\begin{definition}
 [immediate exterior]
 Let $\Lg = \elementary$ be an elementary language.
 For $\action \in \Alphabet$,
 the \emph{discrete immediate exterior} $\exterior[\action]{\Lg}$ of $\Lg$ is  $\exterior[\action]{\Lg} = (u \cdot \action, \timedCondition \cup \{\timestamp_{ |u| + 1 } = 0\})$.
 The \emph{continuous immediate exterior} $\exterior[t]{\Lg}$ of $\Lg$ is  $\exterior[t]{\Lg} = (u, \timedCondition^t)$, where
 $\timedCondition^t$ is the timed condition such that each inequality $\sumTimestamp{i}{|u|} = \dConstant$ in $\timedCondition$ is replaced with  $\sumTimestamp{i}{|u|} > \dConstant$ if such an inequality exists, and otherwise,
 the inequality $\sumTimestamp{i}{|u|} < \dConstant$ in $\timedCondition$ with the smallest index $i$ is replaced with $\sumTimestamp{i}{|u|} = \dConstant$.
 The \emph{immediate exterior}\LongVersion{ $\exterior{\Lg}$} of $\Lg$ is
 $\exterior{\Lg} = \bigcup_{\action \in \Alphabet}\exterior[\action]{\Lg} \cup \exterior[t]{\Lg}$.
\end{definition}

\begin{example}
 For a word $u = \styleact{a} \cdot \styleact{a}$ and a timed condition $\timedCondition = \{\sumTimestamp{0}{0} \in (1,2) \land \sumTimestamp{0}{1} \in (1,2) \land \sumTimestamp{0}{2} \in (1,2) \land \sumTimestamp{1}{2} \in (0,1) \land \sumTimestamp{2}{2} = 0\}$,
 we have
 $1.3 \cdot \styleact{a} \cdot 0.5 \cdot \styleact{a} \cdot 0 \in \elementary$ and
 $1.7 \cdot \styleact{a} \cdot 0.5 \cdot \styleact{a} \cdot 0 \notin \elementary$.
 The discrete and continuous immediate exteriors of $\elementary$ are
 $\exterior[\styleact{a}]{\elementary} = (u \cdot \styleact{a}, \timedCondition^{\styleact{a}})$ and
 $\exterior[t]{\elementary} = (u, \timedCondition^{t})$, where
 $\timedCondition^{\styleact{a}} = \{\sumTimestamp{0}{0} \in (1,2) \land \sumTimestamp{0}{1} \in (1,2) \land \sumTimestamp{0}{2} \in (1,2) \land \sumTimestamp{1}{2} \in (0,1) \land \sumTimestamp{2}{2} = \sumTimestamp{3}{3} = 0\}$ and
 $\timedCondition^{t} = \{\sumTimestamp{0}{0} \in (1,2) \land \sumTimestamp{0}{1} \in (1,2) \land \sumTimestamp{0}{2} \in (1,2) \land \sumTimestamp{1}{2} \in (0,1) \land \sumTimestamp{2}{2} > 0\}$.
\end{example}

\begin{definition}
 [chronometric timed language]
 A timed language $\Lg$ is \emph{chronometric} if there is a finite set $\{\elementary[_{1}], \elementary[_{2}], \dots, \elementary[_{m}]\}$ of disjoint elementary languages satisfying
 $\Lg = \bigcup_{i \in \{1,2, \dots,m\}} \elementary[_{i}]$.
\end{definition}

For any elementary language $\Lg$, its immediate exterior $\exterior{\Lg}$ is chronometric.
We naturally extend the notion of exterior to chronometric timed languages, \ie{}
for a chronometric timed language $\Lg = \bigcup_{i \in \{1,2, \dots,m\}} \elementary[_{i}]$,
we let $\exterior{\Lg} = \bigcup_{i \in \{1,2, \dots,m\}} \exterior{\elementary[_{i}]}$, which is also chronometric.
For a timed word $\wordWithInside$, we denote the valuation of $\timestampSequence$ by $\timeValuation{\word}\LongVersion{ \in \clockvaluations[\timeVariables]}$.

\emph{Chronometric relational morphism}~\cite{MP04} relates any timed word to a timed word in a certain set $\PrefixSet$, which is later used to define a timed language.
Intuitively, the tuples in $\CRM$ specify a mapping from timed words immediately out of $\PrefixSet$ to timed words in $\PrefixSet$.
By inductively applying it, any timed word is mapped to $\PrefixSet$.

\begin{definition}
 [chronometric relational morphism]%
 \label{def:chronometric_relational_morphism}
 Let $\PrefixSet$ be a chronometric and prefix-closed timed language.
 Let $\CRMTuple$ be a 5-tuple such that
 $\elementary \subseteq  \exterior{\PrefixSet}$,
 $\elementary['] \subseteq \PrefixSet $, and
 $\Rename$ is a finite conjunction of equations of the form
 $\sumTimestamp{i}{|u|} = \sumTimestamp[']{j}{|u'|}$, where $i \leq |u|$ and $j \leq |u'|$.
 For such a tuple, we let $\sem{\CRMTuple} \subseteq \elementary \times \elementary[']$ be the relation such that
 $(\word, \word') \in \sem{\CRMTuple}$ if and only if
 $\timeValuation{\word}, \timeValuation{\word'} \models \Rename$.
 For a finite set $\CRM$ of such tuples\LongVersion{ $\CRMTuple$},
 the \emph{chronometric relational morphism} $\sem{\CRM} \subseteq \TimedWords \times \PrefixSet$ is the relation inductively defined as follows:
 1) for $\word \in \PrefixSet$, we have $(\word, \word) \in \sem{\CRM}$;
 2) for $\word \in \exterior{\PrefixSet}$ and $\word' \in \PrefixSet$, we have $(\word, \word') \in \sem{\CRM}$ if we have $(\word, \word') \in \sem{\CRMTuple}$ 
 for one of the tuples $\CRMTuple \in \CRM$;
 3) for $\word \in \exterior{\PrefixSet}$, $\word' \in \TimedWords$, and $\word'' \in \PrefixSet$,
 we have $(\word \cdot \word', \word'') \in \sem{\CRM}$ if there is $\word''' \in \TimedWords$ satisfying
 $(\word, \word''') \in \sem{\CRM}$ and $(\word''' \cdot \word', \word'') \in \sem{\CRM}$.
 We also require that all $\elementary$ in the tuples in $\CRM$ must be disjoint and the union of each such $\elementary$ is $\exterior{\PrefixSet} \setminus \PrefixSet$.
\end{definition}

A chronometric relational morphism $\sem{\CRM}$ is \emph{compatible} with $\RecognizableFinal \subseteq \PrefixSet$ if for each tuple $\CRMTuple$ defining $\sem{\CRM}$, we have either $\elementary['] \subseteq \RecognizableFinal$ or $\elementary['] \cap \RecognizableFinal = \emptyset$.

\begin{definition}
 [recognizable timed language]
 A timed language $\Lg$ is \emph{recognizable} if there is a chronometric prefix-closed set $\PrefixSet$,
 a chronometric subset $\RecognizableFinal$ of $\PrefixSet$, and
 a chronometric relational morphism $\sem{\CRM} \subseteq \TimedWords \times \PrefixSet$ compatible with $\RecognizableFinal$
 satisfying $\Lg = \{\word \mid \exists \word' \in \RecognizableFinal, (\word, \word') \in \sem{\CRM}\}$.
\end{definition}

It is known that for any recognizable timed language $\Lg$, we can construct a DTA $\A$ recognizing $\Lg$, and vice versa~\cite{MP04}.

\subsection{Distinguishing extensions and active DFA learning}\label{subsection:L*}

\LongVersion{Among the various characterization of regular languages, most}\ShortVersion{Most} DFA learning algorithms are based on\LongVersion{ the characterization with} \emph{Nerode's congruence}~\cite{HMU07}.
For a (not necessarily regular) language $\Lg \subseteq \Alphabet^*$, Nerode's congruence ${\nerode{\Lg}} \subseteq \Alphabet^* \times \Alphabet^*$ is the equivalence relation satisfying $\word \nerode{\Lg} \word'$ if and only if for any $\word'' \in \Alphabet^*$,
we have $\word \cdot \word'' \in \Lg \iff \word' \cdot \word'' \in \Lg$.

Generally, we cannot decide if $\word \nerode{\Lg} \word'$ by testing because it requires infinitely many membership checking.
However, if $\Lg$ is regular, \LongVersion{thanks to the Myhill-Nerode theorem,
we can decide whether $\word \nerode{\Lg} \word'$ only by finitely many membership checking.
More precisely, for any regular language $\Lg$,} there is a finite set of suffixes $\SuffixSet \subseteq \Alphabet^*$ called \emph{distinguishing extensions} satisfying ${\nerode{\Lg}} = {\CellSim^{\SuffixSet}_{\Lg}}$, where ${\CellSim^{\SuffixSet}_{\Lg}}$ is the equivalence relation satisfying $\word \CellSim^{\SuffixSet}_{\Lg} \word'$ if and only if for any $\word'' \in \SuffixSet$,
we have $\word \cdot \word'' \in \Lg \iff \word' \cdot \word'' \in \Lg$.
Thus, the minimum DFA recognizing $\targetLg$ can be learned by\footnote{The distinguishing extensions $\SuffixSet$ can be defined locally. For example, the TTT algorithm~\cite{IHS14} is optimized with \emph{local} distinguishing extensions for some prefixes $\word \in \Alphabet^*$. Nevertheless, we use the global distinguishing extensions for simplicity.}:
 i) identifying distinguishing extensions $\SuffixSet$ satisfying ${\nerode{\targetLg}} = {\CellSim^{\SuffixSet}_{\targetLg}}$ and 
 ii) constructing the minimum DFA $\A$ corresponding to ${\CellSim^{\SuffixSet}_{\targetLg}}$. 

The \emph{L* algorithm}~\cite{Angluin87} is an algorithm to learn the minimum DFA $\hypothesisA$ recognizing the target regular language $\targetLg$ with finitely many \emph{membership} and \emph{equivalence} queries to the teacher.
In a membership query, the learner asks if $\word \in \Alphabet^*$ belongs to the target language $\targetLg$ \ie{} $\word \in \targetLg$.
In an equivalence query, the learner asks if the hypothesis DFA $\hypothesisA$ recognizes the target language $\targetLg$ \ie{} $\Lg(\hypothesisA) = \targetLg$,
where $\Lg(\hypothesisA)$ is the language of the hypothesis DFA $\hypothesisA$.
When we have $\Lg(\hypothesisA) \neq \targetLg$,
the teacher returns a counterexample $\cex \in \Lg(\hypothesisA) \setdiff \targetLg$.
%
The information obtained via queries is stored in a 2-dimensional array called an \emph{observation table}.
See \cref{figure:observation_table_illustration} for an illustration.
For finite index sets $\PrefixSet, \SuffixSet \subseteq \Alphabet^*$,
for each pair $(\prefix, \suffix) \in (\PrefixSet \cup \PrefixSet \cdot \Alphabet) \times \SuffixSet$,
the observation table stores whether\ShortVersion{ $\prefix \cdot \suffix \in \targetLg$}\LongVersion{ the concatenation $\prefix \cdot \suffix$ is a member of the target language $\targetLg$}.
\begin{LongVersionBlock}
 Formally, an observation table is a 3-tuple $\ObsTableInside$ of
 a prefix-closed finite set $\PrefixSet$ of words,
 a suffix-closed finite set $\SuffixSet$ of words, and a function
 $\Table\colon (\PrefixSet \cup \PrefixSet \cdot \Alphabet) \times \SuffixSet \to \{\top, \bot\}$
assigning a pair $(\prefix, \suffix)$ of words to the membership $\prefix \cdot \suffix \in \targetLg$ of their concatenation to the target language.
\end{LongVersionBlock}
$\SuffixSet$ is the current candidate of the distinguishing extensions, and $\PrefixSet$ represents
$\Alphabet^* / {\CellSim^{\SuffixSet}_{\targetLg}}$.
Since $\PrefixSet$ and $\SuffixSet$ are finite, one can fill the observation table using finite membership queries.


\begin{algorithm}[tbp]
 \caption{Outline of an L*-style active DFA learning algorithm}%
 \label{algorithm:Lstar}
 \DontPrintSemicolon{}
 \SetKwFunction{FConstructTable}{ConstructTable}
 \SetKwFunction{FConstructA}{ConstructDFA}
 \footnotesize
 $\PrefixSet \gets \{\styleact{\emptyword}\};\; \SuffixSet \gets \{\styleact{\emptyword}\}$\;
 \While{$\top$} {\label{algorithm:Lstar:main_loop:begin}
   \While{the observation table is not closed or consistent} {\label{algorithm:Lstar:close_consistent_loop:begin}
     update $\PrefixSet$ and $\SuffixSet$ so that the observation table is closed and consistent\;
   }\label{algorithm:Lstar:close_consistent_loop:end}
   $\hypothesisA \gets \FConstructA{\PrefixSet, \SuffixSet, \Table}$\;
   \Switch{$\eqQ[\targetLg]{\hypothesisA}$} {
     \Case{$\top$} {
       \KwReturn{$\hypothesisA$}\label{algorithm:Lstar:eqQ:end}
     }
     \Case{$\cex$} {\label{algorithm:Lstar:eqQ:cex}
       Update $\PrefixSet$ and/or $\SuffixSet$ using $\cex$\;\label{algorithm:Lstar:add_cex}
     }
   }
 }\label{algorithm:Lstar:main_loop:end}
\end{algorithm}
\cref{algorithm:Lstar} outlines an L*-style algorithm.
We start from $\PrefixSet = \SuffixSet = \{\styleact{\emptyword}\}$ and incrementally increase them.
To construct a hypothesis DFA $\hypothesisA$, the observation table must be \emph{closed} and \emph{consistent}.
\LongVersion{Intuitively, the closedness assures that the target state of each transition is in $\PrefixSet$, and the consistency assures that all the states with the same row behave in the same way.
Formally, an}
\ShortVersion{An} observation table is \emph{closed} if, for each $\prefix \in \PrefixSet \cdot \Alphabet$, there is $\prefix' \in \PrefixSet$ satisfying $\prefix \CellSim^{\SuffixSet}_{\targetLg} \prefix'$.
An observation table is \emph{consistent} if,
for any $\prefix, \prefix' \in \PrefixSet \cup \PrefixSet \cdot \Alphabet$ and $\action \in \Alphabet$,
$\prefix \CellSim^{\SuffixSet}_{\targetLg} \prefix'$ implies $\prefix \cdot \action \CellSim^{\SuffixSet}_{\targetLg} \prefix' \cdot \action$.

Once the observation table becomes closed and consistent, the learner constructs a hypothesis DFA $\hypothesisA$ and checks if $\Lg(\hypothesisA) = \targetLg$ by an equivalence query.
If $\Lg(\hypothesisA) = \targetLg$ holds, $\hypothesisA$ is the resulting DFA.\@
Otherwise, the teacher returns\LongVersion{ a counterexample} $\cex \in \Lg(\hypothesisA) \setdiff \targetLg$, which is used to refine the observation table.
There are several variants of the refinement.
In the L* algorithm, all the prefixes of $\cex$ are added to $\PrefixSet$.
In the Rivest-Schapire algorithm~\cite{RS93,IS14}, an extension $\suffix$ strictly refining $\SuffixSet$ is obtained by an analysis of $\cex$, and such $\suffix$ is added to $\SuffixSet$.



\section{A Myhill-Nerode style characterization of recognizable timed languages with elementary languages}%
\label{section:timed_distinguishing_suffixes}


Unlike the case of regular languages,
any finite set of timed words cannot correctly distinguish recognizable timed languages due to the infiniteness of dwell time in timed words.
Instead, we use a finite set of \emph{elementary languages} to define a Nerode-style congruence.
To define the Nerode-style congruence,
we extend the notion of membership to elementary languages.

\begin{definition}
 [symbolic membership]%
 \label{def:symbolic_membership}
 For{ a timed language} $\Lg \subseteq \TimedWords$ and{ an elementary language} $\elementary \in \Elementary$,
 the \emph{symbolic membership} $\symbolicMemQ{\Lg}(\elementary)$ of $\elementary$ to $\Lg$ is the strongest constraint such that
 for any $\word \in \elementary$, we have $\word \in \Lg$ if and only if $\timeValuation{\word} \models \symbolicMemQ{\Lg}(\Lg)$.
\end{definition}

We discuss how to obtain symbolic membership in \cref{section:symbolic_membership_oracle}.
We define a Nerode-style congruence using symbolic membership.
A naive idea is to distinguish two elementary languages\LongVersion{ $\elementary$ and $\elementary[']$} by the equivalence of their symbolic membership\LongVersion{ $\symbolicMemQ{\Lg}(\elementary)$ and $\symbolicMemQ{\Lg}(\elementary['])$}.
However, this does not capture the semantics of TAs.
For example, for the DTA $\A$ in \cref{figure:timed_automaton}, for any timed word $\word$,
we have $1.3 \cdot \styleact{a} \cdot 0.4 \cdot \word \in \Lg(\A) \iff 0.3 \cdot \styleact{a} \cdot 1.0 \cdot \styleact{a} \cdot 0.4 \cdot \word \in \Lg(\A)$, while they have different symbolic membership.
This is because symbolic membership distinguishes the \emph{position} in timed words where each clock variable is reset, which must be ignored.
We use \emph{renaming equations} to abstract such positional information\LongVersion{ of the clock reset} in symbolic membership.
Note that $\sumTimestamp{i}{n} = \sum_{{k = i}}^{n} \timestamp_k$ corresponds to the value of the clock variable reset at $\timestamp_i$.

\begin{definition}
 [renaming equation]%
 \label{definition:renaming}
 \mw{2022/03/02: Let's use equation since it simplify the formulation}
 Let $\timeVariables = \{\timestampSequence\}$ and $\timeVariables' = \{\timestampSequence[']\}$ be sets of ordered variables.
 A \emph{renaming equation} $\Rename$ over $\timeVariables$ and $\timeVariables'$ is a finite conjunction of equations of the form 
 $\sumTimestamp{i}{n} = \sumTimestamp[']{i'}{n'}$,
 where $i \in \{0, 1, \dots, n\}$,
 $i' \in \{0, 1, \dots, n'\}$,
 $\sumTimestamp{i}{n} = \sum_{{k = i}}^{n} \timestamp_k$ and
 $\sumTimestamp[']{i'}{n'} = \sum_{{k = i'}}^{n'} \timestamp'_k$.
\end{definition}

\begin{definition}
 [$\CellSim^{\SuffixSet}_{\Lg}$]%
 \label{definition:equivalence_row}\mw{Perhaps, we can define with languages and show how to check it}
 Let $\Lg \subseteq \TimedWords$ be a timed language, let $\elementary, \elementary['], \elementary[''] \in \Elementary$ be elementary languages, and
 let $\Rename$ be a renaming equation over $\timeVariables$ and $\timeVariables'$, where $\timeVariables$ and $\timeVariables'$ are the variables of $\timedCondition$ and $\timedCondition'$, respectively.
 We let $\elementary \CellSub^{\elementary[''], \Rename}_{\Lg} \elementary[']$ if we have the following:
   for any $\word \in \elementary$, there is $\word' \in \elementary[']$ satisfying $\timeValuation{\word}, \timeValuation{\word'}\models \Rename$;
   $\rename{\symbolicMemQ{\Lg}(\elementary \cdot \elementary[''])}{\Rename} \land \timedCondition'$ is equivalent to $\symbolicMemQ{\Lg}(\elementary['] \cdot \elementary['']) \land \Rename \land \timedCondition$.
 We let $\elementary \CellSim^{\elementary[''], \Rename}_{\Lg} \elementary[']$ if we have
   $\elementary \CellSub^{\elementary[''], \Rename}_{\Lg} \elementary[']$ and
   $\elementary['] \CellSub^{\elementary[''], \Rename}_{\Lg} \elementary$.
 Let $\SuffixSet \subseteq \Elementary$.
 We let $\elementary \CellSim^{\SuffixSet, \Rename}_{\Lg} \elementary[']$
 if for any $\elementary[''] \in \SuffixSet$,
 we have $\elementary \CellSim^{\elementary[''], \Rename}_{\Lg} \elementary[']$.
 We let $\elementary \CellSim^{\SuffixSet}_{\Lg} \elementary[']$
 if $\elementary \CellSim^{\SuffixSet, \Rename}_{\Lg} \elementary[']$ for some renaming equation $\Rename$.
\end{definition}

\begin{example}
 Let $\A$ be the DTA in \cref{figure:timed_automaton} and let $\elementary$, $\elementary[']$, and $\elementary['']$ be elementary languages, where
 $u = \styleact{a}$, $\timedCondition = \{\timestamp_{0} \in (1,2) \land \timestamp_0 + \timestamp_{1} \in (1,2) \land \timestamp_{1} \in (0,1)\}$,
 $u' = \styleact{a}\cdot \styleact{a}$, $\timedCondition' = \{\timestamp'_{0} \in (0,1) \land \timestamp'_{0} + \timestamp'_{1} \in (1,2) \land \timestamp'_{1} + \timestamp'_{2} \in (1,2) \land \timestamp'_{2} \in (0,1)\}$,
 $u'' = \styleact{a}$, and $\timedCondition'' = \{\timestamp_{0} \in (0,1) \land \timestamp_{1} = 0\}$.
 We have
   $\symbolicMemQ{\Lg(\A)}(\elementary \cdot \elementary['']) = \timedCondition \land \timedCondition'' \land \timestamp_1 + \timestamp''_0 \leq 1$ and
   $\symbolicMemQ{\Lg(\A)}(\elementary['] \cdot \elementary['']) = \timedCondition' \land \timedCondition'' \land \timestamp'_2 + \timestamp''_0 \leq 1$.
 Therefore, for the renaming equation $\sumTimestamp{1}{1} = \sumTimestamp[']{2}{2}$,
 we have $\elementary \CellSim^{\elementary[''], \sumTimestamp{1}{1} = \sumTimestamp[']{2}{2}}_{\Lg} \elementary[']$.
\end{example}
An algorithm to check if $\elementary \CellSim^{\SuffixSet}_{\Lg} \elementary[']$ is shown in \cref{appendix:row_equivalence}.

Intuitively, $\elementary \CellSub^{\suffix, \Rename}_{\Lg} \elementary[']$ shows that any $\word \in \elementary$ can be ``simulated'' by some $\word' \in \elementary[']$ with respect to $\suffix$ and $\Rename$. 
\mw{Probably $\elementary \CellSub^{\suffix, \Rename}_{\Lg} \elementary[']$ actually forms a simulation relation with some generalization in elementary languages. It might be interesting to investigate something on that direction}
Such intuition is formalized as follows.
\newcommand{\adequacyEquivalenceStatement}{%
 For any $\Lg \subseteq \TimedWords$ and $\elementary, \elementary['], \elementary[''] \in \Elementary$ satisfying
 $\elementary \CellSub^{\elementary['']}_{\Lg} \elementary[']$,
  for any $\word \in \elementary$, there is $\word' \in \elementary[']$ such that for any $\word'' \in \elementary['']$, $\word \cdot \word'' \in \Lg \iff \word' \cdot \word'' \in \Lg$ holds.
}
\begin{theorem}%
 \label{theorem:adequacy_equivalence}
 \adequacyEquivalenceStatement{}
 \qed{}
\end{theorem}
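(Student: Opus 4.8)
The plan is to unfold the definition of $\elementary \CellSub^{\elementary['']}_{\Lg} \elementary[']$ --- which, in analogy with $\CellSim^{\SuffixSet}_{\Lg}$, should be read as ``$\elementary \CellSub^{\elementary[''], \Rename}_{\Lg} \elementary[']$ holds for some renaming equation $\Rename$'' --- and then to reduce the required membership equivalence to the equivalence of the two constraints in the second clause of that definition, via a single chain of ``iff''s evaluated at one fixed valuation. Concretely, I would fix a witnessing renaming equation $\Rename$ and write $\timeVariables, \timeVariables', \timeVariables''$ for the (pairwise disjoint, after renaming) variable sets of the timed conditions $\timedCondition, \timedCondition', \timedCondition''$ of $\elementary, \elementary['], \elementary['']$. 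Given $\word \in \elementary$, the first clause of the definition of $\CellSub$ supplies some $\word' \in \elementary[']$ with $\timeValuation{\word}, \timeValuation{\word'} \models \Rename$; this $\word'$ is the witness required by the statement. Then, for an arbitrary $\word'' \in \elementary['']$, let $\bar\cval$ be the valuation of $\timeVariables \cup \timeVariables' \cup \timeVariables''$ obtained by gluing $\timeValuation{\word}$, $\timeValuation{\word'}$ and $\timeValuation{\word''}$; by construction $\bar\cval \models \Rename$, while $\bar\cval \models \timedCondition$ since $\word \in \elementary$ and $\bar\cval \models \timedCondition'$ since $\word' \in \elementary[']$.

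The heart of the argument is the equivalence chain
\begin{align*}
 \word \cdot \word'' \in \Lg
 &\iff \bar\cval \models \symbolicMemQ{\Lg}(\elementary \cdot \elementary['']) \\
 &\iff \bar\cval \models \rename{\symbolicMemQ{\Lg}(\elementary \cdot \elementary[''])}{\Rename} \land \timedCondition' \\
 &\iff \bar\cval \models \symbolicMemQ{\Lg}(\elementary['] \cdot \elementary['']) \land \Rename \land \timedCondition \\
 &\iff \bar\cval \models \symbolicMemQ{\Lg}(\elementary['] \cdot \elementary['']) \\
 &\iff \word' \cdot \word'' \in \Lg .
\end{align*}
The first equivalence is the defining property of symbolic membership applied to the product $\elementary \cdot \elementary['']$, using that $\symbolicMemQ{\Lg}(\elementary \cdot \elementary[''])$ constrains only the variables $\timeVariables \cup \timeVariables''$, so evaluating it at $\bar\cval$ or at the restriction $(\timeValuation{\word}, \timeValuation{\word''})$ gives the same truth value; the last equivalence is the symmetric statement for $\elementary['] \cdot \elementary['']$ and the restriction $(\timeValuation{\word'}, \timeValuation{\word''})$. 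The second and fourth equivalences merely add, then drop, the conjuncts $\Rename$ and $\timedCondition'$ (second step) and $\Rename$ and $\timedCondition$ (fourth step), all of which hold at $\bar\cval$. The middle equivalence is exactly the second clause of $\elementary \CellSub^{\elementary[''], \Rename}_{\Lg} \elementary[']$, namely that $\rename{\symbolicMemQ{\Lg}(\elementary \cdot \elementary[''])}{\Rename} \land \timedCondition'$ and $\symbolicMemQ{\Lg}(\elementary['] \cdot \elementary['']) \land \Rename \land \timedCondition$ are equivalent constraints. Since $\word''$ was arbitrary, this establishes the claim for the chosen $\word'$, which is what the theorem asserts.

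I expect the only delicate point to be fixing the intended semantics of the symbolic membership of a concatenation: $\symbolicMemQ{\Lg}(\elementary \cdot \elementary['])$ has to be understood as a constraint over the \emph{disjoint union} $\timeVariables \cup \timeVariables''$ of the two variable sets --- keeping the junction variables separate, as in the worked example following \cref{definition:equivalence_row} --- characterised by $\word \cdot \word'' \in \Lg \iff (\timeValuation{\word}, \timeValuation{\word''}) \models \symbolicMemQ{\Lg}(\elementary \cdot \elementary[''])$ for all $\word \in \elementary$ and $\word'' \in \elementary['']$ (which is consistent since the membership of $\word \cdot \word''$ in $\Lg$ does not depend on the decomposition). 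Once this convention is pinned down, every step above is routine propositional reasoning over the single valuation $\bar\cval$, with no case analysis on the structure of $\elementary$, $\elementary[']$, or $\elementary['']$.
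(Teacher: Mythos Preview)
Your proposal is correct and is essentially the same argument as the paper's. The paper factors one direction through an auxiliary lemma (\cref{lemma:membership_renaming}) showing that $\rename{\symbolicMemQ{\Lg}(\elementary \cdot \elementary[''])}{\Rename} \land \timedCondition' \Rightarrow \symbolicMemQ{\Lg}(\elementary['] \cdot \elementary[''])$ together with $\timeValuation{\word}, \timeValuation{\word'} \models \Rename$ transfers membership, and then invokes the equivalence in the second clause of $\CellSub$ to get both implications; your single chain of iff's at the combined valuation $\bar\cval$ does exactly the same work without the detour, and your identification of the intended reading of $\symbolicMemQ{\Lg}(\elementary \cdot \elementary[''])$ over the disjoint variable sets is precisely the convention the paper uses.
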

\begin{LongVersionBlock}
 \begin{proof}
 [sketch]
 Let $\Rename$ be a renaming equation satisfying $\elementary \CellSub^{\SuffixSet, \Rename}_{\Lg} \elementary[']$.
 By the definition of $\elementary \CellSub^{\SuffixSet, \Rename}_{\Lg} \elementary[']$,
 for any $\elementary[''] \in \SuffixSet$,
 $\rename{\symbolicMemQ{\Lg}(\elementary \cdot \elementary[''])}{\Rename} \land \timedCondition'$ holds if and only if $\symbolicMemQ{\Lg}(\elementary['] \cdot \elementary['']) \land \Rename \land \timedCondition$.
 Therefore, any suffix $\elementary[''] \in \SuffixSet$ cannot distinguish $\word \in \elementary$ and $\word' \in \elementary[']$
 if 
 we have $\timeValuation{\word}, \timeValuation{\word'} \models \Rename$.
 Since $\elementary \CellSub^{\SuffixSet, \Rename}_{\Lg} \elementary[']$,
 For any $\word \in \elementary$, there is $\word' \in \elementary[']$ satisfying $\timeValuation{\word}, \timeValuation{\word'} \models \Rename$.
 Overall, for any $\word \in \elementary$, there is $\word' \in \elementary[']$ such that for any $\elementary[''] \in \SuffixSet$ and for any $\word'' \in \elementary['']$,
 we have $\word \cdot \word'' \in \Lg \iff \word' \cdot \word'' \in \Lg$.
 \qed{}
\end{proof}
\end{LongVersionBlock}

\mw{Perhaps this corollary is not very important}
By $\bigcup_{\elementary \in \Elementary} \elementary = \TimedWords$, we have the following as a corollary.

\begin{corollary}%
 \label{corollary:adequacy_equivalence_limit}
 For any timed language $\Lg \subseteq \TimedWords$ and for any elementary languages $\elementary, \elementary['] \in \Elementary$,
 $\elementary \CellSim^{\Elementary}_{\Lg} \elementary[']$ implies the following.
 \begin{itemize}
  \item For any $\word \in \elementary$, there is $\word' \in \elementary[']$ such that for any $\word'' \in \TimedWords$, we have $\word \cdot \word'' \in \Lg \iff \word' \cdot \word'' \in \Lg$.
  \item For any $\word' \in \elementary[']$, there is $\word \in \elementary$ such that for any $\word'' \in \TimedWords$, we have $\word \cdot \word'' \in \Lg \iff \word' \cdot \word'' \in \Lg$.
        \qed{}
 \end{itemize}
\end{corollary}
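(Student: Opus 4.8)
The plan is to derive \cref{corollary:adequacy_equivalence_limit} from \cref{theorem:adequacy_equivalence} --- or rather from the \emph{relative} statement that its proof actually establishes --- using two facts: a single renaming equation witnesses equivalence against \emph{all} suffixes in $\Elementary$ simultaneously, and every timed word belongs to some elementary language.

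First I would record the relative form of \cref{theorem:adequacy_equivalence}. Inspecting its proof, what is shown is: if $\elementary \CellSub^{\elementary[''], \Rename}_{\Lg} \elementary[']$, then for \emph{any} $\word \in \elementary$ and \emph{any} $\word' \in \elementary[']$ with $\timeValuation{\word}, \timeValuation{\word'} \models \Rename$, and for any $\word'' \in \elementary['']$, we have $\word \cdot \word'' \in \Lg \iff \word' \cdot \word'' \in \Lg$ (the existential witness appearing in the theorem statement is just one particular such $\word'$). The key point is that the admissible pairs $(\word, \word')$ are constrained only by $\Rename$, not by $\elementary['']$.

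Next I would unfold the hypothesis: by \cref{definition:equivalence_row}, $\elementary \CellSim^{\Elementary}_{\Lg} \elementary[']$ gives a single renaming equation $\Rename$ with $\elementary \CellSim^{\Elementary, \Rename}_{\Lg} \elementary[']$, hence $\elementary \CellSub^{\elementary[''], \Rename}_{\Lg} \elementary[']$ and $\elementary['] \CellSub^{\elementary[''], \Rename}_{\Lg} \elementary$ for \emph{every} $\elementary[''] \in \Elementary$. For the first item, fix $\word \in \elementary$; since the first clause of $\elementary \CellSub^{\elementary[''], \Rename}_{\Lg} \elementary[']$ does not depend on $\elementary['']$, I can pick $\word' \in \elementary[']$ with $\timeValuation{\word}, \timeValuation{\word'} \models \Rename$ once and for all. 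Given any $\word'' \in \TimedWords$, the identity $\bigcup_{\elementary \in \Elementary} \elementary = \TimedWords$ provides $\elementary[''] \in \Elementary$ with $\word'' \in \elementary['']$; applying the relative form above to $\elementary \CellSub^{\elementary[''], \Rename}_{\Lg} \elementary[']$, $\word$, $\word'$, $\word''$ yields $\word \cdot \word'' \in \Lg \iff \word' \cdot \word'' \in \Lg$. Since $\word''$ was arbitrary, the first item holds. The second item is symmetric, using $\elementary['] \CellSub^{\elementary[''], \Rename}_{\Lg} \elementary$ and interchanging the roles of $\elementary$ and $\elementary[']$.

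The one point requiring attention is this uniformity: the theorem statement produces, for each fixed $\elementary['']$, \emph{some} witness, while the corollary needs a \emph{single} witness good for all $\word'' \in \TimedWords$. I expect this to be the only real subtlety, and it is resolved by committing up front to one renaming equation $\Rename$ coming from $\elementary \CellSim^{\Elementary, \Rename}_{\Lg} \elementary[']$ and choosing the witness from $\Rename$ alone; after that the result is an immediate combination of the relative form of \cref{theorem:adequacy_equivalence} with $\bigcup_{\elementary \in \Elementary} \elementary = \TimedWords$.
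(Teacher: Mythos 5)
Your proof is correct and matches the paper's (essentially unstated) reasoning: the paper only offers the one-line remark that $\bigcup_{\elementary \in \Elementary} \elementary = \TimedWords$, and your argument supplies exactly the missing uniformity step — a single renaming equation $\Rename$ witnessing $\elementary \CellSim^{\Elementary, \Rename}_{\Lg} \elementary[']$ determines the witness $\word'$ independently of the suffix, so the same $\word'$ works for every $\word'' \in \TimedWords$ once you pick an $\elementary[''] \in \Elementary$ containing $\word''$.
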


The following\LongVersion{ theorem} characterizes recognizable timed languages with $\CellSim^{\Elementary}_{\Lg}$.
\begin{theorem}
 [Myhill-Nerode style characterization]%
 \label{theorem:finiteness}
 A timed language $\Lg$ is recognizable if and only if the quotient set $\SimpleElementary / {\CellSim^{\Elementary}_{\Lg}}$ is finite.
 \qed{}
\end{theorem}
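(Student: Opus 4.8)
The plan is to prove the two implications separately: the forward one using a known DTA recognizing $\Lg$, and the backward one by explicitly exhibiting a chronometric relational morphism.

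\emph{From recognizability to finiteness.} Fix a DTA $\TA$ with location set $\Loc$ and clock set $\Clock$ recognizing $\Lg$, and let $\maxConstant$ be the largest constant occurring in $\TA$. I would attach to every simple elementary language $\prefix=\elementary$ a finite amount of ``configuration'' data of $\TA$ and show that this data pins down the $\CellSim^{\Elementary}_{\Lg}$-class of $\prefix$. The key is that a simple, canonical timed condition forces every sum $\sumTimestamp{i}{j}$ into a single point or open unit interval, so all timed words of $\prefix$ are region-equivalent in a strong sense; hence the deterministic $\TA$ runs all of them along one and the same edge sequence, they all reach a single location $\loc_\prefix$, each clock $\clock$ is last reset at the same step, and after reading any $\word\in\prefix$ the value of $\clock$ equals a fixed sum $\sumTimestamp{i_\clock}{|u|}$ of the time variables of $\prefix$ (up to a fixed integer offset if $\clock$ was reset on a forced silent edge). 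Moreover the set of clock valuations reached after $\prefix$ is exactly one region $r_\prefix$ in the sense of~\cite{AD94}. Since $(\loc_\prefix,r_\prefix)$ ranges over a finite set, it remains to check that $\prefix,\prefix'$ with equal configuration satisfy $\prefix\CellSim^{\Elementary}_{\Lg}\prefix'$. For the renaming equation I would equate, for each clock whose value stays $\le\maxConstant$, the sum $\sumTimestamp{i_\clock}{|u|}$ coming from $\prefix$ with the corresponding sum from $\prefix'$; since both languages sweep out all of $r_\prefix$, this equation is jointly satisfiable with $\timedCondition$ and $\timedCondition[']$ and witnesses the first clause of \cref{definition:equivalence_row}, while the second clause follows from determinism — under the renaming a word of $\prefix$ and its partner in $\prefix'$ reach states that agree on every clock $\TA$ can still compare to a constant, so they are indistinguishable by any suffix, and symbolic membership only tests clocks against constants $\le\maxConstant$.

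\emph{From finiteness to recognizability.} Assume $\SimpleElementary/{\CellSim^{\Elementary}_{\Lg}}$ is finite. I would first isolate a lemma that this hypothesis forces: every simple elementary language is wholly inside or wholly outside $\Lg$ (a simple elementary language is ``pinned'' in a unit cube, so a non-trivial symbolic membership there would be a non-trivial integer-constant constraint, which is impossible; making this precise and deriving it cleanly from finiteness is part of the work). Then pick one simple elementary language per $\CellSim^{\Elementary}_{\Lg}$-class and let $\PrefixSet$ be the (finite, hence chronometric and prefix-closed) union of all of their prefixes; $\PrefixSet$ still meets every class. Let $\RecognizableFinal\subseteq\PrefixSet$ be the union of the pieces of $\PrefixSet$ contained in $\Lg$, which is well defined by the lemma. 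To build $\sem{\CRM}$ as in \cref{def:chronometric_relational_morphism}, decompose $\exterior{\PrefixSet}\setminus\PrefixSet$ into disjoint simple elementary languages; for each such $\elementary$ pick a representative $\elementary['] \subseteq \PrefixSet$ of its class and a renaming equation $\Rename$ witnessing $\elementary\CellSim^{\Elementary}_{\Lg}\elementary[']$, and take $\CRMTuple$ as a defining tuple of $\CRM$. Compatibility with $\RecognizableFinal$ is immediate from the lemma. Finally I would prove $\Lg=\{\word\mid\exists\word'\in\RecognizableFinal,\ (\word,\word')\in\sem{\CRM}\}$ by induction on the derivation of $\sem{\CRM}$: whenever $(\word,\word')\in\sem{\CRM}$ one gets $\word\in\Lg\iff\word'\in\Lg$, because at each step the part of the word lying beyond the current $\exterior{\PrefixSet}$-element sits inside some elementary language and \cref{theorem:adequacy_equivalence} — more precisely the stronger fact, visible from its proof, that words related by the chosen renaming equation are indistinguishable by every suffix — transfers membership, with the all-in-or-out lemma absorbing the ambiguity of which word of $\elementary[']$ is actually reached.

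\emph{Main obstacle.} The delicate part is the forward direction, namely the bookkeeping linking a run of the DTA on a simple elementary language to the time-variable sums of that language: showing that all words of the language follow the same edge sequence, that the reached valuations form exactly one region, and handling clocks that exceed $\maxConstant$ or are reset on silent edges (their values carry an integer offset that a renaming equation cannot express, so one must argue that such clocks are irrelevant both to symbolic membership and to the word-matching). In the converse direction the only non-routine ingredient is the lemma that finiteness of the quotient makes every simple elementary language all-in or all-out of $\Lg$; granting it, translating the Nerode-style data into the morphism-based definition of recognizability is straightforward, with \cref{theorem:adequacy_equivalence} doing the semantic work.
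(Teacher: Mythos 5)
Your two-step plan is precisely the route the paper takes via Lemmas \ref{lemma:timed_recognizable_to_finite} and \ref{lemma:finite_to_timed_recognizable}: forward through the region automaton of a fixed DTA, with the renaming equation matching, for each bounded clock, the time-variable sums realizing its value; backward by choosing class representatives, closing under prefixes, and assembling a chronometric relational morphism out of $\CellSim^{\Elementary}_{\Lg}$-witnessing renamings. Three points deserve attention. First, in the backward direction the paper's $\tilde{\PrefixSet}$ is not just the prefix closure of the representatives: it additionally adds, for every simple prefix whose timed condition contains no equalities, the continuous time-successor with shortest dwell time. You omit this, but it is needed — the immediate time-exterior of such an ``open'' simple elementary language is empty (see the discussion at the start of \cref{section:fractional_elementary}), so without the augmentation $\exterior{\tilde{\PrefixSet}}\setminus\tilde{\PrefixSet}$ cannot be partitioned as \cref{def:chronometric_relational_morphism} demands and the inductively defined $\sem{\CRM}$ will not reach every timed word. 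Second, your ``all-in-or-out'' observation is correct but is not a consequence of finiteness; it is a consequence of the constraint language underlying symbolic membership (on a simple elementary language every sum $\sumTimestamp{i}{j}$ is pinned to a point or to one open unit interval, so every timed-condition constraint is constant there), and the paper simply assumes it — compare the remark that a DTA cannot split a simple elementary language. Third, your worry about clocks reset on silent edges is legitimate, and the paper's sketch does not address it either: once a silent reset introduces an integer offset, it is no longer obvious that the clock's value equals a bare sum, and neither the claim that all words of a simple elementary language land in one region nor the specific form of the renaming equation is visibly correct. A clean fix is to eliminate silent edges up front (possible here since determinism forces their guard-and-invariant to be a singleton) rather than, as you suggest, arguing such clocks are irrelevant — they can still be compared against constants, so they are not.
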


By \cref{theorem:finiteness}, we always have a finite set $\SuffixSet$ of distinguishing extensions.

\newcommand{\FiniteSuffix}{%
 For any recognizable timed language $\Lg$, there is a finite set $\SuffixSet$ of elementary languages satisfying ${\CellSim^{\Elementary}_{\Lg}} = {\CellSim^{\SuffixSet}_{\Lg}}$.
}
\begin{theorem}%
 \label{theorem:finite_suffix}
 \FiniteSuffix{}
 \qed{}
\end{theorem}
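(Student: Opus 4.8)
The plan is to establish the equality ${\CellSim^{\Elementary}_{\Lg}} = {\CellSim^{\SuffixSet}_{\Lg}}$ by proving the two inclusions separately, reusing the classical construction of distinguishing extensions for Nerode's congruence but carrying the renaming equations along. The inclusion ${\CellSim^{\Elementary}_{\Lg}} \subseteq {\CellSim^{\SuffixSet}_{\Lg}}$ is free for \emph{every} $\SuffixSet \subseteq \Elementary$: a single renaming equation $\Rename$ witnessing $\elementary \CellSim^{\Elementary, \Rename}_{\Lg} \elementary[']$ a fortiori witnesses $\elementary \CellSim^{\SuffixSet, \Rename}_{\Lg} \elementary[']$. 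So all the work is to exhibit a \emph{finite} $\SuffixSet$ for which ${\CellSim^{\SuffixSet}_{\Lg}} \subseteq {\CellSim^{\Elementary}_{\Lg}}$, i.e.\ a finite family of elementary languages that ``sees'' every inequivalence that the full family $\Elementary$ sees.

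To construct $\SuffixSet$, I would first invoke \cref{theorem:finiteness}: since $\Lg$ is recognizable, $\SimpleElementary / {\CellSim^{\Elementary}_{\Lg}}$ has finitely many classes $C_1, \dots, C_k$, and I fix a representative $r_i \in C_i \subseteq \SimpleElementary$ for each $i$. The key finiteness remark is that, for a \emph{fixed} pair $r_i, r_j$, there are only finitely many renaming equations over their two variable sets: each such equation is a conjunction of equalities of the form $\sumTimestamp{a}{m} = \sumTimestamp[']{b}{m'}$ chosen from a finite stock of candidate equalities (the indices $a, b$ range over $\{0,\dots,m\}$ and $\{0,\dots,m'\}$, with $m, m'$ determined by $r_i, r_j$). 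Now for every ordered pair $i \ne j$ and every renaming equation $\Rename$ over the variables of $r_i$ and $r_j$, the hypothesis $r_i \not\CellSim^{\Elementary}_{\Lg} r_j$ unfolds (negating the definition of $\CellSim^{\Elementary}_{\Lg}$) to: there is some $\suffix_{i,j,\Rename} \in \Elementary$ with $r_i \not\CellSim^{\suffix_{i,j,\Rename}, \Rename}_{\Lg} r_j$. I let $\SuffixSet$ be the collection of all these witnesses over all pairs and all (finitely many) renaming equations; it is a finite union of finite sets, hence finite.

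For the reverse inclusion I would argue by contraposition. Take $\elementary, \elementary['] \in \SimpleElementary$ with $\elementary \not\CellSim^{\Elementary}_{\Lg} \elementary[']$; then $\elementary \in C_i$ and $\elementary['] \in C_j$ for some $i \ne j$. Fix renaming equations $\Rename_1$ witnessing $r_i \CellSim^{\Elementary}_{\Lg} \elementary$ and $\Rename_2$ witnessing $\elementary['] \CellSim^{\Elementary}_{\Lg} r_j$ (each valid for \emph{all} suffixes in $\Elementary$). Suppose, for contradiction, that $\elementary \CellSim^{\SuffixSet, \Rename_0}_{\Lg} \elementary[']$ for some $\Rename_0$. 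Composing, the renaming equation $\Rename^{*} = \Rename_1 \bullet \Rename_0 \bullet \Rename_2$ (relational composition of the three) is a renaming equation over the variables of $r_i$ and $r_j$, and one gets $r_i \CellSim^{\suffix, \Rename^{*}}_{\Lg} r_j$ for every $\suffix$ in the intersection of the three domains, i.e.\ for every $\suffix \in \SuffixSet$, so $r_i \CellSim^{\SuffixSet, \Rename^{*}}_{\Lg} r_j$. But by construction $\suffix_{i,j,\Rename^{*}} \in \SuffixSet$ and $r_i \not\CellSim^{\suffix_{i,j,\Rename^{*}}, \Rename^{*}}_{\Lg} r_j$, a contradiction. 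Hence $\elementary \not\CellSim^{\SuffixSet}_{\Lg} \elementary[']$, which is what we needed.

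\textbf{Main obstacle.} The delicate step is the composition used in the contraposition: it rests on a lemma stating that relational composition of renaming equations turns the cell-equivalences $\CellSim^{\suffix, \Rename_1}_{\Lg}$ and $\CellSim^{\suffix, \Rename_2}_{\Lg}$ into $\CellSim^{\suffix, \Rename_1 \bullet \Rename_2}_{\Lg}$. Proving this amounts to unwinding \cref{def:symbolic_membership} and the renaming substitution $\rename{\cdot}{\cdot}$ of \cref{definition:equivalence_row} and checking that the resulting constraints still coincide; it is precisely the computation that underlies $\CellSim^{\Elementary}_{\Lg}$ being an equivalence relation in \cref{theorem:finiteness}, so I would either cite it from there or isolate it as a standalone lemma. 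A secondary bookkeeping point is the domain: everything above is phrased over $\SimpleElementary$ to match \cref{theorem:finiteness}, and lifting the identity of the two relations to all of $\Elementary$ uses in addition that every elementary language is $\CellSim^{\Elementary}_{\Lg}$-equivalent to a simple one.
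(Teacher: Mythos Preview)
Your route is genuinely different from the paper's. The paper argues by \emph{iterative refinement}: starting from $\SuffixSet_0=\emptyset$, whenever $|\SimpleElementary/{\CellSim^{\SuffixSet_i}_{\Lg}}|<|\SimpleElementary/{\CellSim^{\Elementary}_{\Lg}}|$ it picks \emph{one} currently undistinguished pair $\elementary,\elementary'$, adds a distinguishing suffix for \emph{every} renaming equation over their variable sets (finitely many), and observes that the quotient size strictly increases; termination follows from the finite bound of \cref{theorem:finiteness}. Your approach is a one-pass \emph{direct construction}: fix representatives $r_1,\dots,r_k$ once and for all, collect a witness $\suffix_{i,j,\Rename}$ for every pair and every renaming equation, and then transport any counterexample $\elementary\not\CellSim^{\Elementary}_{\Lg}\elementary'$ back to representatives. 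Your version yields an explicit bound on $|\SuffixSet|$ in terms of $k$ and the representatives; the paper's version avoids choosing representatives and the transport step altogether.

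On your ``main obstacle'': you actually need \emph{less} than the explicit composition $\Rename^{*}=\Rename_1\bullet\Rename_0\bullet\Rename_2$. From $r_i\CellSim^{\Elementary}_{\Lg}\elementary$, $\elementary\CellSim^{\SuffixSet}_{\Lg}\elementary'$, $\elementary'\CellSim^{\Elementary}_{\Lg}r_j$ and $\SuffixSet\subseteq\Elementary$, transitivity of $\CellSim^{\SuffixSet}_{\Lg}$ alone gives $r_i\CellSim^{\SuffixSet}_{\Lg}r_j$, i.e.\ \emph{some} renaming equation $\Rename_3$ over the variables of $r_i,r_j$ witnesses it. Since your $\SuffixSet$ contains $\suffix_{i,j,\Rename_3}$ for \emph{every} such $\Rename_3$, the contradiction goes through without ever computing the composite explicitly. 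Note that the paper's argument relies on the very same transitivity implicitly: the strict growth of $|\SimpleElementary/{\CellSim^{\SuffixSet}_{\Lg}}|$ only makes sense if $\CellSim^{\SuffixSet}_{\Lg}$ is an equivalence relation, and the paper never isolates this as a lemma either. So your obstacle is shared, just surfaced more visibly by your approach. Your secondary point about lifting from $\SimpleElementary$ to all of $\Elementary$ applies equally to the paper's proof, which also works entirely over $\SimpleElementary$.
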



\section{Active learning of deterministic timed automata}\label{section:active_learning}

We show our L*-style active learning algorithm for DTAs with the Nerode-style congruence in \cref{section:timed_distinguishing_suffixes}.
We let $\targetLg\LongVersion{ \subseteq \TimedWords}$ be the target timed language in learning.

For simplicity, we first present our learning algorithm with a smart teacher answering the following three kinds of queries:
membership query $\memQ[\targetLg]{\word}$ asking whether $\word \in \targetLg$,
symbolic membership query asking $\symbolicMemQ{\targetLg}({\elementary})$, and
equivalence query $\eqQ[\targetLg]{\hypothesisA}$ asking whether $\Lg(\hypothesisA) = \targetLg$.
If $\Lg(\hypothesisA) = \targetLg$, $\eqQ[\targetLg]{\hypothesisA} = \top$, and otherwise,
$\eqQ[\targetLg]{\hypothesisA}$ is a timed word $\cex \in \Lg(\hypothesisA) \setdiff \targetLg$.
Later in \cref{section:symbolic_membership_oracle}, we show how to answer a symbolic membership query with finitely many membership queries.
Our task is to construct a DTA $\A$ satisfying $\Lg(\A) = \targetLg$ with finitely many queries.




\subsection{Successors of simple elementary languages}%
\label{section:fractional_elementary}
Similarly to the L* algorithm in \cref{subsection:L*}, we learn a DTA with an observation table.
Reflecting the extension of the underlying congruence,
we use sets of simple elementary languages for the indices.
To define the extensions, $\PrefixSet \cup (\PrefixSet \cdot \Alphabet)$ in the L* algorithm,
we introduce \emph{continuous} and \emph{discrete} \emph{successors} for simple elementary languages, 
which are inspired by \emph{regions}~\cite{AD94}.
We note that immediate exteriors do not work for this purpose.
\begin{LongVersionBlock}

 \begin{example}
 Let $\elementary = (\styleact{\emptyword}, 0 < \timestamp_0 \land \timestamp_0 < 1)$.
 Since we have $\exterior[t]{\elementary} = (\styleact{\emptyword}, 0 < \timestamp_0 \land \timestamp_0 = 1)$,
 $\exterior[t]{\elementary}$ is empty.
 Let $\elementary['] = (\styleact{a}, \sumTimestamp[']{0}{1} < 2 \land \sumTimestamp[']{1}{1} < 1)$ and $\word = 0.7 \cdot \styleact{a} \cdot 0.9$.
 We have $\word \in \elementary[']$.
 Since we have $\exterior[t]{\elementary[']} = (\styleact{a}, \sumTimestamp[']{0}{1} = 2 \land \sumTimestamp[']{1}{1} < 1)$,
 there is no $t > 0$ satisfying $\word \cdot t \in \exterior[t]{\elementary[']}$.
 \end{example}
\end{LongVersionBlock}
\begin{ShortVersionBlock}
%
 For example,
 for $\elementary = (\styleact{a}, \{\sumTimestamp{0}{1} < 2 \land \sumTimestamp{1}{1} < 1\})$ and $\word = 0.7 \cdot \styleact{a} \cdot 0.9$,
 we have $\word \in \elementary$ and $\exterior[t]{\elementary} = (\styleact{a}, \{\sumTimestamp{0}{1} = 2 \land \sumTimestamp{1}{1} < 1\})$, but
 there is no $t > 0$ satisfying $\word \cdot t \in \exterior[t]{\elementary}$.
\end{ShortVersionBlock}



For any\LongVersion{ simple elementary language} $\elementary \in \SimpleElementary$, we let $\fractionalCondition_{\elementary}$ be the total order over $0$ and the fractional parts $\fractionalSequence$ of $\sumTimestampSequence$. Such an order is uniquely defined because $\timedCondition$ is simple and canonical (\cref{proposition:order_uniqueness}).


\begin{definition}
 [successor]
 Let $\prefix = \elementary \in \SimpleElementary$ be a simple elementary language.
 The \emph{discrete successor} $\successor[\action]{\prefix}$ of $\prefix$ is
 $\successor[\action]{\prefix} = (u \cdot \action, \timedCondition \land \timestamp_{n+1} = 0)$.
 The \emph{continuous successor} $\successor[t]{\prefix}$ of $\prefix$ is $\successor[t]{\prefix} = (u, \timedCondition^t)$, where
 $\timedCondition^{t}$
 is defined as follows:
  if there is an equation $\sumTimestamp{i}{n} = \dConstant$ in $\timedCondition$,
        all such equations are replaced with
        $\sumTimestamp{i}{n} \in (\dConstant, \dConstant + 1)$;
  otherwise,
        for each greatest $\sumTimestamp{i}{n}$ in terms of $\fractionalCondition_{\elementary}$, 
        we replace $\sumTimestamp{i}{n} \in (\dConstant, \dConstant + 1)$ with $\sumTimestamp{i}{n} = \dConstant+1$.
 We let $\successor{\prefix} = \bigcup_{\action \in \Alphabet}\successor[\action]{\prefix} \cup \successor[t]{\prefix}$.
 For\LongVersion{ a set} $\PrefixSet \subseteq \SimpleElementary$\LongVersion{ of simple elementary languages},
 we let $\successor{\PrefixSet} = \bigcup_{\prefix \in \PrefixSet} \successor{\prefix}$.
\end{definition}

\begin{LongVersionBlock}
 One can easily check that both discrete and continuous successors of a simple elementary language are also simple.
\end{LongVersionBlock}

\begin{example}
 Let $u = \styleact{a} \styleact{a}$,
 $\timedCondition = \{\sumTimestamp{0}{0} \in (1,2) \land \sumTimestamp{0}{1} \in (1,2) \land \sumTimestamp{0}{2} \in (1,2) \land \sumTimestamp{1}{1} \in (0,1) \land \sumTimestamp{1}{2} \in (0,1) \land \sumTimestamp{2}{2} = 0\}$.
 The order $\fractionalCondition_{\elementary}$ is such that $0 = \fractional(\sumTimestamp{2}{2}) < \fractional(\sumTimestamp{1}{2}) < \fractional(\sumTimestamp{0}{2})$.
 The continuous successor of $\elementary$ is
 $\successor[t]{\elementary} = (u, \timedCondition^{t})$, where
 $\timedCondition^{t} = \{\sumTimestamp{0}{0} \in (1,2) \land \sumTimestamp{0}{1} \in (1,2) \land \sumTimestamp{0}{2} \in (1,2) \land \sumTimestamp{1}{1} \in (0,1) \land \sumTimestamp{1}{2} \in (0,1) \land \sumTimestamp{2}{2} \in (0,1)\}$.
 The continuous successor of $(u, \timedCondition^{t})$ is
 $\successor[t]{(u, \timedCondition^{t})} = (u, \timedCondition^{\mathit{tt}})$, where
 $\timedCondition^{\mathit{tt}} = \{\sumTimestamp{0}{0} \in (1,2) \land \sumTimestamp{0}{1} \in (1,2) \land \sumTimestamp{0}{2} = 2 \land \sumTimestamp{1}{1} \in (0,1) \land \sumTimestamp{1}{2} \in (0,1) \land \sumTimestamp{2}{2} \in (0,1)\}$.
\end{example}

\subsection{Timed observation table for active DTA learning}\label{subsection:observation_table}

We extend the observation table with (simple) elementary languages and symbolic membership
to learn a \emph{recognizable timed language}\LongVersion{ rather than a \emph{regular language}}.
\begin{definition}
 [timed observation table]%
 \label{def:timed_observation_table}
 A \emph{timed observation table}\LongVersion{ for DTA learning} is a 3-tuple $\TimedObsTableInside$ such that:
   $\PrefixSet$ is a prefix-closed finite set of simple elementary languages,
   $\SuffixSet$ is a finite set of elementary languages, and
  $\Table$ is a function mapping
 $(\prefix, \suffix) \in (\PrefixSet \cup \successor{\PrefixSet}) \times \SuffixSet$ to
 the symbolic membership $\symbolicMemQ{\targetLg}(\prefix \cdot \suffix)$.
\end{definition}

\cref{figure:DTA_learning:example} illustrates timed observation tables:
each cell indexed by $(\prefix, \suffix)$ show the symbolic membership $\symbolicMemQ{\targetLg}(\prefix \cdot \suffix)$.
\LongVersion{ In the\LongVersion{ original} L* algorithm\LongVersion{ in \cref{subsection:L*}}, closedness and consistency are defined by the equality of the rows\LongVersion{ of $\Table$}.}
For timed observation tables,
we extend\LongVersion{ them}\ShortVersion{ the notion of closedness and consistency} with\LongVersion{ the congruence} $\CellSim^{\SuffixSet}_{\targetLg}$ we introduced in \cref{section:timed_distinguishing_suffixes}.
We note that consistency is defined only for discrete successors.
This is because a timed observation table does not always become ``consistent'' for continuous successors. 
See \cref{subsection:consistency} for a detailed discussion.
We also require \emph{exterior-consistency} 
since we construct an exterior from a successor.

\begin{definition}
 [closedness, consistency, exterior-consistency, 
 cohesion]%
 \label{definition:closed_consistent_exterior_consistent}
 Let $\ObsTableWithInside$ be a timed observation table.
 $\ObsTable$ is \emph{closed} if,
 for each $\prefix \in \successor{\PrefixSet} \setminus \PrefixSet$,
 there is $\prefix' \in \PrefixSet$ satisfying $\prefix \CellSim^{\SuffixSet}_{\targetLg} \prefix'$.
%
 $\ObsTable$ is \emph{consistent} if,
 for each $\prefix, \prefix' \in \PrefixSet$ and for each $\action \in \Alphabet$,
 $\prefix \CellSim^{\SuffixSet}_{\targetLg} \prefix'$ implies
 $\successor[\action]{\prefix} \CellSim^{\SuffixSet}_{\targetLg} \successor[\action]{\prefix'}$.
 $\ObsTable$ is \emph{exterior-consistent} if
 for any $\prefix \in \PrefixSet$,
 $\successor[t]{\prefix} \notin \PrefixSet$ implies $\successor[t]{\prefix} \subseteq \exterior[t]{\prefix}$.
%
 $\ObsTable$ is \emph{cohesive} if it is closed, consistent, and exterior-consistent. 
\end{definition}
\SetKwFunction{FConstructDTA}{MakeDTA}
\begin{algorithm}[tb]
 \ShortVersion{\scriptsize}
 \Input{A cohesive timed observation table $\TimedObsTableInside$}
 \Output{A DTA $\hypothesisA$ row-faithful to the given timed observation table}
 \BlankLine{}
 \DontPrintSemicolon{}
 \Fn{\FConstructDTA{$\TimedObsTableInsideInside$}}{
     $\CRM \gets \emptyset$; $\RecognizableFinal \gets \{\elementary  \in \PrefixSet \mid \TableCell{\fractionalElementary}{(\styleact{\emptyword}, \timestamp'_0 = 0)} = \{\timedCondition \land \timestamp'_0 = 0\}\}$\;
     \For{$\prefix \in \PrefixSet$ such that $\successor[t]{\prefix} \notin \PrefixSet$ (\resp{} $\successor[\action]{\prefix} \notin \PrefixSet$)}{
         \tcp{Construct $(\elementaryInside, \elementaryInside['], \Rename)$ for some $\prefix' \in \PrefixSet$ and $\Rename$}
         \tcp{Such $\Rename$ is chosen using an exhaustive search}
         \KwPick{} $\prefix' \in \PrefixSet$ and $\Rename$ such that $\successor[t]{\prefix} \CellSim^{\SuffixSet, \Rename}_{\targetLg} \prefix'$ (\resp{} $\successor[\action]{\prefix} \CellSim^{\SuffixSet, \Rename}_{\targetLg} \prefix'$)\;
         \KwPush{} $(\elementaryInside, \elementaryInside['], \Rename)$ \KwTo{} $\CRM$, where $\elementary = \exterior[t]{\prefix}$ (\resp{} $\exterior[\action]{\prefix}$) and $\elementary['] = \prefix'$\;
     }
     \Return{} the DTA $\hypothesisA$ obtained from $(\PrefixSet, \RecognizableFinal, \CRM)$ by the construction in~\cite{MP04}
 }
 \caption{DTA construction from a timed observation table}%
 \label{algorithm:DTA_construction}
\end{algorithm}
From a cohesive timed observation table $\TimedObsTableInside$, we can construct a DTA as outlined in \cref{algorithm:DTA_construction}.
We construct a DTA via a recognizable timed language.
The main ideas are as follows:
 1) we approximate $\CellSim^{\Elementary, \CellRel}_{\targetLg}$ by $\CellSim^{\SuffixSet, \CellRel}_{\targetLg}$;
 2) we decide the equivalence class of\LongVersion{ an exterior} $\exterior{\prefix} \in \exterior{\PrefixSet} \setminus \PrefixSet$ in $\Elementary$ from successors.
Notice that there can be multiple renaming equations $\Rename$ showing $\CellSim^{\SuffixSet, \CellRel}_{\targetLg}$.
We use one of them found by an exhaustive search in \cref{appendix:row_equivalence}.
\LongVersion{The latter estimation is justified by the exterior-consistency of the timed observation table.}
\LongVersion{We denote the resulting DTA of this construction by \FConstructDTA{$\TimedObsTableInsideInside$}.}


The DTA obtained by \FConstructDTA is faithful to the timed observation table in \emph{rows},
\ie{} for any $\prefix \in \PrefixSet \cup \successor{\PrefixSet}$,
$\targetLg \cap \prefix = \Lg(\hypothesisA) \cap \prefix$.
However, unlike the\LongVersion{ original} L* algorithm, this does not\LongVersion{ (at least likely)} hold for each \emph{cell},
\ie{} there may be $\prefix \in \PrefixSet \cup \successor{\PrefixSet}$ and
$\suffix \in \SuffixSet$ satisfying
$\targetLg \cap (\prefix \cdot \suffix) \neq \Lg(\hypothesisA) \cap (\prefix \cdot \suffix)$.
This is because we do not (and actually cannot) enforce consistency for continuous successors.
See \cref{subsection:consistency} for a discussion.
\LongVersion{It is future work to find a concrete example violating the faithfulness\LongVersion{ for each cell}.}
Nevertheless, this does not affect the correctness of our\LongVersion{ DTA learning} algorithm partly by \cref{lemma:correct_in_limit}.

\newcommand{\rowFaithfulnessStatement}{%
 For any cohesive timed observation table $\TimedObsTableInside$,
 for any $\prefix \in \PrefixSet \cup \successor{\PrefixSet}$,
 $\targetLg \cap \prefix = \Lg(\FConstructDTA(\TimedObsTableInsideInside)) \cap \prefix$ holds.
}
\begin{theorem}
 [row faithfulness]%
 \label{theorem:row_faithfulness}
 \rowFaithfulnessStatement{}
 \qed{}
\end{theorem}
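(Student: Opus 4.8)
The plan is to unfold the definition of $\Lg(\hypothesisA)$ for $\hypothesisA = \FConstructDTA(\TimedObsTableInsideInside)$ through the recognizable-timed-language construction of~\cite{MP04} and compare it, one elementary language at a time, with the contents of the table. I rely on two facts about simple elementary languages recalled in \cref{sec:preliminaries}: a DTA either contains a simple elementary language entirely or is disjoint from it, and $\prefix \cdot (\styleact{\emptyword}, \timestamp_0 = 0) = \prefix$ for simple $\prefix$. Since $(\styleact{\emptyword}, \timestamp_0 = 0)$ is an element of $\SuffixSet$ throughout the run, the cell $\TableCell{\prefix}{(\styleact{\emptyword}, \timestamp_0 = 0)}$ records exactly whether $\prefix \subseteq \targetLg$; hence the set $\RecognizableFinal$ built in \cref{algorithm:DTA_construction} satisfies $\prefix \in \RecognizableFinal \iff \prefix \subseteq \targetLg$ for every $\prefix \in \PrefixSet$. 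Moreover every member of $\PrefixSet$ is a \emph{simple} elementary language, hence contained in or disjoint from $\RecognizableFinal$, so the relational morphism $\sem{\CRM}$ assembled by the algorithm is compatible with $\RecognizableFinal$ and the construction of~\cite{MP04} yields a genuine DTA that accepts $\word$ precisely when $\sem{\CRM}$ relates $\word$ to some language contained in $\RecognizableFinal$.

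First I would handle the base case $\prefix \in \PrefixSet$. The members of $\PrefixSet$, being distinct simple elementary languages, are pairwise disjoint, and every left component of a tuple of $\CRM$ lies in $\exterior{\PrefixSet} \setminus \PrefixSet$; therefore, for $\word \in \prefix \subseteq \PrefixSet$, the only element of $\PrefixSet$ related to $\word$ by $\sem{\CRM}$ is $\word$ itself (clause~1 of \cref{def:chronometric_relational_morphism}, with clauses~2 and~3 not applicable to $\word$). Consequently $\word \in \Lg(\hypothesisA) \iff \word \in \RecognizableFinal \iff \prefix \subseteq \RecognizableFinal \iff \prefix \subseteq \targetLg \iff \word \in \targetLg$, the middle equivalences using disjointness of $\PrefixSet$, simplicity of $\prefix$, and the characterisation of $\RecognizableFinal$ above. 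As $\word$ was arbitrary, $\targetLg \cap \prefix = \Lg(\hypothesisA) \cap \prefix$.

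Next I would treat $\prefix \in \successor{\PrefixSet} \setminus \PrefixSet$, writing $\prefix = \successor[x]{\prefix_0}$ with $\prefix_0 \in \PrefixSet$ and $x \in \Alphabet \cup \{t\}$. The key point is that $\prefix$ lies inside the exterior component for which \cref{algorithm:DTA_construction} inserted a tuple: for a discrete successor $\successor[\action]{\prefix_0} = \exterior[\action]{\prefix_0}$ holds by definition, while for the continuous successor exterior-consistency of the cohesive table gives $\successor[t]{\prefix_0} \subseteq \exterior[t]{\prefix_0}$ (we are in the case $\successor[t]{\prefix_0} \notin \PrefixSet$); in either case $\prefix \subseteq \exterior[x]{\prefix_0} \subseteq \exterior{\PrefixSet} \setminus \PrefixSet$. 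By closedness, the algorithm chose some $\prefix' \in \PrefixSet$ and a renaming equation $\Rename$ with $\successor[x]{\prefix_0} \CellSim^{\SuffixSet, \Rename}_{\targetLg} \prefix'$ and pushed $(\elementaryInside, \elementaryInside['], \Rename)$ where $\elementary = \exterior[x]{\prefix_0}$ and $\elementary['] = \prefix'$. Fix $\word \in \prefix$. From $\successor[x]{\prefix_0} \CellSub^{\SuffixSet, \Rename}_{\targetLg} \prefix'$ there is $\word' \in \prefix'$ with $\timeValuation{\word}, \timeValuation{\word'} \models \Rename$, so $(\word, \word') \in \sem{\CRM}$ via this tuple, and hence $\word \in \Lg(\hypothesisA) \iff \word' \in \Lg(\hypothesisA)$; by the base case the right-hand side is $\word' \in \targetLg \iff \prefix' \subseteq \targetLg$. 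Taking $\elementary[''] := (\styleact{\emptyword}, \timestamp_0 = 0)$ in \cref{definition:equivalence_row} (for which $\prefix'' \cdot (\styleact{\emptyword}, \timestamp_0 = 0) = \prefix''$), the defining equivalence of $\CellSim^{\SuffixSet, \Rename}_{\targetLg}$ reduces---using simplicity of $\prefix$ and $\prefix'$ and the existence of an $\Rename$-related pair---to $\prefix \subseteq \targetLg \iff \prefix' \subseteq \targetLg$. Chaining the equivalences yields $\word \in \Lg(\hypothesisA) \iff \word \in \targetLg$, and since $\word$ was arbitrary the case is complete.

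The part I expect to be delicate is not any single inference but the interface with the~\cite{MP04} construction: one must check that $\Lg(\hypothesisA)$ coincides with $\{\word \mid \exists \word',\ (\word,\word') \in \sem{\CRM} \text{ and } \word' \in \RecognizableFinal\}$, that the left components of the tuples genuinely partition $\exterior{\PrefixSet} \setminus \PrefixSet$ (so that the one-step image of $\word \in \prefix$ is governed by the tuple the algorithm actually picked), and that exterior-consistency is precisely the hypothesis keeping each continuous successor inside that partition. This last point is where cohesion is used essentially, and it also explains why the statement speaks of \emph{rows} $\prefix$ rather than \emph{cells} $\prefix \cdot \suffix$: for cells there is no analogue of exterior-consistency, as noted just after the theorem and discussed in \cref{subsection:consistency}.
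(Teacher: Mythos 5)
Your proposal is correct and follows essentially the same route as the paper: handle $\prefix \in \PrefixSet$ directly via the characterisation of $\RecognizableFinal$ (using that simple elementary languages are contained in or disjoint from any recognizable language), then handle $\prefix \in \successor{\PrefixSet}\setminus\PrefixSet$ by mapping $\word\in\prefix$ through the tuple $(\elementaryInside,\elementaryInside['],\Rename)\in\CRM$ to some $\word'\in\prefix'\in\PrefixSet$, and chaining $\word\in\Lg(\hypothesisA)\iff\word'\in\Lg(\hypothesisA)\iff\word'\in\targetLg\iff\word\in\targetLg$ using closedness, exterior-consistency, and the equivalence at the empty suffix. The only structural difference is that the paper first isolates the ``any $\word\in\prefix$ has a $\sem{\CRM}$-image $\word'\in\prefix'$'' step as a separate lemma (\cref{lemma:successor-jump}, itself relying on exterior-consistency and closedness exactly as you do) and then cites it, whereas you inline that argument; the underlying reasoning is the same.
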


\newcommand{\correctnessInLimitStatement}{%
  For any cohesive timed observation table $\TimedObsTableInside$,
  $\CellSim^{\SuffixSet}_{\targetLg} = \CellSim^{\Elementary}_{\targetLg}$ implies
  $\targetLg = \Lg(\FConstructDTA(\TimedObsTableInsideInside))$.
}
\begin{theorem}%
 \label{lemma:correct_in_limit}
 \correctnessInLimitStatement{}
 \ShortVersion{\qed{}}
\end{theorem}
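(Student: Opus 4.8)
The plan is to avoid reasoning about runs of $\hypothesisA$ directly and to show instead that the triple $(\PrefixSet,\RecognizableFinal,\CRM)$ produced by \cref{algorithm:DTA_construction} is itself a presentation of $\targetLg$ as a recognizable timed language; $\Lg(\hypothesisA)=\targetLg$ then follows from the construction of~\cite{MP04}, which yields $\Lg(\hypothesisA)=\{\word\mid\exists\word'\in\RecognizableFinal,\ (\word,\word')\in\sem{\CRM}\}$ and relates every timed word to at least one element of $\bigcup\PrefixSet$. Two preliminary remarks frame the argument. First, since the table is closed and $\PrefixSet$ is finite, every $\CellSim^{\SuffixSet}_{\targetLg}$-class obtained from $\PrefixSet$ by iterating successors is already represented in $\PrefixSet$; with the hypothesis $\CellSim^{\SuffixSet}_{\targetLg}=\CellSim^{\Elementary}_{\targetLg}$ and \cref{theorem:finiteness} this forces $\SimpleElementary/{\CellSim^{\Elementary}_{\targetLg}}$ to be finite, so $\targetLg$ is recognizable and in particular cannot distinguish two timed words lying in one simple elementary language. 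Second, by the definition of $\RecognizableFinal$ in \cref{algorithm:DTA_construction} and the fact that the empty suffix $(\styleact{\emptyword},\timestamp'_0=0)$ belongs to $\SuffixSet$, we have $\prefix'\in\RecognizableFinal\iff\prefix'\subseteq\targetLg$ for every $\prefix'\in\PrefixSet$, and by the first remark this membership is invariant under $\CellSim^{\Elementary}_{\targetLg}$.

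The core step is the claim, proved by induction on the derivation of $(\word,\word')\in\sem{\CRM}$ in \cref{definition:chronometric_relational_morphism}, that \emph{for every $\word\in\TimedWords$ and every $\word'\in\prefix'$ with $\prefix'\in\PrefixSet$ and $(\word,\word')\in\sem{\CRM}$, the unique simple elementary language $\prefix_{\word}$ containing $\word$ satisfies $\prefix_{\word}\CellSim^{\SuffixSet}_{\targetLg}\prefix'$}. The base case $\word=\word'\in\bigcup\PrefixSet$ is immediate, since then $\prefix_{\word}=\prefix'\in\PrefixSet$. For a single step out of $\PrefixSet$ (clause~2), the tuple $(\elementaryInside,\elementaryInside['],\Rename)$ of $\CRM$ used here was produced by \cref{algorithm:DTA_construction} from a $\prefix\in\PrefixSet$ whose continuous (resp.\ discrete) successor satisfies $\successor[t]{\prefix}\CellSim^{\SuffixSet,\Rename}_{\targetLg}\prefix'$ (resp.\ $\successor[\action]{\prefix}\CellSim^{\SuffixSet,\Rename}_{\targetLg}\prefix'$); exterior-consistency is exactly what lets one identify the exterior $\elementary$ recorded in the tuple with the successor on which this equivalence was checked, so that $\prefix_{\word}$ coincides with that successor and $\prefix_{\word}\CellSim^{\SuffixSet}_{\targetLg}\prefix'$ follows. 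For the composite clause~3, one combines the induction hypothesis on the prefix with a one-letter discrete step, where \emph{consistency} of the table propagates $\CellSim^{\SuffixSet}_{\targetLg}$ across the appended action and \emph{closedness} returns the resulting successor to a $\PrefixSet$-representative; the interleaved continuous passages are handled by exterior-consistency alone, which is why the argument is insensitive to the table failing to be ``consistent'' for continuous successors.

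Granting the claim, fix $\word\in\TimedWords$ and, by totality of $\sem{\CRM}$, some $\word'\in\prefix'\in\PrefixSet$ with $(\word,\word')\in\sem{\CRM}$. The claim and the hypothesis give $\prefix_{\word}\CellSim^{\Elementary}_{\targetLg}\prefix'$, whence \cref{corollary:adequacy_equivalence_limit} instantiated with the empty timed word as the continuation, together with the all-or-nothing behaviour of the simple elementary languages $\prefix_{\word}$ and $\prefix'$ with respect to the recognizable language $\targetLg$, yields $\word\in\targetLg\iff\prefix'\subseteq\targetLg$. On the other hand, applying the claim to every $\PrefixSet$-representative of $\word$ and using the $\CellSim^{\Elementary}_{\targetLg}$-invariance of $\RecognizableFinal$ from the second preliminary remark, the construction of~\cite{MP04} gives $\word\in\Lg(\hypothesisA)\iff\prefix'\in\RecognizableFinal$, which in turn is $\prefix'\subseteq\targetLg$. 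Combining the two equivalences gives $\word\in\Lg(\hypothesisA)\iff\word\in\targetLg$, i.e.\ $\Lg(\hypothesisA)=\targetLg$. The main obstacle is the inductive claim: it demands careful bookkeeping to reconcile the successor operation, on which $\CellSim^{\SuffixSet}_{\targetLg}$, closedness and consistency are phrased, with the immediate-exterior operation, on which the tuples of $\CRM$ and hence the edges of $\hypothesisA$ are built, checking that exterior-consistency is exactly the bridge between them and that the absence of consistency for continuous successors causes no harm.
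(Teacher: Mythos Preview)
The paper's argument is quite different and more direct. It fixes $w\in\TimedWords$ and builds a reduction sequence $w=w_0,w_1,\dots,w_n$ in the style of \cref{algorithm:handle_counterexample}: at step $i$ one peels off a prefix $w'_i\in\prefix'_i\in\successor{\PrefixSet}$ of $w_{i-1}$ and replaces it by some $\overline w_i\in\prefix_i\in\PrefixSet$ matched via the renaming $\Rename_i$ recorded in $\CRM$, stopping once $w_n\in\bigcup\PrefixSet$. Then (i) the construction of $\hypothesisA$ from $(\PrefixSet,\RecognizableFinal,\CRM)$ gives $w\in\Lg(\hypothesisA)\iff w_n\in\Lg(\hypothesisA)$; (ii) \cref{lemma:membership_renaming} together with the hypothesis $\CellSim^{\SuffixSet}_{\targetLg}=\CellSim^{\Elementary}_{\targetLg}$ gives $w_{i-1}\in\targetLg\iff w_i\in\targetLg$ for every $i$; and (iii) row faithfulness (\cref{theorem:row_faithfulness}) yields $w_n\in\Lg(\hypothesisA)\iff w_n\in\targetLg$. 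The argument tracks \emph{membership of concrete timed words} in $\targetLg$, never $\CellSim$-equivalence of their simple-elementary containers.

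Your inductive claim---that $p_w\CellSim^{\SuffixSet}_{\targetLg}p'$ whenever $(w,w')\in\sem{\CRM}$ with $w'\in p'\in\PrefixSet$---is strictly stronger, and the sketched induction does not go through. In clause~2 you say exterior-consistency ``lets one identify the exterior with the successor so that $p_w$ coincides with that successor''; but exterior-consistency only asserts $\successor[t]{\prefix}\subseteq\exterior[t]{\prefix}$, not equality, so a word $w$ in the domain $\exterior[t]{\prefix}$ of a tuple of $\CRM$ need not lie in the successor $\successor[t]{\prefix}$ on which $\CellSim^{\SuffixSet,\Rename}$ was verified. Worse, clause~3 needs $\CellSim^{\SuffixSet}_{\targetLg}$ to propagate to the containers after appending an arbitrary suffix; discrete consistency buys this only for a single appended action, and ``continuous passages handled by exterior-consistency alone'' is precisely where such propagation fails---the paper's own \cref{example:continuous-consistency} exhibits $\prefix\CellSim^{\Elementary}_{\targetLg}\prefix'$ with $\successor[t]{\prefix}\NCellSim^{\Elementary}_{\targetLg}\successor[t]{\prefix'}$, so $\CellSim^{\Elementary}_{\targetLg}$ is \emph{not} preserved along time elapses. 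The paper sidesteps all of this by never asserting that the containers are $\CellSim$-equivalent, only that the specific words $w_{i-1},w_i$ agree on $\targetLg$-membership, and by invoking \cref{theorem:row_faithfulness}---which your plan expressly avoids---to tie $\Lg(\hypothesisA)$ to $\targetLg$ at the terminal step.
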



\subsection{Counterexample analysis}\label{appendix:counterexample_handling}

\begin{algorithm}[t]
 \caption{Counterexample analysis in our DTA learning algorithm}%
\label{algorithm:handle_counterexample}
 \ShortVersion{\scriptsize}
 \DontPrintSemicolon{}
 \newcommand{\myCommentFont}[1]{\texttt{\ShortVersion{\scriptsize}{#1}}}
 \SetCommentSty{myCommentFont}
 \DontPrintSemicolon{}
 \SetKwFunction{FHandleCounterExample}{AnalyzeCEX}
 \Fn{\FHandleCounterExample{$\cex$}}{
   $i \gets 1$;\, $\word_0 \gets \cex$\;
   \While{$\nexists \prefix \in \PrefixSet. \word_{i} \in \prefix$} {
     $i \gets i + 1$ \;
     \KwSplit{} $\word_{i-1}$ \KwInto{} $\word'_{i} \cdot \word''_i$ such that $\word'_{i} \in \prefix'_i$ for some $\prefix'_i \in \successor{\PrefixSet} \setminus \PrefixSet$\;\label{algorithm:handle_counterexample:begin_map}
     \KwLet{} $\prefix_i \in \PrefixSet$ and $\Rename_i$ \KwBe{} such that $\prefix'_i \CellSim^{\SuffixSet, \Rename_i}_{\targetLg} \prefix_i$\;
     \KwLet{} $\overline{\word}_i \in \prefix_i$ \KwBe{} such that $\timeValuation{\word'_i}, \timeValuation{\overline{\word}_i}\models \Rename_i$\;\label{algorithm:handle_counterexample:end_map}
     $\word_i \gets \overline{\word}_i \cdot \word''_i$
   }
   \KwFind{} $j \in \{1,2,\dots,i\}$ such that $\word_{j-1} \in \targetLg \setdiff \Lg(\hypothesisA)$ and $\word_{j} \not\in \targetLg \setdiff \Lg(\hypothesisA)$\;\label{algorithm:handle_counterexample:find_evidence}
   \tcp{We use a binary search with membership queries for $\lceil \log(i) \rceil$ times.}
   \KwReturn{} the simple elementary language including $\word''_{j}$
 }
\end{algorithm}
We analyze the counterexample $\cex$ obtained by an equivalence query to refine the set $\SuffixSet$ of suffixes in a timed observation table.
Our analysis, outlined in \cref{algorithm:handle_counterexample}, is inspired by the Rivest-Schapire algorithm~\cite{RS93,IS14}.
The idea is to reduce{ the counterexample} $\cex$ using the mapping defined by{ the congruence} $\CellSim^{\SuffixSet}_{\targetLg}$ (lines~\ref{algorithm:handle_counterexample:begin_map}--\ref{algorithm:handle_counterexample:end_map}), much like $\CRM$ in recognizable timed languages, and to find a suffix $\suffix$ strictly refining $\SuffixSet$ (line~\ref{algorithm:handle_counterexample:find_evidence}), \ie{}
satisfying
$\prefix \CellSim^{\SuffixSet}_{\targetLg} \prefix'$ and $\prefix \NCellSim^{\SuffixSet\cup\{\suffix\}}_{\targetLg} \prefix'$ for some $\prefix \in \successor{\PrefixSet}$ and $\prefix' \in \PrefixSet$.

By definition of $\cex$, we have $\cex = \word_0 \in \targetLg \setdiff \Lg(\hypothesisA)$.
By \cref{theorem:row_faithfulness}, $\word_n \not\in \targetLg \setdiff \Lg(\hypothesisA)$ holds, where $n$ is the final value of $i$.
By construction of $\hypothesisA$ and $\word_i$, for any $i \in \{1, 2, \dots, n\}$, we have $\word_0 \in \Lg(\hypothesisA) \iff \word_i \in \Lg(\hypothesisA)$.
Therefore, there is $i \in \{1, 2, \dots, n\}$ satisfying $\word_{i-1} \in \targetLg \setdiff \Lg(\hypothesisA)$ and $\word_{i} \not\in \targetLg \setdiff \Lg(\hypothesisA)$.
For such $i$, since we have
$\word_{i-1} = \word'_i \cdot \word''_i \in \targetLg \setdiff \Lg(\hypothesisA)$,
$\word_i = \overline{\word}_i \cdot \word''_i \not\in \targetLg \setdiff \Lg(\hypothesisA)$, and
$\timeValuation{\word'_i}, \timeValuation{\overline{\word}_i} \models \Rename_i$,
such $\word''_i$ is a witness of $\prefix'_i \NCellSim^{\Elementary, \Rename_i}_{\targetLg} \prefix_i$.
Therefore, $\SuffixSet$ can be refined by the simple elementary language $\suffix \in \SimpleElementary$ including $\word''_i$.

\subsection{L*-style learning algorithm for DTAs}\label{subsection:dta_learning}


\begin{algorithm}[t]
 \caption{Outline of our L*-style algorithm for DTA learning}%
 \label{algorithm:timedLStar}
 \ShortVersion{\scriptsize}
 \DontPrintSemicolon{}
 \newcommand{\myCommentFont}[1]{\texttt{\ShortVersion{\scriptsize}{#1}}}
 \SetCommentSty{myCommentFont}
 \SetKwFunction{FConstructTable}{ConstructTable}
 $\PrefixSet \gets \{(\styleact{\emptyword},\timestamp_0 = 0)\};\; \SuffixSet \gets \{(\styleact{\emptyword}, \timestamp'_0 = 0)\}$\;%
 \label{algorithm:timedLStar:initialize}
 \While{$\top$} {\label{algorithm:timedLStar:main_loop:begin}
   \While{$\TimedObsTableInside$ is not cohesive} {\label{algorithm:timedLStar:close_consistent_loop:begin}
     \If(\tcp*[f]{$\TimedObsTableInside$ is not closed}){$\exists \prefix \in \successor{\PrefixSet} \setminus \PrefixSet.\, \nexists \prefix' \in \PrefixSet.\, \prefix \CellSim^{\SuffixSet}_{\targetLg} \prefix'$} {
         $\PrefixSet \gets \PrefixSet \cup \{\prefix\}$\label{algorithm:timedLStar:close}
         \tcp*{Add such $\prefix$ to $\PrefixSet$}
     }
     \ElseIf{$\exists \prefix,\prefix' \in \PrefixSet, \action \in \Alphabet.\, \prefix\CellSim^{\SuffixSet}_{\targetLg} \prefix' \land \successor[\action]{\prefix}\NCellSim^{\SuffixSet}_{\targetLg} \successor[\action]{\prefix'}$} {
         \tcp{$\TimedObsTableInside$ is inconsistent due to $\action$}
         \KwLet{} $\SuffixSet' \subseteq \SuffixSet$ \KwBe{} a minimal set such that
         $\prefix\NCellSim^{\SuffixSet \cup \{ \{\action \cdot \word \mid \word \in \suffix\} \mid \suffix \in \SuffixSet'\}}_{\targetLg} \prefix'$\;
         $\SuffixSet \gets \SuffixSet \cup \{\{\action \cdot \word \mid \word \in \suffix\} \mid \suffix \in \SuffixSet'\}$\label{algorithm:timedLStar:discrete_inconsistent}\;
     }
     \Else(\tcp*[f]{$\TimedObsTableInside$ is not exterior-consistent}) 
 {
         $\PrefixSet \gets \PrefixSet \cup \{\prefix' \in \successor[t]{\PrefixSet} \setminus \PrefixSet \mid \exists \prefix \in \PrefixSet.\, \prefix' = \successor[t]{\prefix}\land \prefix' \nsubseteq \exterior[t]{\prefix}\}$\;\label{algorithm:timedLStar:exterior_inconsistent}
     }
     fill $\Table$ using symbolic membership queries
   }\label{algorithm:timedLStar:close_consistent_loop:end}
   $\hypothesisA \gets \FConstructDTA{\TimedObsTableInsideInside}$\;\label{algorithm:timedLStar:eqQ}
   \If{$\cex = \eqQ[\targetLg]{\hypothesisA}$\label{algorithm:timedLStar:eqQ:cex}} {
     \KwPush{} \FHandleCounterExample{$\cex$} \KwTo{} $\SuffixSet$\;\label{algorithm:timedLStar:add_cex}
   }
   \lElse(\tcp*[f]{It returns $\hypothesisA$ if $\eqQ[\targetLg]{\hypothesisA} = \top$.}) {
       \KwReturn{$\hypothesisA$}\label{algorithm:timedLStar:eqQ:end}
   }
 }\label{algorithm:timedLStar:main_loop:end}
\end{algorithm}

\cref{algorithm:timedLStar} outlines our active DTA learning algorithm.
At line~\ref{algorithm:timedLStar:initialize}, we initialize the timed observation table with
 $\PrefixSet = \{(\styleact{\emptyword},\timestamp_0 = 0)\}$ and $\SuffixSet = \{(\styleact{\emptyword},\timestamp'_0 = 0)\}$.
In the loop in lines~\ref{algorithm:timedLStar:main_loop:begin}--\ref{algorithm:timedLStar:main_loop:end},
we refine the timed observation table\LongVersion{ $\TimedObsTableInside$} until the hypothesis DTA $\hypothesisA$ recognizes the target language $\targetLg$, which is checked by equivalence queries. 
The refinement finishes when
 the equivalence relation $\CellSim^{\SuffixSet}_{\targetLg}$ defined by the suffixes $\SuffixSet$ converges to $\CellSim^{\Elementary}_{\targetLg}$, and
 the prefixes $\PrefixSet$ covers $\SimpleElementary / {\CellSim^{\Elementary}_{\targetLg}}$.

In the loop in lines~\ref{algorithm:timedLStar:close_consistent_loop:begin}--\ref{algorithm:timedLStar:close_consistent_loop:end},
we make the timed observation table cohesive.
If the timed observation table is not closed, we move the incompatible row in $\successor{\PrefixSet} \setminus \PrefixSet$ to $\PrefixSet$ (line~\ref{algorithm:timedLStar:close}).
If the timed observation table is inconsistent, we concatenate an event $\action \in \Alphabet$ in front of some of the suffixes in $\SuffixSet$ (line~\ref{algorithm:timedLStar:discrete_inconsistent}).
If the timed observation table is not exterior-consistent, 
we move the boundary $\successor[t]{\prefix} \in \successor[t]{\PrefixSet}\setminus\PrefixSet$ satisfying $\successor[t]{\prefix} \nsubseteq \exterior[t]{\prefix}$ to $\PrefixSet$ (\cref{algorithm:timedLStar:exterior_inconsistent}). 
Once we obtain a cohesive timed observation table,
we construct a DTA $\hypothesisA = \FConstructDTA{\TimedObsTableInsideInside}$ and make an equivalence query (line~\ref{algorithm:timedLStar:eqQ}).
If we have $\Lg(\hypothesisA) = \targetLg$, we return $\hypothesisA$.
Otherwise, we have a timed word $\cex$ witnessing the difference between the language of the hypothesis DTA $\hypothesisA$ and the target language $\targetLg$.
We refine the timed observation table using \cref{algorithm:handle_counterexample}. 


\begin{example}%
 \label{example:DTA_learning}
 \begin{figure}[t]
 \begin{subfigure}[b]{.35\textwidth}
  \scriptsize
   \centering
   \begin{tabular}{c|c}
    & $(\styleact{\emptyword},\timestamp'_0 = 0)$\\\hline
    $(\styleact{\emptyword},\timestamp_0 = 0)$ & $\timestamp_0 = \timestamp'_0 = 0$\\\hline
    $(\styleact{\emptyword}, \timestamp_0 \in (0,1))$ & $0 = \timestamp'_0 < \timestamp_0 < 1$\\
    $(\styleact{a}, \timestamp_0 = \timestamp_1 = 0)$ & $\timestamp_0 = \timestamp_1 = \timestamp'_0 = 0$
   \end{tabular}
  \caption{The initial timed observation table $\ObsTableWithInside[_1]$}%
  \label{figure:DTA_learning:initial_observation_table}
 \end{subfigure}
 \begin{subfigure}[b]{.24\textwidth}
   \centering
   \begin{tikzpicture}[node distance=2.5cm,on grid,auto,scale=0.8,every node/.style={transform shape},every initial by arrow/.style={initial text={}}]
    \node[state,initial, accepting] (s_0)  {$\loc_0$};

    \path[->]
    (s_0) edge  [loop right] node {$\styleact{a}$} (s_0)
    ;
   \end{tikzpicture}
  \caption{DTA $\hypothesisA^1$ constructed from $\ObsTable_1$}%
  \label{figure:DTA_learning:initial_DTA}
 \end{subfigure}
 \begin{subfigure}[b]{.4\textwidth}
  \centering
  \begin{tikzpicture}[shorten >=1pt,scale=0.8,every node/.style={transform shape},every initial by arrow/.style={initial text={}}]
   \node[initial,state,accepting] (l0) at (0,0) [align=center]{$\loc_0$};
   \node[state] (l1) at (3.5,0) [align=center]{$\loc_1$};

   \path[->]
   (l0) edge [bend left=10] node[above] {$\styleact{a}, \clock_0 \geq 1 / \clock_1 \coloneqq 0$} (l1)
   (l0) edge [loop above] node {$\styleact{a}, \clock_0 < 1$} (l0)
   (l1) edge [bend left=10] node[below] {$\styleact{a}, \clock_1 \leq 1 / \clock_0 \coloneqq \clock_1$} (l0)
   (l1) edge [loop above] node {$\styleact{a}, \clock_1 > 1$} (l1)
   ;
  \end{tikzpicture}
  \caption{DTA $\hypothesisA^3$ constructed from $\ObsTable_3$}%
  \label{figure:DTA_learning:final_hypothesis}
 \end{subfigure}
  \hfill
 \begin{subfigure}[b]{1.0\textwidth}
  \scriptsize
  \centering
  \begin{tabular}{c|c c}
   & $\suffix_0 = (\styleact{\emptyword},\timestamp'_0 = 0)$ & $\suffix_1 = (\styleact{a},\timestamp'_1 = 0 < \timestamp'_0 < 1)$\\\hline
   $\prefix_0 = (\styleact{\emptyword},\timestamp_0 = 0)$ & $\top$  & $\top$\\ \hline
   $\prefix_1 = (\styleact{\emptyword}, \timestamp_0 \in (0,1))$ & $\top$ & $\timestamp_0 + \timestamp'_0 \in (0,1)$\\
   $\prefix_2 = (\styleact{a}, \timestamp_0 = \timestamp_0 + \timestamp_1 = \timestamp_1 = 0)$ & $\top$ & $\top$
  \end{tabular}
  \caption{Timed observation table $\ObsTableWithInside[_2]$ after processing $\cex$}%
  \label{figure:DTA_learning:observation_table_after_cex}
 \end{subfigure}
 \begin{subfigure}[b]{1.0\textwidth}
  \centering
  \scriptsize
  \begin{tabular}{c|c c c}
   & $(\styleact{\emptyword},\timestamp'_0 = 0)$ & $(\styleact{a},\timestamp'_1 = 0 < \timestamp'_0 < 1)$\\\hline
   $(\styleact{\emptyword},\timestamp_0 = 0)$ & $\top$ & $\top$\\
   $(\styleact{\emptyword}, \timestamp_0 \in (0,1))$ & $\top$ & $\timestamp_0 + \timestamp'_0 \in (0,1)$\\
   $(\styleact{\emptyword}, \timestamp_0 = 1)$ & $\top$ & $\bot$\\
   $(\styleact{a}, \timestamp_0 = \timestamp_0 + \timestamp_1 = 1 \land \timestamp_1 = 0)$ & $\bot$ & $\top$\\
   $(\styleact{a}, \timestamp_0 = 1 \land \timestamp_1 \in (0,1))$ & $\bot$ & $\timestamp_1 + \timestamp'_0 \in (0,1]$\\
   $(\styleact{a}, \timestamp_0 = \timestamp_1 = 1 \land \timestamp_0 + \timestamp_1 = 2)$ & $\bot$ & $\bot$\\
   \hline
   $(\styleact{a}, \timestamp_0 = \timestamp_0 + \timestamp_1 = \timestamp_1 = 0)$ & $\top$ & $\top$\\
   $(\styleact{a}, \timestamp_0 = \timestamp_0 + \timestamp_1 \in (0,1) \land \timestamp_1 = 0)$ & $\top$ & $\timestamp_0 + \timestamp_1 + \timestamp'_0 \in (0,1)$\\
   $(\styleact{\emptyword}, \timestamp_0 \in (1,2))$ & $\top$ & $\bot$\\
   $(\styleact{a} \styleact{a}, \timestamp_0 = \timestamp_0 + \timestamp_1 = \timestamp_0 + \timestamp_1 + \timestamp_2 = 1 \land \timestamp_1 = \timestamp_2 = \timestamp_1 + \timestamp_2 = 0)$ & $\top$ & $\top$\\
   $(\styleact{a}\styleact{a}, \timestamp_0 = 1 \land \timestamp_1 = \timestamp_1 + \timestamp_2 \in (0,1) \land \timestamp_0 + \timestamp_1 = \timestamp_0 + \timestamp_1 + \timestamp_2 \in (1, 2) \land \timestamp_2 = 0)$ & $\top$ & $\timestamp_1 + \timestamp_2 + \timestamp'_0 \in (0,1)$\\
   $(\styleact{a}, \timestamp_0 = 1 < \timestamp_1 < 2 < \timestamp_0 + \timestamp_1 < 3)$ & $\bot$ & $\bot$\\
   $(\styleact{a} \styleact{a}, \timestamp_0 = \timestamp_1 = \timestamp_1 + \timestamp_2 = 1 \land \timestamp_0 + \timestamp_1 = \timestamp_0 + \timestamp_1 + \timestamp_2 = 2 \land \timestamp_2 = 0)$ & $\top$ & $\bot$
  \end{tabular}
  \caption{The final observation table $\ObsTableWithInside[_3]$}%
  \label{figure:DTA_learning:final_observation_table}
 \end{subfigure}
  \caption{Timed observation tables $\ObsTable_1, \ObsTable_2, \ObsTable_3$, and the DTAs $\hypothesisA^1$ and $\hypothesisA^3$ made from $\ObsTable_1$ and $\ObsTable_3$, respectively. In $\ObsTable_2$ and $\ObsTable_3$, we only show the constraints non-trivial from $\prefix$ and $\suffix$. The DTAs are simplified without changing the language. The use of clock assignments, which does not change the expressiveness, is from~\cite{MP04}.}%
  \label{figure:DTA_learning:example}
\end{figure}
 Let $\targetLg$ be the timed language recognized by the DTA in \cref{figure:timed_automaton}.
 We start from $\PrefixSet = \{(\styleact{\emptyword},\timestamp_0 = 0)\}$ and $\SuffixSet = \{(\styleact{\emptyword},\timestamp'_0 = 0\}$.
 \cref{figure:DTA_learning:initial_observation_table} shows the initial timed observation table $\ObsTable_1$.
 Since the timed observation table $\ObsTable_1$ in \cref{figure:DTA_learning:initial_observation_table} is cohesive,
 we construct a hypothesis DTA $\hypothesisA^1$.\@
 The hypothesis recognizable timed language is $(\PrefixSet_1, \RecognizableFinal_1, \CRM_1)$ is such that $\PrefixSet_1 = \RecognizableFinal_1 = \{(\styleact{\emptyword}, \timestamp_{0} = 0)\}$ and
 $\CRM_{1} = \{(\styleact{\emptyword}, \timestamp_{0} > 0, \styleact{\emptyword}, \timestamp_{0}, \top), (\styleact{a}, \timestamp_{0} = \timestamp_{0} + \timestamp_{1} = \timestamp_{1} = 0, \styleact{\emptyword}, \timestamp_{0}, \top)\}$.
 \cref{figure:DTA_learning:initial_DTA} shows the first hypothesis DTA $\hypothesisA^1$.

 We have $\Lg(\hypothesisA^1) \neq \targetLg$, and the learner obtains a counterexample, \eg{} $\cex = 1.0 \cdot \styleact{a} \cdot 0$, with an equivalence query.
 In \cref{algorithm:handle_counterexample}, we have
 $\word_0 = \cex$,
 $\word_1 = 0.5 \cdot \styleact{a} \cdot 0$,
 $\word_2 = 0 \cdot \styleact{a} \cdot 0$, and
 $\word_3 = 0$.
 We have
 $\word_0 \not\in \Lg(\hypothesisA^1) \setdiff \targetLg$ and
 $\word_1 \in \Lg(\hypothesisA^1) \setdiff \targetLg$, and
 the suffix to distinguish $\word_0$ and $\word_1$ is $0.5 \cdot \styleact{a} \cdot 0$.
 Thus, we add
 $\suffix_1 = (\styleact{a}, \timestamp'_1 = 0 < \timestamp'_0 = \timestamp'_0 + \timestamp'_1 < 1)$ to $\SuffixSet_1$ (\cref{figure:DTA_learning:observation_table_after_cex}).

 In \cref{figure:DTA_learning:observation_table_after_cex},
we observe that $\TableCell[_2]{\prefix_1}{\suffix_1}$ is more strict than $\TableCell[_2]{\prefix_0}{\suffix_1}$, and
 we have $\prefix_1 \NCellSim^{\SuffixSet_2}_{\targetLg} \prefix_0$.
 To make $\ObsTableInside[_2]$ closed, we add $\prefix_1$ to $\PrefixSet_2$.
 By repeating similar operations,
 we obtain the timed observation table $\ObsTableWithInside[_3]$ in \cref{figure:DTA_learning:final_observation_table}, 
 which is cohesive.
 \cref{figure:DTA_learning:final_hypothesis} shows the DTA $\hypothesisA^3$ constructed from $\ObsTable_3$.
 Since $\Lg(\hypothesisA^3) = \targetLg$ holds, \cref{algorithm:timedLStar} finishes returning $\hypothesisA^3$.
\end{example}

By the use of equivalence queries, \cref{algorithm:timedLStar} returns a DTA recognizing the target language if it terminates, which is formally as follows.

\begin{theorem}
 [correctness]
 For any target timed language $\targetLg$,
 if \cref{algorithm:timedLStar} terminates,
 for the resulting DTA $\hypothesisA$,
 $\Lg(\hypothesisA) = \targetLg$ holds.
 \qed{}
\end{theorem}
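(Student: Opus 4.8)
Here is a plan for proving the correctness theorem.

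The plan is to observe that \cref{algorithm:timedLStar} has a single exit point, and that this exit is guarded by a successful equivalence query. The only statement that returns a value is at \cref{algorithm:timedLStar:eqQ:end}; it lies in the else-branch of the conditional at \cref{algorithm:timedLStar:eqQ:cex}, which is taken exactly when the equivalence query $\eqQ[\targetLg]{\hypothesisA}$ does \emph{not} produce a counterexample. By the specification of the (smart) teacher, $\eqQ[\targetLg]{\hypothesisA}$ is $\top$ if $\Lg(\hypothesisA) = \targetLg$ and is otherwise a timed word $\cex \in \Lg(\hypothesisA) \setdiff \targetLg$; hence ``$\eqQ[\targetLg]{\hypothesisA}$ is not a counterexample'' is equivalent to $\eqQ[\targetLg]{\hypothesisA} = \top$, which is equivalent to $\Lg(\hypothesisA) = \targetLg$. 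So the theorem reduces to checking that the value returned on a terminating run is a genuine DTA: whenever \cref{algorithm:timedLStar} terminates, it returns the DTA $\hypothesisA$ most recently built at \cref{algorithm:timedLStar:eqQ}, and for that $\hypothesisA$ we will have $\Lg(\hypothesisA) = \targetLg$.

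For the well-definedness of $\hypothesisA$, I would first note that \cref{algorithm:timedLStar:eqQ} is reached only after the inner loop of lines~\ref{algorithm:timedLStar:close_consistent_loop:begin}--\ref{algorithm:timedLStar:close_consistent_loop:end} has terminated, and the guard of that loop is precisely ``$\TimedObsTable$ is not cohesive''; hence the timed observation table is cohesive at \cref{algorithm:timedLStar:eqQ}. This is exactly the precondition of \FConstructDTA in \cref{algorithm:DTA_construction}, so $\hypothesisA = \FConstructDTA(\TimedObsTableInsideInside)$ is a well-defined DTA obtained through the monoid-based construction of~\cite{MP04}, and both $\Lg(\hypothesisA)$ and the equivalence query $\eqQ[\targetLg]{\hypothesisA}$ are meaningful. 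Combining this with the previous paragraph yields $\Lg(\hypothesisA) = \targetLg$, which is the claim. (One could optionally invoke \cref{theorem:row_faithfulness} here to further corroborate that the returned automaton is sound on the observed rows, but cohesiveness already suffices for well-definedness.)

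I expect no genuine obstacle: the statement is essentially immediate from the semantics of equivalence queries, together with the syntactic fact that \cref{algorithm:timedLStar} never returns a hypothesis that failed an equivalence query. The only care required is control-flow bookkeeping, namely confirming that there is no other path out of the main loop (lines~\ref{algorithm:timedLStar:main_loop:begin}--\ref{algorithm:timedLStar:main_loop:end}) and that on the exiting path $\hypothesisA$ is indeed the automaton constructed from a cohesive timed observation table. The substantive difficulties of this development — row faithfulness (\cref{theorem:row_faithfulness}), correctness in the limit (\cref{lemma:correct_in_limit}), the fact that the counterexample analysis of \cref{algorithm:handle_counterexample} strictly refines $\SuffixSet$, and hence the \emph{termination} of \cref{algorithm:timedLStar} — are not needed for the present statement, whose hypothesis already assumes termination; they are what one would invoke to discharge that assumption.
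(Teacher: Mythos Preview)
Your proposal is correct and matches the paper's approach: the paper treats this theorem as immediate, stating only ``By the use of equivalence queries, \cref{algorithm:timedLStar} returns a DTA recognizing the target language if it terminates'' before the statement and marking it with \qed{} without further proof. Your control-flow analysis spells out exactly this reasoning.
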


%
Moreover, \cref{algorithm:timedLStar} terminates for any recognizable timed language $\targetLg$\LongVersion{.
This is} essentially because of the finiteness of $\SimpleElementary / {\CellSim^{\Elementary}_{\targetLg}}$.

\begin{theorem}
 [termination]%
 \label{theorem:termination}
 For any recognizable timed language $\targetLg$,
 \cref{algorithm:timedLStar} terminates and returns a DTA $\A$ satisfying $\Lg(\A) = \targetLg$.
\end{theorem}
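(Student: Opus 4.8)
The plan is to establish correctness and termination separately, with termination carrying essentially all of the work. Correctness given termination is immediate: \cref{algorithm:timedLStar} only returns a DTA $\hypothesisA$ when $\eqQ[\targetLg]{\hypothesisA}=\top$, i.e.\ when $\Lg(\hypothesisA)=\targetLg$. For termination I would first invoke \cref{theorem:finiteness}: since $\targetLg$ is recognizable, $\SimpleElementary/{\CellSim^{\Elementary}_{\targetLg}}$ is finite, say of size $N$; and by \cref{theorem:finite_suffix} there is a finite $\SuffixSet^{\star}\subseteq\Elementary$ with ${\CellSim^{\SuffixSet^{\star}}_{\targetLg}}={\CellSim^{\Elementary}_{\targetLg}}$. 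The whole argument then reduces to showing that only finitely many elements are ever added to $\PrefixSet$ and only finitely many to $\SuffixSet$; since every iteration of both the inner (cohesion) loop and the outer loop adds at least one such element or returns, this yields termination, and then by the observation above the returned $\hypothesisA$ recognizes $\targetLg$.

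The monotonicity facts I would use are: (a) $\PrefixSet$ and $\SuffixSet$ only grow; (b) ${\CellSim^{\SuffixSet}_{\targetLg}}\supseteq{\CellSim^{\Elementary}_{\targetLg}}$ always, so the partition of $\SimpleElementary$ by $\CellSim^{\SuffixSet}_{\targetLg}$ has at most $N$ blocks throughout; and (c) a pair of elementary languages separated by $\SuffixSet$ — even only relative to a fixed renaming equation $\Rename$ — stays separated after enlarging $\SuffixSet$, because a renaming equation witnessing re-equivalence over a larger suffix set would already witness it over the smaller one. From these I derive three bounds. First, each element added to $\PrefixSet$ for closedness is $\CellSim^{\SuffixSet}_{\targetLg}$-distinct, hence $\CellSim^{\Elementary}_{\targetLg}$-distinct, from all previous members of $\PrefixSet$, so at most $N$ closedness steps occur. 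Second, each inconsistency step, by the minimality of the chosen $\SuffixSet'$, separates a previously-equivalent pair of members of $\PrefixSet$, strictly refining the $\le N$-block partition, so at most $N-1$ such steps occur. Third — the delicate point — one checks that $\successor[t]{\prefix}\subseteq\exterior[t]{\prefix}$ whenever the timed condition of $\prefix$ contains an equality on a maximal-length sum (``closed at the top''), so exterior-inconsistency is only triggered by ``open at the top'' prefixes, and the continuous successor then moved into $\PrefixSet$ is itself closed at the top and never re-triggers the condition; since every open-at-the-top prefix in $\PrefixSet$ entered via a closedness step, there are at most $N$ exterior-consistency steps. Altogether $|\PrefixSet|\le 2N$ throughout, and the inner loop always terminates.

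For the outer loop I would combine \cref{algorithm:handle_counterexample} with \cref{theorem:row_faithfulness}: as in the text preceding that theorem, row faithfulness locates an index $i$ with $\word_{i-1}\in\targetLg\setdiff\Lg(\hypothesisA)$ and $\word_i\notin\targetLg\setdiff\Lg(\hypothesisA)$, and the returned suffix $\suffix$ satisfies $\prefix'_i\CellSim^{\SuffixSet,\Rename_i}_{\targetLg}\prefix_i$ but $\prefix'_i\NCellSim^{\SuffixSet\cup\{\suffix\},\Rename_i}_{\targetLg}\prefix_i$; in particular $\suffix\notin\SuffixSet$, so $\SuffixSet$ strictly grows, and by fact (c) the equation $\Rename_i$ is \emph{permanently} eliminated as a witness for the pair $(\prefix'_i,\prefix_i)$. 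Since at most $2N$ elementary languages ever appear in $\PrefixSet$, hence boundedly many pairs, and between two fixed simple elementary languages there are only finitely many renaming equations, only finitely many counterexample-driven suffix additions occur before every such pair is either fully separated (again bounded by the $\le N$-block partition) or has exhausted its renaming-equation witnesses. Hence $\SuffixSet$ also stabilizes; when both $\PrefixSet$ and $\SuffixSet$ are stable the table is cohesive and the next equivalence query cannot return a counterexample (that would, via \cref{algorithm:handle_counterexample}, enlarge $\SuffixSet$ again), so it returns $\top$, and \cref{lemma:correct_in_limit} confirms $\Lg(\hypothesisA)=\targetLg$.

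The main obstacle I anticipate is precisely the interaction of the two growth processes under the renaming-equation relaxation of the congruence: because $\elementary\CellSim^{\SuffixSet}_{\targetLg}\elementary'$ requires only \emph{some} witnessing renaming equation, a counterexample need not separate a pair outright but only rule out one equation, so the termination measure must be the (finite) total number of renaming-equation witnesses over all $O(N)$-many prefix pairs, not merely the block count of the congruence. Making the ``open/closed at the top'' dichotomy for continuous successors precise — so that exterior-consistency steps are genuinely bounded rather than chasing an infinite successor chain — is the other place where the real verification lies; both become routine bookkeeping once monotonicity facts (a)–(c) are in hand.
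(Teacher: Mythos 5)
Your proof follows the same skeleton as the paper's: set $N = |\SimpleElementary / {\CellSim^{\Elementary}_{\targetLg}}|$ (finite by \cref{theorem:finiteness}), bound each of the four refinement operations of \cref{algorithm:timedLStar} separately, and close via \cref{lemma:correct_in_limit} (or, equivalently, the observation that once $\PrefixSet$ and $\SuffixSet$ stabilize a further counterexample is impossible). The closedness bound, the discrete-inconsistency bound, and the ``equality on a suffix-sum implies $\successor[t]{\prefix}\subseteq\exterior[t]{\prefix}$'' argument for exterior-consistency match the paper exactly.

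Where you diverge --- and where you are in fact \emph{more} careful than the paper --- is the bound on counterexample-driven suffix additions. The paper's proof asserts that each execution of that line strictly increases $|\SimpleElementary / {\CellSim^{\SuffixSet}_{\targetLg}}|$ and so bounds it by $N$. But, as you observe, \cref{algorithm:handle_counterexample} only yields a witness of $\prefix'_i \NCellSim^{\Elementary, \Rename_i}_{\targetLg} \prefix_i$ for the \emph{particular} renaming equation $\Rename_i$ used to fold the counterexample, while $\CellSim^{\SuffixSet}_{\targetLg}$ is existential over renaming equations; some other equation may still certify $\prefix'_i \CellSim^{\SuffixSet\cup\{\suffix\}}_{\targetLg} \prefix_i$ after the new suffix is added, in which case the block count does not increase. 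Your replacement measure --- the total, over all pairs drawn from $\PrefixSet \cup \successor{\PrefixSet}$, of renaming equations not yet ruled out by $\SuffixSet$ --- strictly decreases at each counterexample step by the monotonicity of $\NCellSim^{\cdot,\Rename}_{\targetLg}$ under enlarging $\SuffixSet$; and since $|\PrefixSet| \leq 2N$ throughout and only finitely many renaming equations exist between any two fixed simple elementary languages, that total is uniformly bounded. This yields termination of the outer loop with a polynomially weaker bound than the paper's claimed $N$, but on this step yours is the argument that is airtight. Both routes reach the same conclusion; yours repairs an imprecision in the paper's own proof.
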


\begin{proof}
 [\cref{theorem:termination}]
 By the recognizability of $\targetLg$ and \cref{theorem:finiteness}, 
 $\SimpleElementary / {\CellSim^{\Elementary}_{\targetLg}}$ is finite.
 Let $N = |\SimpleElementary / {\CellSim^{\Elementary}_{\targetLg}}|$.
 Since each execution of \cref{algorithm:timedLStar:close} adds $\prefix$ to $\PrefixSet$, where $\prefix$ is such that
 for any $\prefix' \in \PrefixSet$, $\prefix \NCellSim^{\Elementary}_{\targetLg} \prefix'$ holds,
 it is executed at most $N$ times.
 Since each execution of \cref{algorithm:timedLStar:discrete_inconsistent}\LongVersion{ strictly} refines $\SuffixSet$,
 \ie{} it increases $|\SimpleElementary / {\CellSim^{\SuffixSet}_{\targetLg}}|$,
 \cref{algorithm:timedLStar:discrete_inconsistent} is executed at most $N$ times.
 For any\LongVersion{ simple elementary language} $\fractionalElementary \in \SimpleElementary$,
 if $\timedCondition$ contains $\sumTimestamp{i}{|u|} = \dConstant$ for some $i \in \{0, 1, \dots, |u|\}$ and $\dConstant \in \N$,
 we have $\successor[t]{\fractionalElementary} \subseteq \exterior[t]{\elementary}$.
 Therefore, \cref{algorithm:timedLStar:exterior_inconsistent} is executed at most $N$ times.
 Since $\SuffixSet$ is strictly refined in \cref{algorithm:timedLStar:add_cex},
 \ie{} it increases $|\SimpleElementary / {\CellSim^{\SuffixSet}_{\targetLg}}|$, \cref{algorithm:timedLStar:add_cex} is executed at most $N$ times.
 By \cref{lemma:correct_in_limit}, once $\CellSim^{\SuffixSet}_{\targetLg}$ saturates to $\CellSim^{\Elementary}_{\targetLg}$, $\FConstructDTA$ returns the correct DTA.\@
 Overall, \cref{algorithm:timedLStar} terminates.
 \qed{}
\end{proof}

\subsection{Learning with a normal teacher}\label{section:symbolic_membership_oracle}

We briefly show how to learn a DTA only with membership and equivalence queries.
See \cref{appendix:detail_symbolic_membership} for detail.
We reduce a symbolic membership query to finitely many membership queries, which can be answered by a normal teacher.

Let $\elementary$ be the elementary language given in a symbolic membership query.
Since $\timedCondition$ is bounded, we can construct a finite and disjoint set of simple and canonical timed conditions $\timedCondition'_1, \timedCondition'_2,\dots, \timedCondition'_n$ satisfying
$\bigvee_{1 \leq i \leq n} \timedCondition'_i = \timedCondition$ by a simple enumeration.
For any simple elementary language $\elementary['] \in \SimpleElementary$ and timed words $\word, \word' \in \elementary[']$, we have $\word \in \Lg \iff \word' \in \Lg$.
Thus, we can construct $\symbolicMemQ{\Lg}(\elementary)$ by making a membership query $\memQ{\word}$ for each such $\elementary['] \subseteq \elementary$ and for some $\word \in \elementary[']$.
We need such an exhaustive search, instead of a binary search, because $\symbolicMemQ{\Lg}(\elementary)$ may be non-convex.

Assume $\timedCondition$ is a canonical timed condition.
Let $M$ be the size of the variables in $\timedCondition$ and $I$ be the largest difference between the upper bound and the lower bound for some $\sumTimestamp{i}{j}$ in $\timedCondition$.
The size $n$ of the above decomposition is bounded by ${(2 \times I + 1)}^{1/2 \times M \times (M + 1)}$, which exponentially blows up with respect to $M$.

 In our algorithm, we only make symbolic membership queries with elementary languages of the form $\prefix \cdot \suffix$, where $\prefix$ and $\suffix$ are simple elementary languages.
Therefore, $I$ is at most 2.
However, even with such an assumption, the number of the necessary membership queries blows up exponentially to the size of the variables in $\timedCondition$.

\subsection{Complexity analysis}\label{section:complexity_analysis}
After each equivalence query, our DTA learning algorithm strictly refines $\SuffixSet$ or terminates.
Thus, the number of equivalence queries is at most $N$.
In the proof of \cref{theorem:termination}, we observe that the size of $\PrefixSet$ is at most $2 N$.
Therefore, the number $(|\PrefixSet| + |\successor{\PrefixSet}|) \times |\SuffixSet|$ of the cells in the timed observation table is at most $(2 N + 2 N \times (|\Alphabet| + 1)) \times N = 2 N^2 |\Alphabet| + 2$.
Let $J$ be the upper bound of $i$ in the analysis of $\cex$ returned by equivalence queries (\cref{algorithm:handle_counterexample}).
For each equivalence query, the number of membership queries in \cref{algorithm:handle_counterexample} is bounded by $\lceil \log J \rceil$, and thus,
it is, in total, bounded by $N \times \lceil \log J \rceil$.
Therefore, if the learner can use symbolic membership queries, the total number of queries is bounded by a polynomial of $N$ and $J$.
In \cref{section:symbolic_membership_oracle}, we observe that the number of membership queries to implement a symbolic membership query is at most exponential to $M$.
Since $\PrefixSet$ is prefix-closed, $M$ is at most $N$.
Overall, if the learner cannot use symbolic membership queries, the total number of queries is at most exponential to $N$.

Let $\targetA = \TAInside$ be a DTA recognizing $\targetLg$.
As we observe in the proof of \cref{lemma:timed_recognizable_to_finite},
$N$ is bounded by the size of the state space of the \LongVersion{RA}\ShortVersion{region automaton~\cite{AD94}} of $\targetA$,
\LongVersion{and} $N$ is at most $|\Clock|!\times 2^{|\Clock|}\times\prod_{\clock \in \Clock} (2 K_{\clock} + 2) \times |\Loc|$, where $K_\clock$ is the largest constant compared with $\clock \in \Clock$ in $\targetA$.
Thus, without symbolic membership queries, the total number of queries is at most doubly-exponential to $|\Clock|$ and singly exponential to $|\Loc|$.
We remark that when $|\Clock| = 1$, the total number of queries is at most singly exponential to $|\Loc|$ and $K_\clock$,
which coincides with the worst-case complexity of the one-clock DTA learning algorithm in~\cite{XAZ22}.

\section{Experiments}\label{section:experiments}

We experimentally evaluated our DTA learning algorithm using our prototype library \ourTool{}\footnote{\ourTool{} is publicly available at \url{https://github.com/masWag/LearnTA}. The artifact of the experiments is available at \url{https://doi.org/10.5281/zenodo.7875383}.} implemented in C++.
In \ourTool{}, the equivalence queries are answered by a zone-based reachability analysis using the fact that DTAs are closed under complement~\cite{AD94}.
We pose the following research questions.
\begin{itemize}
 \item[RQ1] How is the scalability of \ourTool{}\LongVersion{ with respect} to the language complexity?
 \item[RQ2] How is the efficiency of \ourTool{} for practical benchmarks?
\end{itemize}

For the benchmarks with one clock variable, we compared \ourTool{} with one of the latest one-clock DTA learning algorithms~\cite{XAZ22,Leslieaj/DOTALearningSMT}, which we call \DOTA{}.
\DOTA{} is implemented in Python with Z3~\cite{DBLP:conf/tacas/MouraB08} for constraint solving.

For each execution, we measured 
\begin{myitemize}
 \item the number of queries and
 \item the total execution time, including the time to answer the queries.
\end{myitemize}
For the number of queries, we report the number with memoization, \ie{} we count the number of the queried timed words (for membership queries) and the counterexamples (for equivalence queries).
We conducted all the experiments on a computing server with Intel Core i9-10980XE 125 GiB RAM that runs Ubuntu 20.04.5 LTS.\@
We used 3 hours as the timeout.

\begin{table}[tbp]
 \caption{Summary of the results for \Random{}. Each row index $|\Loc|\_|\Alphabet|\_K_{\Clock}$ shows the number of locations, the alphabet size, and the upper bound of the maximum constant in the guards, respectively. The row ``count'' shows the number of instances finished in 3 hours. Cells with the best results are highlighted.}%
 \label{table:experiment_results:random}
 \scriptsize
 \centering
 \begin{tabular}{llrrrrrrrrrr}
\toprule
 &  & \multicolumn{3}{c}{\# of Mem.\ queries} & \multicolumn{3}{c}{\# of Eq.\ queries} & \multicolumn{3}{c}{Exec.\ time [sec.]} & count \\
 &  & max & mean & min & max & mean & min & max & mean & min &  \\
\midrule
\multirow[c]{2}{*}{3\_2\_10} & \ourTool{} & 35,268 & 14,241 & 2,830 & \tbcolor{}11 & \tbcolor{}6 & \tbcolor{}4 & 2.32e+00 & 6.68e-01 & \tbcolor{}4.50e-02 & \tbcolor{}10/10 \\
 & \DOTA{} & \tbcolor{} 468 & \tbcolor{} 205 & \tbcolor{}32 & 13 & 8 & 5 & \tbcolor{}9.58e-01 & \tbcolor{}2.89e-01 & 6.58e-02 & \tbcolor{}10/10 \\
\multirow[c]{2}{*}{4\_2\_10} & \ourTool{} & 194,442 & 55,996 & 10,619 & \tbcolor{}14 & \tbcolor{}7 & \tbcolor{}4 & 2.65e+01 & 7.98e+00 & 4.88e-01 & \tbcolor{}10/10 \\
 & \DOTA{} & \tbcolor{}985 & \tbcolor{}451 & \tbcolor{}255 & 16 & 12 & 7 & \tbcolor{}3.53e-01 & \tbcolor{}2.09e-01 & \tbcolor{}1.27e-01 & \tbcolor{}10/10 \\
\multirow[c]{2}{*}{4\_4\_20} & \ourTool{} & 1,681,769 & 858,759 & 248,399 & \tbcolor{}21 & \tbcolor{}15 & \tbcolor{}10 & 8.34e+03 & 1.41e+03 & 3.23e+01 & 8/10 \\
 & \DOTA{} & \tbcolor{}5,329 & \tbcolor{}3,497 & \tbcolor{}1,740 & 42 & 32 & 26 & \tbcolor{}2.19e+00 & \tbcolor{}1.42e+00 & \tbcolor{}8.27e-01 & \tbcolor{}10/10 \\
\multirow[c]{2}{*}{5\_2\_10} & \ourTool{} & 627,980 & 119,906 & 8,121 & \tbcolor{}19 & \tbcolor{}8 & \tbcolor{}5 & 1.67e+02 & 2.28e+01 & \tbcolor{}1.96e-01 & \tbcolor{}10/10 \\
 & \DOTA{} & \tbcolor{}1,332 & \tbcolor{}876 & \tbcolor{}359 & 22 & 16 & 12 & \tbcolor{}5.20e-01 & \tbcolor{}3.66e-01 & 2.58e-01 & \tbcolor{}10/10 \\
\multirow[c]{2}{*}{6\_2\_10} & \ourTool{} & 555,939 & 106,478 & 2,912 & \tbcolor{}14 & \tbcolor{}9 & \tbcolor{}6 & 2.44e+02 & 2.81e+01 & \tbcolor{}4.40e-02 & \tbcolor{}10/10 \\
 & \DOTA{} & \tbcolor{}3,929 & \tbcolor{}1,894 & \tbcolor{}104 & 35 & 20 & 11 & \tbcolor{}1.72e+00 & \tbcolor{}8.01e-01 & 1.73e-01 & \tbcolor{}10/10 \\
\bottomrule
\end{tabular}

\end{table}
\subsection{RQ1: Scalability with respect to the language complexity}\label{section:scalability}

To evaluate the scalability of \ourTool{}, we used randomly generated DTAs from~\cite{ACZZZ20} (denoted as \Random{}) and our original DTAs (denoted as \unbalanced{}).
\Random{} consists of five classes: $3\_2\_10$,  $4\_2\_10$,  $4\_4\_20$, $5\_2\_10$, and $6\_2\_10$,
where each value of $|\Loc|\_|\Alphabet|\_K_{\Clock}$ is the number of locations, the alphabet size, and the upper bound of the maximum constant in the guards in the DTAs, respectively.
Each class consists of 10 randomly generated DTAs.
\unbalanced{} is our original benchmark inspired by the ``unbalanced parentheses'' timed language from~\cite{ACM02}.
\unbalanced{} consists of five DTAs with different complexity of timing constraints.
\cref{table:experiment_results:unbalanced} summarizes their complexity.

\begin{figure}[tp]
 \begin{subfigure}{.295\linewidth}
  \centering
  \scalebox{0.4}{
\begin{tikzpicture}
\huge
\definecolor{darkgray176}{RGB}{176,176,176}
\definecolor{darkorange25512714}{RGB}{255,127,14}
\definecolor{lightgray204}{RGB}{204,204,204}
\definecolor{steelblue31119180}{RGB}{31,119,180}

\begin{axis}[
legend cell align={left},
legend style={
  fill opacity=0.8,
  draw opacity=1,
  text opacity=1,
  at={(0.03,0.97)},
  anchor=north west,
  draw=lightgray204
},
tick align=outside,
tick pos=left,
x grid style={darkgray176},
xlabel={\# of locations},
xmin=2.85, xmax=6.15,
xtick style={color=black},
y grid style={darkgray176},
ylabel={\# of Mem.\ queries},
ymin=-5780.07, ymax=125891.47,
ytick style={color=black}
]
\addplot [very thick, steelblue31119180, mark=triangle*, mark size=5, mark options={solid}]
table {%
3 14241.7
4 55996.8
5 119906.4
6 106478.5
};
\addlegendentry{\Large \ourTool{}}
\addplot [very thick, darkorange25512714, mark=asterisk, mark size=5, mark options={solid}]
table {%
3 205
4 451
5 876
6 1894.8
};
\addlegendentry{\Large \DOTA{}}
\end{axis}

\end{tikzpicture}}  
  \caption{Membership queries}%
  \label{figure:experiment_results:queries_vs_locations:membership}
 \end{subfigure}
 \hfill
 \begin{subfigure}{.394\linewidth}
  \centering
  \scalebox{0.4}{
\begin{tikzpicture}
\huge
\definecolor{darkgray176}{RGB}{176,176,176}
\definecolor{darkorange25512714}{RGB}{255,127,14}
\definecolor{lightgray204}{RGB}{204,204,204}
\definecolor{steelblue31119180}{RGB}{31,119,180}

\begin{axis}[
legend cell align={left},
legend style={
  fill opacity=0.8,
  draw opacity=1,
  text opacity=1,
  at={(0.91,0.5)},
  anchor=east,
  draw=lightgray204
},
log basis y={10},
tick align=outside,
tick pos=left,
x grid style={darkgray176},
xlabel={\# of locations},
xmin=2.85, xmax=6.15,
xtick style={color=black},
y grid style={darkgray176},
ylabel={\# of Mem.\ queries},
ymin=149.073152557152, ymax=164890.938296727,
ymode=log,
ytick style={color=black},
ytick={10,100,1000,10000,100000,1000000,10000000},
yticklabels={
  \(\displaystyle {10^{1}}\),
  \(\displaystyle {10^{2}}\),
  \(\displaystyle {10^{3}}\),
  \(\displaystyle {10^{4}}\),
  \(\displaystyle {10^{5}}\),
  \(\displaystyle {10^{6}}\),
  \(\displaystyle {10^{7}}\)
}
]
\addplot [very thick, steelblue31119180, mark=triangle*, mark size=5, mark options={solid}]
table {%
3 14241.7
4 55996.8
5 119906.4
6 106478.5
};
\addlegendentry{\Large \ourTool{}}
\addplot [very thick, darkorange25512714, mark=asterisk, mark size=5, mark options={solid}]
table {%
3 205
4 451
5 876
6 1894.8
};
\addlegendentry{\Large \DOTA{}}
\end{axis}

\end{tikzpicture}}  
  \caption{Membership queries (log scale)}%
  \label{figure:experiment_results:queries_vs_locations:membership_log}
 \end{subfigure}
 \hfill
 \begin{subfigure}{.295\linewidth}
  \centering
  \scalebox{0.4}{
\begin{tikzpicture}
\huge
\definecolor{darkgray176}{RGB}{176,176,176}
\definecolor{darkorange25512714}{RGB}{255,127,14}
\definecolor{lightgray204}{RGB}{204,204,204}
\definecolor{steelblue31119180}{RGB}{31,119,180}

\begin{axis}[
legend cell align={left},
legend style={
  fill opacity=0.8,
  draw opacity=1,
  text opacity=1,
  at={(0.03,0.97)},
  anchor=north west,
  draw=lightgray204
},
tick align=outside,
tick pos=left,
x grid style={darkgray176},
xlabel={\# of locations},
xmin=2.85, xmax=6.15,
xtick style={color=black},
y grid style={darkgray176},
ylabel={\# of Eq.\ queries},
ymin=5.89, ymax=21.51,
ytick style={color=black}
]
\addplot [very thick, steelblue31119180, mark=triangle*, mark size=5, mark options={solid}]
table {%
3 6.6
4 7.9
5 8.9
6 9.9
};
\addlegendentry{\Large \ourTool{}}
\addplot [very thick, darkorange25512714, mark=asterisk, mark size=5, mark options={solid}]
table {%
3 8.8
4 12.1
5 16.2
6 20.8
};
\addlegendentry{\Large \DOTA{}}
\end{axis}

\end{tikzpicture}}  
  \caption{Equivalence queries}%
  \label{figure:experiment_results:queries_vs_locations:equivalence}
 \end{subfigure}
 \caption{The number of locations and the number of queries for $|\Loc|\_2\_10$ in \Random{}, where $|\Loc| \in \{3,4,5,6\}$}%
 \label{figure:experiment_results:queries_vs_locations}
\end{figure}
\begin{table}[tbp]
 \caption{Summary of the target DTAs and the results for \unbalanced{}. $|\Loc|$ is the number of locations, $|\Alphabet|$ is the alphabet size, $|\Clock|$ is the number of  clock variables, and $K_{\Clock}$ is the maximum constant in the guards in the DTA.}%
 \label{table:experiment_results:unbalanced}
 \scriptsize
 \centering
 \begin{tabular}{llrrrrrrr}
\toprule
 & & $|\Loc|$ & $|\Alphabet|$ & $|\Clock|$ & $K_{\Clock}$ & \# of Mem.\ queries & \# of Eq.\ queries & Exec.\ time [sec.] \\
\midrule
\unbalanced{}:1 & \ourTool{} & 5 & 1 & 1 & 2 & 51 & 2 & 2.00e-03 \\
\unbalanced{}:2 & \ourTool{} & 5 & 1 & 2 & 4 & 576,142 & 3 & 3.64e+01 \\
\unbalanced{}:3 & \ourTool{} & 5 & 1 & 3 & 4 & 403,336 & 4 & 2.24e+01 \\
\unbalanced{}:4 & \ourTool{} & 5 & 1 & 4 & 6 & 4,142,835 & 5 & 2.40e+02 \\
\unbalanced{}:5 & \ourTool{} & 5 & 1 & 5 & 6 & 10,691,400 & 5 & 8.68e+02 \\
\bottomrule
\end{tabular}

\end{table}
\cref{table:experiment_results:random,figure:experiment_results:queries_vs_locations} summarize the results for \Random{}, and
\cref{table:experiment_results:unbalanced} summarizes the results for \unbalanced{}.
\cref{table:experiment_results:random} shows that \ourTool{} requires more membership queries than \DOTA{}.
This is likely because of the difference in the definition of prefixes and successors:
\DOTA{}'s definitions are discrete (\eg{} prefixes are only with respect to events with time elapse), whereas ours are both continuous and discrete (\eg{} we also consider prefixes by trimming the dwell time in the end);
Since our definition makes significantly more prefixes, LearnTA tends to require much more membership queries.
Another, more high-level reason is that \ourTool{} learns a DTA without knowing the number of the clock variables, and many more timed words are potentially helpful for learning.
\cref{table:experiment_results:random} shows that \ourTool{} requires significantly many membership queries for $4\_4\_20$.
This is likely because of the exponential blowup with respect to $K_{\Clock}$, as discussed in \cref{section:complexity_analysis}.
In \cref{figure:experiment_results:queries_vs_locations}, we observe that for both \ourTool{} and \DOTA{}, the number of membership queries increases nearly exponentially to the number of locations.
This coincides with the discussion in \cref{section:complexity_analysis}.

In contrast, \cref{table:experiment_results:random} shows that \ourTool{} requires fewer equivalence queries than \DOTA{}.
This suggests that the cohesion in \cref{definition:closed_consistent_exterior_consistent} successfully detected contradictions in observation before generating a hypothesis, whereas \DOTA{} mines timing constraints mainly by equivalence queries and tends to require more equivalence queries.
In \cref{figure:experiment_results:queries_vs_locations:equivalence}, we observe that for both \ourTool{} and \DOTA{}, the number of equivalence queries increases nearly linearly to the number of locations.
This also coincides with the complexity analysis in \cref{section:complexity_analysis}.
\cref{figure:experiment_results:queries_vs_locations:equivalence} also shows that the number of equivalence queries increases faster in \DOTA{} than in \ourTool{}.

\cref{table:experiment_results:unbalanced} also suggests a similar tendency: the number of membership queries rapidly increases to the complexity of the timing constraints; In contrast, the number of equivalence queries increases rather slowly.
Moreover, \ourTool{} is scalable enough to learn a DTA with five clock variables within 15 minutes.

\cref{table:experiment_results:random} also suggests that \ourTool{} does not scale well to the maximum constant in the guards, as observed\LongVersion{ in the analysis} in \cref{section:complexity_analysis}.
However, we still observe that \ourTool{} requires fewer equivalence queries than \DOTA{}.
Overall, compared with \DOTA{}, \ourTool{} has better scalability in the number of equivalence queries and worse scalability in the number of membership queries.

\subsection{RQ2: Performance on practical benchmarks}\label{section:practicality}

\begin{table}[tbp]
 \caption{Summary of the target DTA and the results for practical benchmarks. The columns are the same as \cref{table:experiment_results:unbalanced}. Cells with the best results are highlighted.}%
 \label{table:experiment_results:practical}
 \footnotesize
 \centering
 \begin{tabular}{llrrrrrrr}
\toprule
 &  & $|\Loc|$ & $|\Alphabet|$ & $|\Clock|$ & $K_{\Clock}$ & \# of Mem.\ queries & \# of Eq.\ queries & Exec.\ time [sec.] \\
\midrule
\multirow[c]{2}{*}{\AKM{}} & \ourTool{} & 17 & 12 & 1 & 5 & 12,263 & \tbcolor{}11 & \tbcolor{}5.85e-01 \\
 & \DOTA{} & 17 & 12 & 1 & 5 & \tbcolor{}3,453 & 49 & 7.97e+00 \\
\multirow[c]{2}{*}{\CAS{}} & \ourTool{} & 14 & 10 & 1 & 27 & 66,067 & \tbcolor{}17 & \tbcolor{}4.65e+00 \\
 & \DOTA{} & 14 & 10 & 1 & 27 & \tbcolor{}4,769 & 18 & 9.58e+01 \\
\multirow[c]{2}{*}{\light{}} & \ourTool{} & 5 & 5 & 1 & 10 & 3,057 & \tbcolor{}7 & \tbcolor{}3.30e-02 \\
 & \DOTA{} & 5 & 5 & 1 & 10 & \tbcolor{}210 & 7 & 9.32e-01 \\
\multirow[c]{2}{*}{\PC{}} & \ourTool{} & 26 & 17 & 1 & 10 & 245,134 & \tbcolor{}23 & \tbcolor{}6.49e+01 \\
 & \DOTA{} & 26 & 17 & 1 & 10 & \tbcolor{}10,390 & 29 & 1.24e+02 \\
\multirow[c]{2}{*}{\TCP{}} & \ourTool{} & 22 & 13 & 1 & 2 & 11,300 & \tbcolor{}15 & \tbcolor{}3.82e-01 \\
 & \DOTA{} & 22 & 13 & 1 & 2 & \tbcolor{}4,713 & 32 & 2.20e+01 \\
\multirow[c]{2}{*}{\Train{}} & \ourTool{} & 6 & 6 & 1 & 10 & 13,487 & \tbcolor{}8 & \tbcolor{}1.72e-01 \\
 & \DOTA{} & 6 & 6 & 1 & 10 & \tbcolor{}838 & 13 & 1.13e+00 \\
\FDDI{} & \ourTool{} & 16 & 5 & 7 & 6 & \tbcolor{}9,986,271 & \tbcolor{}43 & \tbcolor{}3.00e+03 \\
\bottomrule
\end{tabular}

\end{table}
To evaluate the practicality of \ourTool{}, we used seven benchmarks: \AKM{}, \CAS{}, \light{}, \PC{}, \TCP{}, \Train{}, and \FDDI{}. 
\cref{table:experiment_results:practical} summarizes their complexity.
All the benchmarks other than \FDDI{} are taken from~\cite{XAZ22} (or its implementation~\cite{Leslieaj/DOTALearningSMT}).
\FDDI{} is taken from TChecker~\cite{ticktac-project/tchecker}.
We use the instance of \FDDI{} with two processes.

\cref{table:experiment_results:practical} summarizes the results for the benchmarks from practical applications.
We observe, again, that \ourTool{} requires more membership queries and fewer equivalence queries than \DOTA{}.
However, for these benchmarks, the difference in the number of membership queries tends to be much smaller than in \Random{}.
This is because these benchmarks have simpler timing constraints than \Random{} for the exploration by \ourTool{}.
In \AKM{}, \light{}, \PC{}, \TCP{}, and \Train{}, the clock variable can be reset at every edge without changing the language\LongVersion{, \ie{} there are \emph{real-time automata}~\cite{DBLP:journals/jalc/Dima01} recognizing the same language}.
For such a DTA, all simple elementary languages are equivalent in terms of the Nerode-style congruence\LongVersion{ in \cref{section:timed_distinguishing_suffixes}} if we have the same edge at their last event and the same dwell time after it.
If two simple elementary languages are equivalent, \ourTool{} explores the successors of only one of them, and the exploration is relatively efficient.
We have a similar situation in \CAS{}.
Moreover, in many of these DTAs, only a few edges have guards.
Overall, despite the large number of locations and alphabets, these languages' complexities are mild for \ourTool{}.

We also observe that, surprisingly, for all of these benchmarks, \ourTool{} took a shorter time for DTA learning than \DOTA{}.
This is partly because of the difference in the implementation language (\ie{} C++ vs. Python) but also because of the small number of equivalence queries and the mild number of membership queries.
Moreover, although it requires significantly more queries, \ourTool{} successfully learned \FDDI{} with seven clock variables.
Overall, such efficiency on benchmarks from practical applications suggests the potential usefulness of \ourTool{} in some realistic scenarios. 
\section{Conclusions and future work}\label{section:conclusion}

Extending the L* algorithm, we proposed an active learning algorithm for DTAs.
Our extension is by our Nerode-style congruence for recognizable timed languages.
We proved the termination and the correctness of our algorithm.
We also proved that our learning algorithm requires a polynomial number of queries with a smart teacher and an exponential number of queries with a normal teacher.
Our experiment results also suggest the practical relevance of our algorithm.

One of the future directions is to extend more recent automata learning algorithms (\eg{} TTT algorithm~\cite{IHS14} to improve the efficiency) to DTA learning.
Another direction is constructing a \emph{passive} DTA learning algorithm based on our congruence and an existing passive DFA learning algorithm.
It is also a future direction to apply our learning algorithm for practical usage, \eg{} identification of black-box systems and testing black-box systems with black-box checking~\cite{PVY99,MP19,Waga20}.
Optimization of the algorithm, \eg{} by incorporating clock information is also a future direction.

\subsubsection*{Acknowledgements.}

This work is partially supported
    by
    JST ACT-X Grant No.\ JPMJAX200U,
    JST PRESTO Grant No.\ JPMJPR22CA,
    JST CREST Grant No.\ JPMJCR2012, and
    JSPS KAKENHI Grant No.\ 22K17873.


\newpage
\bibliographystyle{splncs04}
\bibliography{dblp_refs}
\ifdefined\VersionLong
\newpage
\appendix

\section{Omitted proofs}

\subsection{Proof of \cref{theorem:adequacy_equivalence}}

Here, we prove \cref{theorem:adequacy_equivalence}. First, we show the following lemma.

\begin{lemma}%
 \label{lemma:membership_renaming}
 Let $\Lg \subseteq \TimedWords$ be a timed language and let $\elementary, \elementary['], \elementary[''] \in \Elementary$ be elementary languages.
 For any timed words $\word \in \elementary$, $\word' \in \elementary[']$, and $\word'' \in \elementary['']$,
 if we have
 $\rename{\symbolicMemQ{\Lg}(\elementary \cdot \elementary[''])}{\Rename} \land \timedCondition' \Rightarrow \symbolicMemQ{\Lg}(\elementary['] \cdot \elementary[''])$,
 $\timeValuation{\word}, \timeValuation{\word'} \models \Rename$, and
 $\word \cdot \word'' \in \Lg$,
 we also have $\word' \cdot \word'' \in \Lg$.
\end{lemma}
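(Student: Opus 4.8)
The plan is to unfold the definition of symbolic membership (\cref{def:symbolic_membership}) at both ends and to transport a witness valuation through the supplied implication. Fix $\word \in \elementary$, $\word' \in \elementary[']$ and $\word'' \in \elementary['']$ satisfying the three hypotheses. Since $\word \cdot \word'' \in \elementary \cdot \elementary['']$ and $\word \cdot \word'' \in \Lg$, \cref{def:symbolic_membership} gives $\timeValuation{\word \cdot \word''} \models \symbolicMemQ{\Lg}(\elementary \cdot \elementary[''])$. Symmetrically, because $\word' \cdot \word'' \in \elementary['] \cdot \elementary['']$, it suffices to establish $\timeValuation{\word' \cdot \word''} \models \symbolicMemQ{\Lg}(\elementary['] \cdot \elementary[''])$; \cref{def:symbolic_membership} then yields $\word' \cdot \word'' \in \Lg$, which is the conclusion.

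To derive the latter from the hypothesised implication $\rename{\symbolicMemQ{\Lg}(\elementary \cdot \elementary[''])}{\Rename} \land \timedCondition' \Rightarrow \symbolicMemQ{\Lg}(\elementary['] \cdot \elementary[''])$, I would assemble a single valuation $\bar v$ over the union of the (sum-)variable sets of $\elementary$, $\elementary[']$ and $\elementary['']$ — in the naming of \cref{def:symbolic_membership,definition:renaming}, where the merged timestamp at each concatenation boundary is the sum of the two adjacent variables — such that $\bar v$ restricts to $\timeValuation{\word}$ on the variables of $\elementary$, to $\timeValuation{\word'}$ on those of $\elementary[']$, and to $\timeValuation{\word''}$ on those of $\elementary['']$. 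Such a $\bar v$ is well defined precisely because the three antecedent conjuncts share only the variables of $\elementary['']$, and on those $\timeValuation{\word \cdot \word''}$, $\timeValuation{\word' \cdot \word''}$ and $\timeValuation{\word''}$ all agree.

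Then I would check the three conjuncts of the antecedent: $\bar v \models \symbolicMemQ{\Lg}(\elementary \cdot \elementary[''])$ because this constraint mentions only the variables of $\elementary$ and $\elementary['']$, on which $\bar v$ coincides with $\timeValuation{\word \cdot \word''}$; $\bar v \models \Rename$ directly by the hypothesis $\timeValuation{\word}, \timeValuation{\word'} \models \Rename$; and $\bar v \models \timedCondition'$ since $\word' \in \elementary['] = (u', \timedCondition')$ gives $\timeValuation{\word'} \models \timedCondition'$. Applying the implication yields $\bar v \models \symbolicMemQ{\Lg}(\elementary['] \cdot \elementary[''])$; restricting $\bar v$ to the variables of $\elementary['] \cdot \elementary['']$, which returns exactly $\timeValuation{\word' \cdot \word''}$, we obtain $\timeValuation{\word' \cdot \word''} \models \symbolicMemQ{\Lg}(\elementary['] \cdot \elementary[''])$, hence $\word' \cdot \word'' \in \Lg$.

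The main obstacle I anticipate is the bookkeeping around the variable conventions for concatenated elementary languages: one must pin down, once and for all, how the sum variables of $\elementary \cdot \elementary['']$ and of $\elementary['] \cdot \elementary['']$ relate to those of the factors $\elementary$, $\elementary[']$, $\elementary['']$ — in particular that the boundary timestamp $\tau_{|u|} + \tau''_0$ (respectively $\tau'_{|u'|} + \tau''_0$) is treated uniformly — so that the assembled valuation $\bar v$ genuinely evaluates all three conjuncts and its restriction recovers $\timeValuation{\word' \cdot \word''}$. Once this interface is fixed, everything else is a routine application of \cref{def:symbolic_membership} and the given implication.
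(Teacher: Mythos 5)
Your proposal is correct and follows essentially the same route as the paper's proof: both start from $\word \cdot \word'' \in \Lg$ to obtain satisfaction of $\symbolicMemQ{\Lg}(\elementary \cdot \elementary[''])$, adjoin satisfaction of $\Rename$ (from the hypothesis) and of $\timedCondition'$ (from $\word' \in \elementary[']$), apply the given implication, and then project out the $\timeVariables$-variables to conclude $\timeValuation{\word'}, \timeValuation{\word''} \models \symbolicMemQ{\Lg}(\elementary['] \cdot \elementary[''])$. The only difference is presentational: you make the joint valuation $\bar v$ and the boundary-timestamp convention explicit, whereas the paper writes $\timeValuation{\word}, \timeValuation{\word'}, \timeValuation{\word''} \models \cdots$ and leaves that bookkeeping implicit.
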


\begin{proof}
 Since we have $\word \cdot \word'' \in \Lg$, $\timeValuation{\word}, \timeValuation{\word''} \models \symbolicMemQ{\Lg}(\elementary \cdot \elementary[''])$ holds.
 Since we have $\timeValuation{\word}, \timeValuation{\word'} \models \Rename$,
 we have $\timeValuation{\word}, \timeValuation{\word'}, \timeValuation{\word''} \models \rename{\symbolicMemQ{\Lg}(\elementary \cdot \elementary[''])}{\Rename}$.
 Since $\word' \in \elementary[']$, we have $\timeValuation{\word'} \models \timedCondition'$, and
 we have $\timeValuation{\word}, \timeValuation{\word'}, \timeValuation{\word''} \models \rename{\symbolicMemQ{\Lg}(\elementary \cdot \elementary[''])}{\Rename} \land \timedCondition'$.
 Because of $\rename{\symbolicMemQ{\Lg}(\elementary \cdot \elementary[''])}{\Rename} \land \timedCondition' \imply \symbolicMemQ{\Lg}(\elementary['] \cdot \elementary[''])$,
 we have
 $\timeValuation{\word}, \timeValuation{\word'}, \timeValuation{\word''} \models \symbolicMemQ{\Lg}(\elementary['] \cdot \elementary[''])$.
 Since $\timeValuation{\word}$ is over $\timeVariables$ and
 $\symbolicMemQ{\Lg}(\elementary['] \cdot \elementary[''])$ is over $\timeVariables'$ and $\timeVariables''$,
 %
 %
 we have
 $\timeValuation{\word'}, \timeValuation{\word''} \models \symbolicMemQ{\Lg}(\elementary['] \cdot \elementary[''])$.
 Therefore, we have $\word' \cdot \word'' \in \Lg$
\end{proof}

The following proves \cref{theorem:adequacy_equivalence}.

\recallResult{theorem:adequacy_equivalence}{\adequacyEquivalenceStatement}

\begin{proof}
 Let $\Rename$ be the renaming equation satisfying $\elementary \CellSub^{\elementary[''], \Rename}_{\Lg} \elementary[']$.
 By the definition of $\elementary \CellSub^{\elementary[''], \Rename}_{\Lg} \elementary[']$,
 we have
 $\rename{\symbolicMemQ{\Lg}(\elementary \cdot \elementary[''])}{\Rename} \land \timedCondition' $ if and only if $\symbolicMemQ{\Lg}(\elementary['] \cdot \elementary['']) \land \Rename \land \timedCondition$.
 By \cref{lemma:membership_renaming},
 for any $\word \in \elementary$, $\word' \in \elementary[']$, and $\word'' \in \elementary['']$
 satisfying $\timeValuation{\word}, \timeValuation{\word'} \models \Rename$,
 we have $\word \cdot \word'' \in \Lg$ if and only if $\word' \cdot \word'' \in \Lg$.
 By the definition of $\elementary \CellSub^{\elementary[''], \Rename}_{\Lg} \elementary[']$,
 for any $\word \in \elementary$, there is $\word' \in \elementary[']$ satisfying $\timeValuation{\word}, \timeValuation{\word} \models \Rename$.
 Therefore, for any $\word \in \elementary$, there is $\word' \in \elementary[']$ such that for any $\word'' \in \elementary['']$, we have $\word \cdot \word'' \in \Lg \iff \word' \cdot \word'' \in \Lg$.
 %
 %
\end{proof}
\subsection{Proof of \cref{theorem:finiteness}}\label{section:proof:theorem:finiteness}

\cref{theorem:finiteness} is immediately obtained from the following lemmas.

\newcommand{\TimedRecognizableToFinite}{%
 For any recognizable timed language $\Lg \subseteq \TimedWords$, the quotient set $\SimpleElementary / {\CellSim^{\Elementary}_{\Lg}}$ is finite.
}
\begin{lemma}%
 \label{lemma:timed_recognizable_to_finite}
 \TimedRecognizableToFinite{}\qed{}
\end{lemma}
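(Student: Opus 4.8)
The plan is to push each simple elementary language forward into a finite amount of information read off from a DTA that recognises $\Lg$, and to show that simple elementary languages with the same image are $\CellSim^{\Elementary}_{\Lg}$-equivalent; finiteness of $\SimpleElementary / {\CellSim^{\Elementary}_{\Lg}}$ is then immediate. First I would fix, using recognisability of $\Lg$ and the construction of~\cite{MP04}, a DTA $\A$ with $\Lg(\A) = \Lg$, and recall the region equivalence $\regionEquiv_{\maxConstant}$ over clock valuations~\cite{AD94} (with $\maxConstant_\clock$ the largest constant compared with $\clock$ in $\A$), which has finitely many classes. The structural heart of the argument is the claim that for every $\prefix = \elementary \in \SimpleElementary$, either no $\word \in \prefix$ admits a run of $\A$, or there are a single location $\loc_{\prefix}$ and a single map $\clock \mapsto i^{\prefix}_{\clock} \in \{0,\dots,|u|\}$ such that every $\word \in \prefix$ admits a run reaching $\loc_{\prefix}$ with a clock valuation $\cval$ satisfying $\cval(\clock) = \sumTimestamp{i^{\prefix}_{\clock}}{|u|}$ evaluated at $\timeValuation{\word}$, for each $\clock \in \Clock$. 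This is proved by induction along $u$: since $\A$ is deterministic and the guards are comparisons with integers, simplicity and canonicity of $\timedCondition$ force the same edge to be taken at every step along $\prefix$ (each $\sumTimestamp{i}{j}$ lies in an open unit interval or is exactly an integer, so every guard has a fixed truth value over $\prefix$), hence each clock is reset at a fixed position $i^{\prefix}_{\clock}$; \cref{proposition:order_uniqueness} then gives that the valuations $\cval$ so obtained lie in one $\regionEquiv_{\maxConstant}$-class and agree exactly on the clocks whose value does not exceed $\maxConstant_{\clock}$.

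With this in hand I would set $\Psi(\prefix)$ to be a distinguished symbol $\bot$ in the first case and the pair $(\loc_{\prefix}, \varrho_{\prefix})$ in the second, where $\varrho_{\prefix}$ is the common $\regionEquiv_{\maxConstant}$-class; since $\Loc$ is finite and $\regionEquiv_{\maxConstant}$ has finitely many classes, $\Psi$ has finite range. It remains to prove that $\Psi(\prefix_1) = \Psi(\prefix_2)$ implies $\prefix_1 \CellSim^{\Elementary}_{\Lg} \prefix_2$, i.e.\ that a single renaming equation $\Rename$ witnesses $\prefix_1 \CellSim^{\Elementary, \Rename}_{\Lg} \prefix_2$ in the sense of \cref{definition:equivalence_row}. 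If $\Psi(\prefix_1) = \Psi(\prefix_2) = \bot$ I take $\Rename = \top$: no $\word \in \prefix_i$, and hence no extension $\word \cdot \word''$, admits a run, so $\symbolicMemQ{\Lg}(\prefix_i \cdot \suffix)$ is unsatisfiable for every suffix and both defining equivalences hold trivially. Otherwise, writing $\prefix_i = (u_i, \timedCondition_i)$ and using $\loc_{\prefix_1} = \loc_{\prefix_2}$, I take
$\Rename = \bigwedge_{\clock} \bigl( \sumTimestamp{i^{\prefix_1}_{\clock}}{|u_1|} = \sumTimestamp[']{i^{\prefix_2}_{\clock}}{|u_2|} \bigr)$,
the conjunction ranging over the clocks whose reached value is at most $\maxConstant_{\clock}$. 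By the structural claim, for every $\word_1 \in \prefix_1$ there is $\word_2 \in \prefix_2$ with $\timeValuation{\word_1}, \timeValuation{\word_2} \models \Rename$; such a matched pair reaches states of $\A$ at the same location that agree on all clocks below their thresholds and have every other clock above $\maxConstant_{\clock}$. Since guards and invariants of $\A$ only compare clocks with constants up to $\maxConstant_{\clock}$, and a clock above its threshold can only stay above it (under time elapse) or be reset to $0$ (in both runs simultaneously, by determinism), an easy induction on the length of $\word''$ shows the two runs stay in lockstep on every timed continuation, so $\word_1 \cdot \word'' \in \Lg \iff \word_2 \cdot \word'' \in \Lg$ for all $\word''$, and symmetrically. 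Unwinding \cref{def:symbolic_membership} (so that $\symbolicMemQ{\Lg}(\cdot)$ cuts out exactly the $\Lg$-members of an elementary language) together with the renamed conjunction then converts this word-by-word indistinguishability into the equivalence of $\rename{\symbolicMemQ{\Lg}(\prefix_1 \cdot \suffix)}{\Rename} \land \timedCondition_2$ with $\symbolicMemQ{\Lg}(\prefix_2 \cdot \suffix) \land \Rename \land \timedCondition_1$ for every $\suffix \in \Elementary$ — this is a converse to \cref{theorem:adequacy_equivalence} tailored to our explicitly chosen $\Rename$ — which is precisely $\prefix_1 \CellSim^{\Elementary, \Rename}_{\Lg} \prefix_2$.

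Finally, $\Psi$ factors through an injection $\SimpleElementary / {\CellSim^{\Elementary}_{\Lg}} \hookrightarrow \operatorname{range}(\Psi)$, so the quotient is finite (and, incidentally, bounded by the number of states of the region automaton of $\A$, as used in the complexity analysis). The hard part will be the structural claim and the last step: one must carefully verify that the ``last reset position'' of each clock is genuinely uniform over all words of a simple elementary language — including the boundary behaviour of invariants, where simplicity of $\timedCondition$ is what makes admitting a run an all-or-nothing property — and that $\Rename$ relates precisely the clock coordinates below the relevant constants while leaving the others free, since otherwise the existential condition in $\CellSub$ or the symbolic-membership equality can fail. The remaining bookkeeping (finiteness of $\operatorname{range}(\Psi)$ and the factoring argument) is routine.
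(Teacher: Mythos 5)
Your proof is correct and is essentially the paper's own argument: the paper likewise fixes a DTA for $\Lg$, assigns to each simple elementary language the unique region-automaton state it reaches (taking $\A$ complete so your $\bot$ case does not arise), builds the renaming equation from the $\sumTimestamp{i}{|u|}$-terms giving the clocks whose reached values are at most their maximal constants, and bounds $|\SimpleElementary / {\CellSim^{\Elementary}_{\Lg}}|$ by the number of region-automaton states. One cosmetic slip in your closing sentence: you established $\Psi(\prefix_1)=\Psi(\prefix_2)\Rightarrow\prefix_1\CellSim^{\Elementary}_{\Lg}\prefix_2$, i.e.\ $\ker\Psi$ refines ${\CellSim^{\Elementary}_{\Lg}}$, which yields $|\SimpleElementary/{\CellSim^{\Elementary}_{\Lg}}|\le|\operatorname{range}(\Psi)|$ directly — but since you did not show $\Psi$ constant on ${\CellSim^{\Elementary}_{\Lg}}$-classes, $\Psi$ does not literally ``factor through an injection from the quotient.''
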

\begin{lemma}%
 \label{lemma:finite_to_timed_recognizable}
 For any timed language $\Lg \subseteq \TimedWords$, if the quotient set $\SimpleElementary / {\CellSim^{\Elementary}_{\Lg}}$ is finite, $\Lg$ is recognizable.
\end{lemma}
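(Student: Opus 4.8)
The plan is to reduce this direction to \cref{lemma:correct_in_limit}. That result says that from a \emph{cohesive} timed observation table $(\PrefixSet,\SuffixSet,\Table)$ with $\CellSim^{\SuffixSet}_{\Lg}={\CellSim^{\Elementary}_{\Lg}}$ the construction \FConstructDTA{} returns a DTA $\hypothesisA$ with $\Lg(\hypothesisA)=\Lg$, and the language of a DTA is a recognizable timed language~\cite{MP04}; so it suffices, under the hypothesis that $N:=|\SimpleElementary/{\CellSim^{\Elementary}_{\Lg}}|$ is finite, to build such a table. I would do this in two steps: (a)~produce a \emph{finite} suffix set $\SuffixSet_0\subseteq\Elementary$ realising the full congruence, i.e.\ $\CellSim^{\SuffixSet_0}_{\Lg}={\CellSim^{\Elementary}_{\Lg}}$; then (b)~run the closedness/consistency/exterior-consistency completion of \cref{algorithm:timedLStar} from $\SuffixSet=\SuffixSet_0$ and $\PrefixSet=\{(\styleact{\emptyword},\timestamp_0=0)\}$ until the table becomes cohesive.

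Step~(b) terminates: the completion only ever enlarges $\SuffixSet$, and enlarging $\SuffixSet$ can only refine ${\CellSim^{\SuffixSet}_{\Lg}}$ while ${\CellSim^{\Elementary}_{\Lg}}$ always refines it, so the induced congruence stays equal to ${\CellSim^{\Elementary}_{\Lg}}$ throughout. Then the counting in the proof of \cref{theorem:termination} applies verbatim: each closedness fix adds a $\CellSim^{\Elementary}_{\Lg}$-inequivalent element to $\PrefixSet$; each consistency fix strictly refines ${\CellSim^{\SuffixSet}_{\Lg}}$ (hence, since it is already maximal, never fires, but in any case at most $N$ times); and each exterior-consistency fix adds to $\PrefixSet$ a continuous successor whose timed condition then contains an equation $\sumTimestamp{i}{|u|}=\dConstant$ and so cannot recur. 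Thus only finitely many fixes occur, and when none applies the table is cohesive by \cref{definition:closed_consistent_exterior_consistent}, so \cref{lemma:correct_in_limit} gives $\Lg=\Lg(\hypothesisA)$ for the resulting DTA $\hypothesisA$, whence $\Lg$ is recognizable.

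Step~(a) is the real content. The key point is that between two fixed untimed words $u,u'$ there are only finitely many renaming equations, since each is a conjunction drawn from the finite set of atoms $\sumTimestamp{i}{|u|}=\sumTimestamp[']{i'}{|u'|}$. So for each ordered pair of distinct classes of $\SimpleElementary/{\CellSim^{\Elementary}_{\Lg}}$, fixing representatives, and for each of the finitely many renaming equations between them that does \emph{not} witness equivalence, I would pick one elementary language in $\Elementary$ (with a renaming) exhibiting the failure, and let $\SuffixSet_0$ be the union of all of these; it is finite because $N$ is. The inclusion ${\CellSim^{\Elementary}_{\Lg}}\subseteq{\CellSim^{\SuffixSet_0}_{\Lg}}$ is automatic, and for the converse one must show that if $\prefix\CellSim^{\SuffixSet_0}_{\Lg}\prefix'$ then any renaming between their words surviving $\SuffixSet_0$ already survives all of $\Elementary$, so $\prefix\CellSim^{\Elementary}_{\Lg}\prefix'$. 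I expect this to be the main obstacle: $\CellSim^{\SuffixSet}_{\Lg}$ quantifies existentially over a \emph{single} renaming that must serve every suffix at once, so one has to argue that a renaming is essentially pinned down by how it aligns clock values --- something the trivial suffix $(\styleact{\emptyword},\timestamp'_0=0)$ already exposes, forcing $\symbolicMemQ{\Lg}$ to see the reset positions --- and that the relation transports correctly between non-representative members of a class, which needs a transitivity-closure argument. We cannot shortcut via \cref{theorem:finite_suffix}, which would be circular.

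As an alternative to routing through \cref{lemma:correct_in_limit}, one can build the witnessing triple directly, mimicking \FConstructDTA{} in \cref{algorithm:DTA_construction} with $\SuffixSet$ taken to be all of $\Elementary$: saturate $\PrefixSet$ under successors modulo ${\CellSim^{\Elementary}_{\Lg}}$ (finite by hypothesis, and made exterior-consistent as in step~(b)), set $\RecognizableFinal=\bigcup\{\prefix\in\PrefixSet\mid\prefix\subseteq\Lg\}$, and for each immediate exterior leaving $\PrefixSet$ record a tuple $(u,\timedCondition,u',\timedCondition',\Rename)$ whose $\Rename$ witnesses the relevant equivalence, obtaining a chronometric relational morphism $\sem{\CRM}$ in the sense of \cref{def:chronometric_relational_morphism}. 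Then \cref{theorem:adequacy_equivalence}, applied with the trivial suffix and iterated along $\sem{\CRM}$ using prefix-closedness of $\PrefixSet$, shows that $(\word,\word')\in\sem{\CRM}$ implies $\word\in\Lg\iff\word'\in\Lg$, whence $\Lg=\{\word\mid\exists\word'\in\RecognizableFinal,(\word,\word')\in\sem{\CRM}\}$, the defining form of a recognizable timed language. For this route one additionally needs the sub-lemma that finiteness of the quotient makes every simple elementary language either contained in or disjoint from $\Lg$, so that $\RecognizableFinal$ is well-defined and $\sem{\CRM}$ is compatible with it.
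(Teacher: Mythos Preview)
Your alternative route is essentially the paper's proof. The paper picks a finite $\PrefixSet$ of representatives of $\SimpleElementary/{\CellSim^{\Elementary}_{\Lg}}$, augments it to a prefix-closed $\tilde{\PrefixSet}$ containing enough time successors, sets $\RecognizableFinal=\{\elementary\in\tilde{\PrefixSet}\mid\elementary\subseteq\Lg\}$, and lets $\CRM$ collect all tuples $(\elementaryInside,\elementaryInside['],\Rename)$ with $\elementary\in\exterior{\tilde{\PrefixSet}}$, $\elementary[']\in\tilde{\PrefixSet}$, and $\elementary\CellSim^{\Elementary,\Rename}_{\Lg}\elementary[']$. That is your direct construction, and the sub-lemma you flag (each simple $\prefix$ is wholly inside or wholly outside $\Lg$) is likewise left implicit in the paper's sketch.

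Your primary route through \cref{lemma:correct_in_limit} is logically available---that lemma's proof does not assume recognizability of $\targetLg$---but the gap you identify in step~(a) is real and you do not close it. Picking a witnessing suffix only for each pair of \emph{representatives} $(\prefix_1,\prefix_2)$ and each renaming between their particular variable sets need not force $\prefix\NCellSim^{\SuffixSet_0}_{\Lg}\prefix'$ for non-representative $\prefix,\prefix'$ in distinct classes: the renaming equations between $\prefix$ and $\prefix'$ involve different sums $\sumTimestamp{i}{|u|}$ than those between $\prefix_1$ and $\prefix_2$, so none of your chosen suffixes need kill them. Patching this by transitivity requires showing that $\CellSim^{\SuffixSet_0}_{\Lg}$ is transitive, which is not immediate given the existential quantifier over renaming equations. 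The cleaner fix, avoiding representatives altogether, is to iteratively add to $\SuffixSet$ any suffix that strictly increases $|\SimpleElementary/{\CellSim^{\SuffixSet}_{\Lg}}|$; this halts after at most $N$ steps at a finite $\SuffixSet_0$ with ${\CellSim^{\SuffixSet_0}_{\Lg}}={\CellSim^{\Elementary}_{\Lg}}$. This is exactly the argument behind \cref{theorem:finite_suffix}, and since that argument uses only the finiteness of $N$ (your hypothesis here), reusing it is not circular; only quoting the theorem as stated would be, because its hypothesis is recognizability.
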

\begin{proof}
 [sketch]
 Let $\PrefixSet \subseteq \SimpleElementary$ be a finite set of simple elementary languages representing $\SimpleElementary / {\CellSim^{\Elementary}_{\Lg}}$, \ie{} for any $\prefix \in \SimpleElementary$, there is $\prefix' \in \PrefixSet$ satisfying $\prefix \CellSim^{\Elementary}_{\Lg} \prefix'$.
 We construct $\tilde{\PrefixSet}$ by augmenting $\PrefixSet$ as follows:
 1) for each $\elementary \in \PrefixSet$, we add all its prefixes to $\tilde{\PrefixSet}$ so that $\tilde{\PrefixSet}$ is prefix-closed;
 2) for each $\elementary \in \PrefixSet$ such that $\timedCondition$ is a simple and canonical timed condition if $\timedCondition$ does not contain equalities and $\PrefixSet$ does not contain any of its time successors $\elementary[']$, \ie{}
 $\elementary['] \neq \elementary$ such that
 there are $\word \in \elementary$ and $t \in \Rp$ satisfying $\word \cdot t \in \elementary[']$,
 we add such $\elementary[']$ with the shortest dwell time to $\tilde{\PrefixSet}$.
 We note that $\tilde{\PrefixSet}$ is also a finite set.
 Let $\RecognizableFinal = \{\elementary \in \tilde{\PrefixSet} \mid \elementary \subseteq \Lg \}$.
 Let $\CRM = \{(\elementaryInside, \elementaryInside['], \Rename) \mid \elementary \in \exterior{\tilde{\PrefixSet}}, \elementary['] \in \tilde{\PrefixSet}, \elementary \CellSim^{\Elementary, \Rename}_{\Lg} \elementary[']\}$.
 Then, $\Lg$ is equal to the recognizable language defined by $\tilde{\PrefixSet}$, $\RecognizableFinal$, and $\CRM$.
\end{proof}

\subsubsection{Proof of \cref{lemma:timed_recognizable_to_finite}}\label{section:proof:lemma:timed_recognizable_to_finite}

First, we review \emph{regions} and \emph{region automata}~\cite{AD94}.
\DefinitionRegions{}

For a timed word $\word$ and concrete states $\TTSstate, \TTSstate' \in \TTSState$, we write $\TTSstate \xrightarrow{\word} \TTSstate'$ if there is a run from $\TTSstate$ to $\TTSstate'$ associated with $\word$.

The proof of \cref{lemma:timed_recognizable_to_finite} is as follows.

\recallResult{lemma:timed_recognizable_to_finite}{\TimedRecognizableToFinite}

\begin{proof}
 [sketch]
 Let $\TAWithInside$ be a DTA recognizing $\Lg$, let $\TTSWithInside$ be the TTS of $\A$, let $\regionAutomWithInside$ be the RA of $\A$, and
 for each $\clock \in \Clock$, let $\maxConstant_{\clock}$ be the maximum constant compared with $\clock$ in $\A$.
 Without loss of generality, we can assume that $\A$ is complete, \ie{} for any $\TTSstate \in \TTSState$ and $\action \in \Alphabet$, there is $\TTSstate' \in \TTSState$ satisfying $\TTSstate \TTStransitionRelWithLabel{\action} \TTSstate'$.

 For a simple elementary language $\elementary$,
 we denote $\initRegionState \xrightarrow{\elementary} \regionState$ for $\regionState \in \RegionState$ such that some $\TTSstate \in \regionState$ is reachable by some $\word \in \elementary$.
 For any $\regionState \in \RegionState$ and
 for any simple elementary languages $\elementary$ and $\elementary[']$ satisfying
 $\initRegionState \xrightarrow{\elementary} \regionState$ and $\initRegionState \xrightarrow{\elementary[']} \regionState$,
 let $\Rename$ be the renaming equation that contains
 $\sumTimestamp{i}{|u|} = \sumTimestamp[']{j}{|u'|}$
 if 
 there is a clock $\clock \in \Clock$ whose value in $\regionState$ is equal to $\sumTimestamp{i}{|u|}$ and $\sumTimestamp[']{j}{|u'|}$, and
 the value of $\clock$ is smaller than or equal to $K_{\clock}$.
 For such a renaming equation $\Rename$,
 we have $\elementary \CellSim^{\Elementary, \Rename}_{\Lg} \elementary[']$.
 Therefore, for any such $\elementary$ and $\elementary[']$, $\elementary \CellSim^{\Elementary}_{\Lg} \elementary[']$ holds, and thus,
 for any $\elementary,\elementary[']$, if they load to the same state of the RA,
 we have $\elementary \CellSim^{\Elementary}_{\Lg} \elementary[']$.
 Since the state space of the RA is finite,
 we have $|\SimpleElementary / {\CellSim^{\Elementary}_{\Lg}}| \leq |\RegionState|$.
\end{proof}

\subsection{Proof of \cref{theorem:finite_suffix}}

The proof of \cref{theorem:finite_suffix} is as follows.

\recallResult{theorem:finite_suffix}{\FiniteSuffix}

\begin{proof}
 By the definition of $\CellSim^{\SuffixSet}_{\Lg}$,
 for any $\SuffixSet \subseteq \SuffixSet'$,
 ${\CellSim^{\SuffixSet'}_{\Lg}}$ is finer than ${\CellSim^{\SuffixSet}_{\Lg}}$, and
 we have $|\SimpleElementary / {\CellSim^{\SuffixSet}_{\Lg}}| \leq |\SimpleElementary / {\CellSim^{\SuffixSet'}_{\Lg}}|$.
 Let $\SuffixSet \subseteq \Elementary$ be such that
 $|\SimpleElementary / {\CellSim^{\SuffixSet}_{\Lg}}| < |\SimpleElementary / {\CellSim^{\Elementary}_{\Lg}}|$.
 Since  $|\SimpleElementary / {\CellSim^{\SuffixSet}_{\Lg}}| < |\SimpleElementary / {\CellSim^{\Elementary}_{\Lg}}|$,
 there are $\elementary,\elementary['] \in \SimpleElementary$ satisfying
 $\elementary \CellSim^{\SuffixSet}_{\Lg} \elementary[']$ and $\elementary \NCellSim^{\Elementary}_{\Lg} \elementary[']$.
 Let $\timeVariables$ and $\timeVariables'$ be the domain of $\timedCondition$ and $\timedCondition'$, respectively.
 By definition of $\elementary \NCellSim^{\Elementary}_{\Lg} \elementary[']$,
 for any renaming equation $\Rename$ over $\timeVariables$ and $\timeVariables'$ satisfying
 $\forall \word \in \elementary.\, \exists \word' \in \elementary['].\, \timeValuation{\word},\timeValuation{\word'} \models \Rename$,
 there is $\suffix \in \Elementary \setminus \SuffixSet$ satisfying
 $\rename{\symbolicMemQ{\Lg}(\elementary \cdot \suffix)}{\Rename} \land \timedCondition' \centernot\iff \symbolicMemQ{\Lg}(\elementary['] \cdot \suffix) \land \Rename \land \timedCondition$.
 Therefore, for each $\Rename$, by adding such $\suffix$ to $\SuffixSet$, we obtain $\SuffixSet' \supsetneq \SuffixSet$ satisfying
 $|\SimpleElementary / {\CellSim^{\SuffixSet}_{\Lg}}| < |\SimpleElementary / {\CellSim^{\SuffixSet'}_{\Lg}}|$.
 Since there are only finitely many renaming equations $\Rename$ over $\timeVariables$ and $\timeVariables'$,
 such refinement increase $\SuffixSet$ only finitely.

 Let a sequence $\SuffixSet_0 \subsetneq \SuffixSet_1 \subsetneq \dots$ of elementary languages
 such that $\SuffixSet_0 = \emptyset$ and each $\SuffixSet_i$ is obtained by the above refinement of $\SuffixSet_{i-1}$.
 By \cref{lemma:timed_recognizable_to_finite}, $|\SimpleElementary / {\CellSim^{\Elementary}_{\Lg}}|$ is finite, and
 such a sequence is finite.
 Therefore, there is $n \in \N$ such that $\SimpleElementary / {\CellSim^{\SuffixSet_n}_{\Lg}} = \SimpleElementary / {\CellSim^{\Elementary}_{\Lg}}$.
 Moreover, by the construction of $\SuffixSet_i$, each $\SuffixSet_i$ is finite, and thus, $\SuffixSet_n$ is a finite set of elementary languages satisfying ${\CellSim^{\Elementary}_{\Lg}} = {\CellSim^{\SuffixSet_n}_{\Lg}}$.
\end{proof}




\subsection{Uniqueness of the fractional part}

Here, we show the following property used in \cref{section:fractional_elementary}.

\begin{proposition}
 \label{proposition:order_uniqueness}
 For any simple elementary language $\elementary \in \SimpleElementary$ and
 for any timed words
 $\word, \word' \in \elementary$ such that
 $\wordWithInside$ and $\word' = \timestamp'_0 \action_1 \timestamp'_1 \action'_2 \dots \action_{n} \timestamp'_{n}$,
 for any $i,j \in \{0,1,\dots,n\}$, we have the following, where
 $\sumTimestamp{i}{n} = \sum_{{k = i}}^{n} \timestamp_k$ and
 $\sumTimestamp[']{i}{n} = \sum_{{k = i}}^{n} \timestamp'_k$.
 \begin{itemize}
  \item $\fractional(\sumTimestamp{i}{n}) = 0 \iff \fractional(\sumTimestamp[']{i}{n}) = 0$
  \item $\fractional(\sumTimestamp{i}{n}) \leq \fractional(\sumTimestamp{j}{n}) \iff \fractional(\sumTimestamp[']{i}{n}) \leq \fractional(\sumTimestamp[']{j}{n})$
 \end{itemize}
\end{proposition}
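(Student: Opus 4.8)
The plan is to exploit simplicity of $\timedCondition$ in order to reduce every fractional part of a partial sum, and every comparison between two such fractional parts, to data that is determined by $\timedCondition$ alone (hence shared by $\word$ and $\word[']$). Concretely, since $\timedCondition$ is simple, for every pair $a \leq b$ it contains either $\dConstant < \sumTimestamp{a}{b} < \dConstant + 1$ or $\sumTimestamp{a}{b} = \dConstant$ for some $\dConstant \in \N$; in either case the integer part $\lfloor \sumTimestamp{a}{b}\rfloor$ equals this $\dConstant$ for \emph{every} timed word in $\elementary$, and $\fractional(\sumTimestamp{a}{b})$ is either identically $0$ (the equality case) or always strictly between $0$ and $1$ (the open-interval case). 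Write $\dConstant_{a,b}$ for this common integer part. The first bullet is then immediate: $\fractional(\sumTimestamp{i}{n}) = 0$ holds for $\word$ iff $\timedCondition$ constrains $\sumTimestamp{i}{n}$ by an equality, which does not depend on the chosen word, so it holds iff $\fractional(\sumTimestamp[']{i}{n}) = 0$.

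For the second bullet I would fix indices and assume $i \leq j$ without loss of generality: it suffices to show that the full trichotomy ($<$, $=$, $>$) between $\fractional(\sumTimestamp{i}{n})$ and $\fractional(\sumTimestamp{j}{n})$ is determined by $\timedCondition$, since this settles the biconditional for both $(i,j)$ and $(j,i)$. The key identity is $\sumTimestamp{i}{n} = \sumTimestamp{i}{j-1} + \sumTimestamp{j}{n}$, where $\sumTimestamp{i}{j-1} = \sum_{k=i}^{j-1}\timestamp_k$ is the empty sum $0$ when $i = j$ and otherwise a genuine partial sum covered by the first paragraph. Writing $f_{a,b} := \fractional(\sumTimestamp{a}{b}) \in [0,1)$ and comparing integer and fractional parts in this identity gives $\dConstant_{i,n} - \dConstant_{i,j-1} - \dConstant_{j,n} = f_{i,j-1} + f_{j,n} - f_{i,n} \in (-1,2)$; since the left-hand side is an integer it lies in $\{0,1\}$. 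Call it $c$; it is determined by the $\dConstant$'s, hence by $\timedCondition$. From $f_{i,n} - f_{j,n} = f_{i,j-1} - c$ I then read off the trichotomy: if $c = 1$ then $f_{i,n} < f_{j,n}$ (because $f_{i,j-1} < 1$); if $c = 0$ and $\timedCondition$ pins $\sumTimestamp{i}{j-1}$ to an integer then $f_{i,n} = f_{j,n}$; and if $c = 0$ and $\sumTimestamp{i}{j-1}$ is confined to an open unit interval then $f_{i,n} > f_{j,n}$. In each case the outcome depends only on $\timedCondition$, so the same case applies to $\word[']$, which yields $\fractional(\sumTimestamp{i}{n}) \leq \fractional(\sumTimestamp{j}{n}) \iff \fractional(\sumTimestamp[']{i}{n}) \leq \fractional(\sumTimestamp[']{j}{n})$, and likewise with $i$ and $j$ interchanged.

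I do not expect a genuine obstacle; the proof is essentially bookkeeping, and the one point deserving care is the reduction step itself — spelling out that simplicity of $\timedCondition$ really does pin down $\dConstant_{a,b}$ and the dichotomy ``$f_{a,b} \equiv 0$'' versus ``$f_{a,b} \in (0,1)$ always'' \emph{uniformly} over $\elementary$, so that the carry $c$ and the three cases above are word-independent. It is worth noting that this statement is exactly what justifies speaking of \emph{the} total order $\fractionalCondition_{\elementary}$ on $0$ and the fractional parts $\fractional(\sumTimestamp{i}{n})$ used in \cref{section:fractional_elementary}: the proposition says precisely that the induced total preorder is constant on $\elementary$. The degenerate index $i = j$ should be acknowledged but is trivial via $\sumTimestamp{i}{j-1} = 0$.
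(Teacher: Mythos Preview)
Your proof is correct and follows essentially the same approach as the paper's: both arguments exploit simplicity of $\timedCondition$ to pin down the integer parts $\dConstant_{a,b}$ uniformly, then use the additive identity $\sumTimestamp{i}{n} = \sumTimestamp{i}{j-1} + \sumTimestamp{j}{n}$ and compare integer parts to determine the trichotomy between fractional parts purely from data in $\timedCondition$. Your formulation via the carry $c \in \{0,1\}$ is a cleaner packaging of the paper's case split ``$c = a - b$ versus $c = a - b - 1$,'' and your use of $\sumTimestamp{i}{j-1}$ is more careful than the paper's $\sumTimestamp{i}{j}$ (which has a minor index slip), but the substance is identical.
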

\begin{proof}
 If we have $\fractional(\sumTimestamp{i}{n}) = 0$,
 since $\timedCondition$ is simple and canonical,
 $\timedCondition$ implies $\sumTimestamp{i}{n} = \dConstant$ for some $\dConstant \in \N$.
 Since $\word' \in \elementary$, we also have $\fractional(\sumTimestamp[']{i}{n}) = 0$.

 Assume we have $\fractional(\sumTimestamp{i}{n}) \leq \fractional(\sumTimestamp{j}{n})$.
 If we have $\fractional(\sumTimestamp{i}{n}) = 0$,
 by the first part of the property, 
 we have $\fractional(\sumTimestamp[']{i}{n}) = 0$ and
 $\fractional(\sumTimestamp[']{i}{n}) \leq \fractional(\sumTimestamp[']{j}{n})$ holds.
 If we have $\fractional(\sumTimestamp{j}{n}) = 0$,
 by $\fractional(\sumTimestamp{i}{n}) \leq \fractional(\sumTimestamp{j}{n})$, we have 
 $\fractional(\sumTimestamp{i}{n}) = 0$, and 
 $\fractional(\sumTimestamp[']{i}{n}) \leq \fractional(\sumTimestamp[']{j}{n})$ holds.
 If we have $\fractional(\sumTimestamp{i}{j}) = 0$,
 since $\timedCondition$ is simple and canonical, 
 we also have $\fractional(\sumTimestamp[']{i}{j}) = 0$.
 Therefore, we have 
 $\fractional(\sumTimestamp{i}{n}) = \fractional(\sumTimestamp{j}{n})$ and
 $\fractional(\sumTimestamp[']{i}{n}) = \fractional(\sumTimestamp[']{j}{n})$.
 If none of $\fractional(\sumTimestamp{i}{n})$, $\fractional(\sumTimestamp{j}{n})$, and, $\fractional(\sumTimestamp{i}{j})$ is zero,
 let $a,b,c \in \N$ be such that $\sumTimestamp{i}{n} \in (a, a + 1)$,
 $\sumTimestamp{j}{n} \in (b, b + 1)$, and
 $\sumTimestamp{i}{j} \in (c, c + 1)$.
 Clearly, we have either $c = a - b$ or $c = a - b - 1$.
 If we have $c = a - b$, we have
 $\fractional(\sumTimestamp{i}{n}) < \fractional(\sumTimestamp{j}{n})$ and $\fractional(\sumTimestamp[']{i}{n}) < \fractional(\sumTimestamp[']{j}{n})$.
 Otherwise, we have
 $\fractional(\sumTimestamp{i}{n}) > \fractional(\sumTimestamp{j}{n})$ and $\fractional(\sumTimestamp[']{i}{n}) > \fractional(\sumTimestamp[']{j}{n})$.
 Overall, we have $\fractional(\sumTimestamp{i}{n}) \leq \fractional(\sumTimestamp{j}{n}) \iff \fractional(\sumTimestamp[']{i}{n}) \leq \fractional(\sumTimestamp[']{j}{n})$.
 \qed{}
\end{proof}

\subsection{Proof of \cref{theorem:row_faithfulness}}

First, we show the following lemma on successors.

\begin{lemma}%
 \label{lemma:successor-jump}
 Let $\TimedObsTableInside$ be a closed, consistent, and exterior-consistent timed observation table.
 For any $\prefix \in \successor{\PrefixSet}$,
 there is $\prefix' \in \PrefixSet$ such that
 $\prefix \CellSim^{\SuffixSet}_{\targetLg} \prefix'$ and
 for any $\word \in \prefix$,
 there is $\word' \in \prefix'$ satisfying $(\word, \word') \in \sem{\CRM}$, where $\CRM$ is the chronometric relational morphism constructed in \cref{algorithm:DTA_construction}.
\end{lemma}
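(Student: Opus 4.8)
The plan is to distinguish two cases according to whether $\prefix$ already belongs to $\PrefixSet$, and in the remaining case to read the witnessing $\prefix'$ and renaming equation off the construction of $\CRM$ in \cref{algorithm:DTA_construction}.

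If $\prefix \in \PrefixSet$, I would take $\prefix' \coloneqq \prefix$. Then $\prefix \CellSim^{\SuffixSet}_{\targetLg} \prefix'$ holds trivially (via the identity renaming equation), and every $\word \in \prefix$ lies in the chronometric, prefix-closed timed language $\bigcup_{\prefix_1 \in \PrefixSet} \prefix_1$ that underlies the construction, so $(\word, \word) \in \sem{\CRM}$ by the first clause of \cref{def:chronometric_relational_morphism}; thus $\word$ itself, regarded as an element of $\prefix' = \prefix$, is the required $\word'$.

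Now suppose $\prefix \in \successor{\PrefixSet} \setminus \PrefixSet$. By definition of $\successor{\PrefixSet}$ I would fix $\prefix_0 \in \PrefixSet$ and a direction $\bullet \in \{t\} \cup \Alphabet$ with $\prefix = \successor[\bullet]{\prefix_0}$. Since $\successor[\bullet]{\prefix_0} = \prefix \notin \PrefixSet$, \cref{algorithm:DTA_construction} processes this pair and inserts into $\CRM$ a tuple $(\elementaryInside, \elementaryInside['], \Rename)$ with $\elementary = \exterior[\bullet]{\prefix_0}$, $\elementary['] = \prefix'$ for some $\prefix' \in \PrefixSet$, and $\Rename$ picked so that $\prefix = \successor[\bullet]{\prefix_0} \CellSim^{\SuffixSet, \Rename}_{\targetLg} \prefix'$; this already yields $\prefix \CellSim^{\SuffixSet}_{\targetLg} \prefix'$, the first conclusion. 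I would then observe $\prefix \subseteq \elementary$: when $\bullet = \action \in \Alphabet$, the discrete successor $\successor[\action]{\prefix_0}$ and the discrete immediate exterior $\exterior[\action]{\prefix_0}$ are literally the same elementary language (same word $u \cdot \action$, same timed condition $\timedCondition \land \timestamp_{|u|+1} = 0$), so $\prefix = \elementary$; when $\bullet = t$, exterior-consistency of the table (\cref{definition:closed_consistent_exterior_consistent}), which applies precisely because $\successor[t]{\prefix_0} = \prefix \notin \PrefixSet$, gives $\prefix = \successor[t]{\prefix_0} \subseteq \exterior[t]{\prefix_0} = \elementary$. Finally, given $\word \in \prefix$, the relation $\prefix \CellSub^{\SuffixSet, \Rename}_{\targetLg} \prefix'$ provides $\word' \in \prefix' = \elementary[']$ with $\timeValuation{\word}, \timeValuation{\word'} \models \Rename$; since $\word \in \prefix \subseteq \elementary$ this means $(\word, \word') \in \sem{\CRMTuple}$, and since moreover $\word \in \prefix \subseteq \exterior[\bullet]{\prefix_0} \subseteq \exterior{\prefix_0} \subseteq \exterior{\PrefixSet}$ and $\word' \in \PrefixSet$, the second clause of \cref{def:chronometric_relational_morphism} yields $(\word, \word') \in \sem{\CRM}$, completing the second conclusion with the same $\prefix'$.

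The two facts that actually carry the argument are (i) that the discrete successor of a simple elementary language coincides with its discrete immediate exterior, and (ii) that exterior-consistency is exactly what promotes the continuous successor $\successor[t]{\prefix_0}$ to a subset of $\exterior[t]{\prefix_0}$, the language used in the $\CRM$-tuple; the rest is routine bookkeeping about unfolding the definitions of $\successor{\cdot}$, $\sem{\CRM}$ and $\CellSub$. The only subtlety I would flag is that \cref{algorithm:DTA_construction} commits, through an exhaustive search, to one particular $\prefix'$ and one particular $\Rename$ per processed successor; since the lemma only asserts existence of such data, whatever choice the algorithm makes is acceptable, and that single $\prefix'$ discharges both halves of the statement at once.
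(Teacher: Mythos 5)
Your proof is correct and follows essentially the same route as the paper's: handle $\prefix\in\PrefixSet$ by reflexivity of $\sem{\CRM}$, and otherwise use exterior-consistency to place $\prefix$ inside the $\elementary$-component of the $\CRM$-tuple chosen in \cref{algorithm:DTA_construction}, then read the witnessing $\word'$ off $\prefix\CellSub^{\SuffixSet,\Rename}_{\targetLg}\prefix'$. The one place you add useful detail beyond the paper is the observation that for a discrete step $\successor[\action]{\prefix_0}$ and $\exterior[\action]{\prefix_0}$ are literally the same language, so exterior-consistency is only genuinely needed for the continuous step $\successor[t]{\prefix_0}\subseteq\exterior[t]{\prefix_0}$; the paper invokes exterior-consistency for both cases without remarking on this.
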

\begin{proof}
 If $\prefix \in \PrefixSet$, for any $\word \in \prefix$, we have $(\word, \word) \in \sem{\CRM}$.
 Assume $\prefix \notin \PrefixSet$.
 Let $\tilde{\prefix} \in \PrefixSet$ satisfying $\prefix \in \successor{\tilde{\prefix}}$.
 Since the timed observation table is exterior-consistent,
 there is $\elementary[^{\mathrm{ext}}] \in \exterior{\tilde{\prefix}}$ satisfying
 $\prefix \subseteq \elementary[^{\mathrm{ext}}]$.
 By closedness of the timed observation table and by the construction of $\CRM$,
 there are $\prefix' = \fractionalElementary['] \in \PrefixSet$ and a renaming equation $\CellRel$ satisfying
 $\prefix \CellSim^{\SuffixSet,\CellRel}_{\targetLg} \prefix'$,
 and
 $(\elementaryInside[^{\mathrm{ext}}], \elementaryInside['], \Rename) \in \CRM$. 
 By $\prefix \CellSim^{\SuffixSet,\CellRel}_{\targetLg} \prefix'$,
 for any $\word \in \prefix$,
 there is $\word' \in \elementary[']$ satisfying $\timeValuation{\word}, \timeValuation{\word'} \models \Rename$.
 Therefore, we have $(\word, \word') \in \sem{\CRM}$.
\end{proof}

The proof of \cref{theorem:row_faithfulness} is as follows.

\recallResult{theorem:row_faithfulness}{\rowFaithfulnessStatement}


\begin{proof}
 Let $(\PrefixSet,\RecognizableFinal,\CRM)$ be the recognizable timed language we construct in \cref{algorithm:DTA_construction}.
 Since $\prefix$ is a simple elementary language,
 for any recognizable timed language $\Lg$,
 we have either $\prefix \subseteq \Lg$ or $\prefix \cap \Lg = \emptyset$.

 If $\prefix \in \PrefixSet$,
 by the construction of $\RecognizableFinal$ in \cref{algorithm:DTA_construction},
 $\prefix \subseteq \RecognizableFinal$ holds if and only if 
 we have $\prefix \subseteq \targetLg$.
 By definition of recognizable timed languages,
 we have $\prefix \subseteq \Lg(\hypothesisA)$ if and only if $\prefix \subseteq \targetLg$.
 Therefore,
 we have $\targetLg \cap \prefix = \Lg(\hypothesisA) \cap \prefix$.

 If $\prefix \in \successor{\PrefixSet} \setminus \PrefixSet$,
 we let $\prefix' \in \PrefixSet$ be such that
 $\prefix \CellSim^{\SuffixSet}_{\targetLg} \prefix'$ and
 for any $\word \in \prefix$,
 there is $\word' \in \prefix'$ satisfying $(\word, \word') \in \sem{\CRM}$.
 The existence of such $\prefix'$ is guaranteed by \cref{lemma:successor-jump}.
 By the construction of $\hypothesisA$,
 for such $\word \in \prefix$ and $\word' \in \prefix'$,
 we have
 $\word \in \Lg(\hypothesisA) \iff \word' \in \Lg(\hypothesisA)$.
 Since $\prefix \CellSim^{\SuffixSet}_{\targetLg} \prefix'$,
 we have $\prefix \cap \targetLg = \emptyset \iff \prefix' \cap \targetLg = \emptyset$.
 Since $\prefix$ and $\prefix'$ are simple elementary languages, we have
 $\targetLg \cap \prefix = \Lg(\hypothesisA) \cap \prefix$.
 %
 %
 %
\end{proof}
\subsection{Proof of \cref{lemma:correct_in_limit}}

The proof of \cref{lemma:correct_in_limit} is as follows.

\recallResult{lemma:correct_in_limit}{\correctnessInLimitStatement{}}

\begin{proof}
 Let $\word \in \TimedWords$ be a timed word.
 Let $\hypothesisA = \FConstructDTA{\TimedObsTableInsideInside}$.
 Let $\word_0, \word_1, \dots, \word_n$, $\word'_1, \word'_2, \dots, \word'_n$, $\word''_1, \word''_2, \dots, \word''_n$, $\overline{\word}_1, \overline{\word}_2, \dots, \overline{\word}_n$, be timed words,
 let $\prefix_1, \prefix_2,\dots, \prefix_n \in \PrefixSet$, and $\prefix'_1, \prefix'_2,\dots, \prefix'_n \in \successor{\PrefixSet}$ be simple elementary languages, and
 let $\Rename_1, \Rename_2, \dots, \Rename_n$ be renaming equations satisfying the following properties.
 We note that the above exists for some $n > 0$ because the timed observation table is closed.
 \begin{itemize}
 \item $\word_0 = \word$
 \item $\word''_n = \varepsilon$
 \item For each $i \in \{1, 2, \dots, n\}$, we have
        $\word'_i \in \prefix'_i$,
        $\overline{\word}_i \in \prefix_i$,
        $\word_{i-1} = \word'_i \cdot \word''_i$,
        $\word_i = \overline{\word}_i \cdot \word''_i$,
        $\prefix'_i \CellSim^{\SuffixSet, \Rename_i}_{\targetLg} \prefix_i$, and
        $\timeValuation{\word'_i}, \timeValuation{\overline{\word}_i} \models \Rename_i$.
 \end{itemize}
 By the construction of the hypothesis DTA $\hypothesisA$, we have $\word \in \Lg(\hypothesisA)$ if and only if $\word_n \in \Lg(\hypothesisA)$.
 By \cref{lemma:membership_renaming}, for each $i \in \{1, 2, \dots, n\}$, we have $\word_{i-1} \in \Lg(\hypothesisA)$ if and only if $\word_{i-1} \in \Lg(\hypothesisA)$.
 By \cref{theorem:row_faithfulness}, we have $\word_n \in \Lg(\hypothesisA)$ if and only if $\word_n \in \targetLg$.
 Overall, we have $\word \in \targetLg$ if and only if $\word \in \Lg(\hypothesisA)$.
 Therefore, we have $\targetLg = \Lg(\hypothesisA)$.
\end{proof}

\section{Detail of the algorithm}

\subsection{Detail of symbolic membership}\label{appendix:detail_symbolic_membership}

\begin{algorithm}[tbp]
 \caption{Outline of the construction of symbolic membership}%
 \label{algorithm:symbolic_membership}
 \DontPrintSemicolon{}
 \newcommand{\myCommentFont}[1]{\texttt{\footnotesize{#1}}}
 \SetCommentSty{myCommentFont}
 \SetKwFunction{FConstructTable}{ConstructTable}
 \Input{An elementary language $\elementary$ and the function $\memQKey$ mapping a timed word $\word$ to its membership to a timed language $\Lg$}
 \Output{The symbolic membership query $\guard$ of $\elementary$ in $\Lg$}
 \Fn{$\symbolicMemQ{\Lg}(\elementary)$} {
     $\guard \gets \bot$\;
     \ForEach{simple and canonical $\timedCondition' \subseteq \timedCondition$} {%
         \label{algorithm:symbolic_membership:enumerate}
         \tcp{The result is independent of the choice of $\word$}
         \KwPick{} $\word \in (u, \timedCondition')$\;
         \If{$\memQ{\word} = \top$} {
             $\guard \gets \guard \lor \timedCondition'$\;
         }
     }
     \KwReturn{} $\guard$
 }
\end{algorithm}

We show the detail of our algorithm to obtain the symbolic membership using finite membership queries.
\cref{algorithm:symbolic_membership} shows an outline of our construction.
In \cref{algorithm:symbolic_membership},
we enumerate all the simple and canonical timed conditions $\timedCondition' \subseteq \timedCondition$.
Such enumeration can be done, \eg{} by a depth-first-search:
each depth corresponds to $(i,j)$ satisfying $0 \leq i \leq j \leq |u|$ and
at each branch, we pick $\sumTimestamp{i}{j} \in (k, k + 1)$ or $\sumTimestamp{i}{j} = k$ for some $k$.
Such an exhaustive trial (instead of, \eg{} a binary search) is necessary because $\symbolicMemQ{\Lg}(\elementary)$ is potentially non-convex.

\begin{example}
 \begin{figure}[tbp]
  \begin{subfigure}[b]{.5\textwidth}
   \centering
   \begin{tikzpicture}[node distance=2.0cm,on grid,auto,scale=.8,every node/.style={transform shape},,every initial by arrow/.style={initial text={}}]
    \node[state,initial] (init)  {$\loc_0$};

    \node[state] (a) [right=of init] {$\loc_1$};

    \node[state] (aa_1) [above right=of a] {$\loc_2$};

    \node[state] (aa_2) [below right=of a] {$\loc_3$};

    \node[state] (aaa_1_1) [right=of aa_1] {$\loc_4$};

    \node[state,accepting] (aaa_1_2) [below right=of aa_1] {$\loc_5$};

    \node[state,accepting] (aaa_2_1) [above right=of aa_2] {$\loc_5$};

    \node[state] (aaa_2_2) [right=of aa_2] {$\loc_6$};

    \path[->]
    (init) edge  [above] node {$a / y \coloneqq 0$} (a)
    (a) edge  [above left] node {$a, x < 1$} (aa_1)
    (a) edge  [below left] node {$a, x \geq 1$} (aa_2)
    (aa_1) edge  [above] node {$a, y < 1$} (aaa_1_1)
    (aa_1) edge  [below left] node {$a, y \geq 1$} (aaa_1_2)
    (aa_2) edge  [above left] node {$a, y < 1$} (aaa_2_1)
    (aa_2) edge  [above] node {$a, y \geq 2$} (aaa_2_2)
    ;
   \end{tikzpicture}
   \caption{DTA recognizing $\Lg$}%
   \label{figure:non_convex_symbolic_membership:DTA}
  \end{subfigure}
  \begin{subfigure}[b]{.5\textwidth}
   \centering
   \begin{tikzpicture}[node distance=2.0cm,on grid,auto,scale=1.15,every initial by arrow/.style={initial text={}}]
    \draw (1.0, -0.1) -- (1.0, 0.1);
    \draw (-0.1, 1.0) -- (0.1, 1.0);
    \draw[very thick,-stealth] (0.0, -0.1) -- (0.0, 2.1);
    \draw[very thick,-stealth] (-0.1, 0.0) -- (2.1, 0.0);
    \draw[thick] (0.0, 1.0) -- (0.95, 1.0);
    \draw[thick] (1.0, 1.05) -- (1.0, 2.0);
    \draw (1,1) circle (0.05);
    \node at (0.2,-0.2) {$0$};
    \node at (2.05,-0.2) {$\sumTimestamp{0}{1}$};
    \node at (-0.35,2.05) {$\sumTimestamp{1}{2}$};
    \node at (1.0,-0.2) {$1.0$};
    \node at (-0.35,1.0) {$1.0$};
    \fill[pattern=north west lines] (0,1) -- (1,1) -- (1,2) -- (0,2) -- (0,1);
    \fill[pattern=north west lines] (1,0) -- (2,0) -- (2,1) -- (1,1) -- (1,0);
   \end{tikzpicture}
   \caption{$\symbolicMemQ{\Lg}(\elementary)$ projected to $\sumTimestamp{0}{1}$ and $\sumTimestamp{1}{2}$}
  \end{subfigure}
  \caption{Example of non-convex symbolic membership $\symbolicMemQ{\Lg}(\elementary)$, where $u = aaa$ and $\timedCondition = \bigwedge_{0 < i \leq j \leq 2, (i,j) \not\in \{(0,1),(1,2)\}} \sumTimestamp{i}{j} \in (0,1) \land \sumTimestamp{0}{1} \in (0,2) \land \sumTimestamp{1}{2} \in (0,2)$}%
  \label{figure:non_convex_symbolic_membership}
 \end{figure}
 Let $u = aaa$ and $\timedCondition = \bigwedge_{0 < i \leq j \leq 2, (i,j) \not\in \{(0,1),(1,2)\}} \sumTimestamp{i}{j} \in (0,1) \land \sumTimestamp{0}{1} \in (0,2) \land \sumTimestamp{1}{2} \in (0,2)$.
 Let $\Lg$ be the timed language recognized by the DTA in \cref{figure:non_convex_symbolic_membership:DTA}.
 The accepting location $\loc_5$ is reachable by following $\loc_0 \to \loc_1 \to \loc_2 \to \loc_5$ or $\loc_0 \to \loc_1 \to \loc_3 \to \loc_5$, and the union of the corresponding timing constraints along these paths are non-convex.
 Therefore, the symbolic membership is non-convex, as shown in \cref{figure:non_convex_symbolic_membership}.
\end{example}

The number of membership queries used in \cref{algorithm:symbolic_membership} is the same as the number of the simple and canonical timed conditions $\timedCondition' \subseteq \timedCondition$, which exponentially blows up to the number of variables in $\timedCondition$.
In our DTA learning algorithm, \cref{algorithm:symbolic_membership} is used only for elementary languages of the form $\elementary \cdot \elementary[']$,
where $\elementary$ and $\elementary[']$ are simple.
The following example shows that even with such restriction, the number of the necessary membership queries may blow up exponentially.

\begin{example}
 Let $\timedCondition$, and $\timedCondition'$ be
 $\timedCondition = \bigwedge_{0 \leq i \leq j \leq n} \sumTimestamp{i}{j} \in (0,1)$ and
 $\timedCondition' = \bigwedge_{0 \leq i \leq j \leq n'} \sumTimestamp[']{i}{j} \in (0,1)$.
 Let $\timedCondition \cdot \timedCondition'$ be the timed condition such that
 $\timedCondition \cdot \timedCondition' = \{\cval \cdot \cval' \mid \cval \in \timedCondition, \cval' \in \timedCondition'\}$.
 For any simple and canonical timed condition $\timedCondition'' \subseteq \timedCondition \cdot \timedCondition'$ and for each $0 \leq i \leq j \leq n + n' - 1$,
 we have $\sumTimestamp['']{i}{j} \in (0,1)$, $\sumTimestamp['']{i}{j} = 1$, or $\sumTimestamp['']{i}{j} \in (1,2)$.
 Moreover, if we have $\sumTimestamp['']{i}{j} = 1$ or $\sumTimestamp['']{i}{j} \in (1,2)$,
 for any $i'$ and $j'$ satisfying $0 \leq i' \leq i$, $j \leq j' \leq n + n' - 1$, and $(i, j) \neq (i', j')$,
 we have $\sumTimestamp['']{i'}{j'} \in (1,2)$.
 Therefore, we can underapproximate the number of the simple and canonical timed conditions $\timedCondition'' \subseteq \timedCondition \cdot \timedCondition'$ by
 counting the possible combination of such boundaries, \ie{} the pair $(i, j)$ satisfying $\sumTimestamp['']{i}{j} = 1$ or $\sumTimestamp['']{i}{j} \in (1,2)$.
 If there are no such boundaries, we have $\sumTimestamp{i}{j} \in (0,1)$ for any pair $(i,j)$, and the number of the corresponding $\timedCondition''$ is one.
 Let $m > 0$ be the number of the boundaries.
 For indices $i, j$ corresponding to one of such boundaries satisfying $0 \leq i \leq n \leq j \leq n + n' - 1$,
 for any indices $i', j'$ corresponding to any other boundaries, we have $0 \leq i < i' \leq n \leq j < j' \leq n + n' - 1$ or $0 \leq i' < i \leq n \leq j' < j \leq n + n' - 1$.
 Therefore, the correspondence between $i$ and $j$ (and $i'$ and $j'$) is uniquely determined by order of the indices (\ie{} $i <i'$ or $i' < i$), and it suffices to count the combination of the indices independently.
 Overall, the number of the simple and canonical timed conditions $\timedCondition'' \subseteq \timedCondition \cdot \timedCondition'$ is greater than the following, and we indeed have an exponential blowup, where $\binom{n}{m}$ is the binomial coefficient.
 \begin{displaymath}
  1 + \sum_{m = 0}^{\min\{n, n'\}} \binom{n}{m} \times \binom{n'}{m} > \sum_{m = 0}^{\min\{n, n'\}} \binom{\min\{n, n'\}}{m} = 2^{\min\{n, n'\}}
 \end{displaymath}
\end{example}

\subsection{Checking if $\elementary \CellSim^{\SuffixSet}_{\Lg} \elementary[']$}\label{appendix:row_equivalence}

First, we show how to check if $\elementary \CellSim^{\elementary[''],\Rename}_{\Lg} \elementary[']$
for a timed language $\Lg$,
simple elementary languages $\elementary, \elementary[']$, an elementary language $\elementary['']$, and
a renaming equation $\Rename$.
We have $\elementary \CellSim^{\elementary[''],\Rename}_{\Lg} \elementary[']$ if and only if the following three conditions hold.
\begin{enumerate}
 \item For any $\word \in \elementary$, there is $\word' \in \elementary[']$ satisfying $\timeValuation{\word},\timeValuation{\word'}\models \Rename$.\label{list:row_equivalence:left_to_right}
 \item For any $\word' \in \elementary[']$, there is $\word \in \elementary$ satisfying $\timeValuation{\word},\timeValuation{\word'}\models \Rename$.\label{list:row_equivalence:right_to_left}
 \item $\rename{\symbolicMemQ{\Lg}(\elementary \cdot \elementary[''])}{\Rename} \land \timedCondition'$ holds if and only if $\symbolicMemQ{\Lg}(\elementary['] \cdot \elementary['']) \land \Rename \land \timedCondition$ holds.\label{list:row_equivalence:equivalence}
\end{enumerate}

Without loss of generality, we assume that $\timedCondition$, $\timedCondition'$, and $\timedCondition''$ are simple and canonical.
Since $\Rename$ is a finite conjunction of equations of the form $\sumTimestamp{i}{|u|} = \sumTimestamp[']{j}{|u'|}$,
 \cref{list:row_equivalence:left_to_right} holds if and only if
for any $\word \in \elementary$,
there is $\word' \in \elementary[']$ such that for any $\sumTimestamp{i}{|u|} = \sumTimestamp[']{j}{|u'|}$ in $\Rename$,
$\timeValuation{\word}(\sumTimestamp{i}{|u|}) = \timeValuation{\word'}(\sumTimestamp[']{j}{|u'|})$.
This holds if and only if $\timedCondition \Rightarrow \project{(\Rename \land \timedCondition')}{\timeVariables}$ is a tautology,
where $\project{(\Rename \land \timedCondition')}{\timeVariables}$ is the constraint $\Rename \land \timedCondition'$ (over $\timeVariables \cup \timeVariables'$) restricted to $\timeVariables$.
If $\timedCondition \Rightarrow \project{(\Rename \land \timedCondition')}{\timeVariables}$ is a tautology,
we have $\timedCondition \land \project{(\Rename \land \timedCondition')}{\timeVariables} = \timedCondition$.
Since $\timedCondition$ and $\project{(\Rename \land \timedCondition')}{\timeVariables}$ are simple and canonical,
$\timedCondition \land \project{(\Rename \land \timedCondition')}{\timeVariables} = \timedCondition$ holds if and only if
$\timedCondition \land \project{(\Rename \land \timedCondition')}{\timeVariables}$ is satisfiable.
Moreover, this is equivalent to the satisfiability of $\timedCondition \land \timedCondition' \land \Rename$.
Therefore, \cref{list:row_equivalence:left_to_right} can be checked by examining if $\timedCondition \land \timedCondition' \land \Rename$ is satisfiable.
Similarly, \cref{list:row_equivalence:right_to_left} can be checked by examining if $\timedCondition \land \timedCondition' \land \Rename$ is satisfiable.

\begin{algorithm}[tbp]
 \caption{Outline of the checking of $\elementary \CellSim^{\SuffixSet,\Rename}_{\Lg} \elementary[']$}%
 \label{algorithm:row_equivalence_wihtout_renaming}
 \DontPrintSemicolon{}
 \newcommand{\myCommentFont}[1]{\texttt{\footnotesize{#1}}}
 \SetCommentSty{myCommentFont}
 \Input{A timed language $\Lg$, elementary languages $\elementary,\elementary[']$, a set of elementary languages $\SuffixSet$, and a renaming equation $\Rename$}
 \Output{If $\elementary \CellSim^{\SuffixSet,\Rename}_{\Lg} \elementary[']$ or not}
 \SetKwFunction{FCheckRowEquivalenceSingle}{CheckRowEquivalenceSingle}
 \Fn{\FCheckRowEquivalenceSingle} {
     \If{$\timedCondition \land \timedCondition' \land \Rename$ is unsatisfiable} {
         \KwReturn{} $\bot$
     }
     \For{$\suffix \in \SuffixSet$} {
         \If{$\rename{\symbolicMemQ{\Lg}(\elementary \cdot \suffix)}{\Rename} \land \timedCondition' \neq \symbolicMemQ{\Lg}(\elementary['] \cdot \suffix) \land \Rename \land \timedCondition$} {
             \KwReturn{} $\bot$
         }
     }
     \KwReturn{} $\top$
 }
\end{algorithm}

One can check if $\elementary \CellSim^{\SuffixSet,\Rename}_{\Lg} \elementary[']$ by
checking $\elementary \CellSim^{\elementary[''],\Rename}_{\Lg} \elementary[']$ for each $\elementary[''] \in \SuffixSet$.
\cref{algorithm:row_equivalence_wihtout_renaming} shows an outline of it.

To check if $\elementary \CellSim^{\SuffixSet}_{\Lg} \elementary[']$,
we generate candidate renaming equations $\Rename$ and check if $\elementary \CellSim^{\SuffixSet,\Rename}_{\Lg} \elementary[']$.
Each $\Rename$ can be seen as a bipartite graph $(V_1, V_2, E)$ such that $V_1 = \{\sumTimestamp{i}{n} \mid 0 \leq i \leq n\}$, $V_2 = \{\sumTimestamp[']{i'}{n'} \mid 0 \leq i' \leq n'\}$, and $E = \{(\sumTimestamp{i}{n}, \sumTimestamp{i'}{n'}) \mid \text{$R$ contains $\sumTimestamp{i}{n} = \sumTimestamp{i'}{n'}$}\}$.
Therefore, constructing a candidate renaming equation $\Rename$ is equivalent to constructing a set of edges $E$.

Before showing our construction, we prove the following properties.

\begin{proposition}%
 \label{prop:renaming_only_equal}
 For any $\elementary$, $\elementary[']$, $\SuffixSet$, $\Lg$, and $\Rename$ satisfying
 $\elementary \CellSim^{\SuffixSet,\Rename}_{\Lg} \elementary[']$,
 if $\timedCondition$ and $\timedCondition'$ are simple and canonical,
 for each equation $\sumTimestamp{i}{|u|} = \sumTimestamp[']{i'}{|u'|}$ in $\Rename$,
 the range of $\sumTimestamp{i}{|u|}$ in $\timedCondition$ and
 the range of $\sumTimestamp{i'}{|u'|}$ in $\timedCondition'$ are the same.
\end{proposition}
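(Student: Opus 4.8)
The plan is to unfold $\elementary \CellSim^{\SuffixSet,\Rename}_{\Lg} \elementary[']$ into the single ``simulation'' clause of \cref{definition:equivalence_row} and to combine it with the rigid shape that simplicity forces on timed conditions. First I would extract from $\elementary \CellSim^{\SuffixSet,\Rename}_{\Lg} \elementary[']$ the statement that for every $\word \in \elementary$ there is $\word' \in \elementary[']$ with $\timeValuation{\word}, \timeValuation{\word'} \models \Rename$: this is the first requirement in the definition of $\elementary \CellSub^{\elementary[''], \Rename}_{\Lg} \elementary[']$, it does not depend on the particular $\elementary[''] \in \SuffixSet$, and $\SuffixSet$ is non-empty (it always contains $(\styleact{\emptyword}, \timestamp'_0 = 0)$). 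I would also use that $\elementary$ is non-empty and that, since $\timedCondition$ is canonical, for each sum $\sumTimestamp{i}{|u|}$ there is a well-defined ``range of $\sumTimestamp{i}{|u|}$ in $\timedCondition$''.

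Next I would record the only combinatorial input. Fix an equation $\sumTimestamp{i}{|u|} = \sumTimestamp[']{i'}{|u'|}$ occurring in $\Rename$; by \cref{definition:renaming} every conjunct of $\Rename$ has this form. Since $\timedCondition$ is simple, the range $I$ of $\sumTimestamp{i}{|u|}$ in $\timedCondition$ is either an open unit interval $(\dConstant, \dConstant + 1)$ or a singleton $\{\dConstant\}$ with $\dConstant \in \N$, and similarly the range $I'$ of $\sumTimestamp[']{i'}{|u'|}$ in $\timedCondition'$. The point that needs care is the dichotomy that two sets of this restricted form are either equal or disjoint: two open unit intervals with distinct integer left endpoints are disjoint, an open unit interval contains no integer, and two integer singletons are equal or disjoint. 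This uses the integrality of the constants built into the definition of timed conditions, and it is the one place the argument is not purely definitional.

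Finally I would conclude. Pick any $\word \in \elementary$ and a matching $\word' \in \elementary[']$ with $\timeValuation{\word}, \timeValuation{\word'} \models \Rename$. Because $\Rename$ contains $\sumTimestamp{i}{|u|} = \sumTimestamp[']{i'}{|u'|}$, we have $\timeValuation{\word}(\sumTimestamp{i}{|u|}) = \timeValuation{\word'}(\sumTimestamp[']{i'}{|u'|})$. The left-hand value lies in $I$ because $\word \in \elementary$ forces $\timeValuation{\word} \models \timedCondition$, and the right-hand value lies in $I'$ because $\word' \in \elementary[']$ forces $\timeValuation{\word'} \models \timedCondition'$. Hence $I \cap I' \neq \emptyset$, so $I = I'$ by the dichotomy above, i.e.\ the range of $\sumTimestamp{i}{|u|}$ in $\timedCondition$ equals the range of $\sumTimestamp[']{i'}{|u'|}$ in $\timedCondition'$. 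As the conjunct of $\Rename$ was arbitrary, the proposition follows. The proof is short; the only thing one might stumble over is the disjoint-or-equal dichotomy for simple ranges, and once that is in hand everything else is bookkeeping against \cref{definition:equivalence_row}.
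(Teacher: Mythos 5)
Your proof is correct and follows essentially the same route as the paper's: both rest on (i) the fact that the simulation clause in $\elementary \CellSim^{\SuffixSet,\Rename}_{\Lg} \elementary[']$ yields a satisfying assignment for $\timedCondition \land \timedCondition' \land \Rename$, and (ii) the disjoint-or-equal dichotomy for ranges of the form $(\dConstant,\dConstant+1)$ or $\{\dConstant\}$ that simplicity forces. You merely make explicit the witness-picking and the non-emptiness assumptions that the paper's one-line "$\timedCondition \land \timedCondition' \land \Rename$ must be satisfiable" leaves implicit.
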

\begin{proof}
 For $i \in \{0,1,\dots,|u|\}$ (\resp{} $i' \in \{0,1,\dots,|u'|\}$),
 let $I_i$ (\resp{} $I'_{i'}$) be such that
 $\timedCondition$ (\resp{} $\timedCondition'$) contains
 $\sumTimestamp{i}{|u|} = I_i$ (\resp{} $\sumTimestamp[']{i'}{|u'|} = I'_{i'}$).
 Since $\timedCondition$ and $\timedCondition'$ are simple and canonical,
 $I_i \neq I'_{i'}$ implies $I_i \cap I'_{i'} = \emptyset$.
 Since we have $\elementary \CellSim^{\SuffixSet,\Rename}_{\Lg} \elementary[']$,
 $\timedCondition \land \timedCondition' \land \Rename$ must be satisfiable.
 Therefore,
 for each equation $\sumTimestamp{i}{|u|} = \sumTimestamp[']{i'}{|u'|}$ in $\Rename$,
 we have $I_i = I'_{i'}$.
\end{proof}
\begin{definition}
 Let $\Lg$ be a timed language
 and $\elementary$ and $\elementary[']$ be elementary languages over $\timeVariables$ and $\timeVariables'$, respectively.
 For $i \in \{0,1,\dots,n\}$ and $i' \in \{0,1,\dots,n'\}$,
 we call $\sumTimestamp{i}{n} + \sumTimestamp[']{0}{i'}$ is \emph{non-trivial} in $\symbolicMemQ{\Lg}(\elementary \cdot \elementary['])$
 if the range of $\sumTimestamp{i}{n} + \sumTimestamp[']{0}{i'}$ is different between $\timedCondition \land \timedCondition'$ and $\symbolicMemQ{\Lg}(\elementary \cdot \elementary['])$.
\end{definition}

\begin{proposition}
 For any $\elementary$, $\elementary[']$, $\SuffixSet$, $\Lg$, and $\Rename$ satisfying
 $\elementary \CellSim^{\SuffixSet,\Rename}_{\Lg} \elementary[']$,
 if $\timedCondition$ and $\timedCondition'$ are simple and canonical,
 for each equation $\sumTimestamp{i}{|u|} = \sumTimestamp[']{i'}{|u'|}$ in $\Rename$,
 for each $\elementary[''] \in \SuffixSet$, and
 for each $i'' \in \{0,1,\dots,|u''|\}$,
 $\sumTimestamp{i}{|u|} + \sumTimestamp['']{0}{i''}$ is non-trivial in $\symbolicMemQ{\Lg}(\elementary \cdot \elementary[''])$ if and only if
 $\sumTimestamp[']{i'}{|u'|} + \sumTimestamp['']{0}{i''}$ is non-trivial in $\symbolicMemQ{\Lg}(\elementary['] \cdot \elementary[''])$.
\end{proposition}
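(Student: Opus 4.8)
The plan is to fix one equation $\sumTimestamp{i}{|u|} = \sumTimestamp[']{i'}{|u'|}$ of $\Rename$, a suffix $\elementary[''] \in \SuffixSet$, and an index $i'' \in \{0,1,\dots,|u''|\}$, and to abbreviate $e \coloneqq \sumTimestamp{i}{|u|} + \sumTimestamp['']{0}{i''}$ and $e' \coloneqq \sumTimestamp[']{i'}{|u'|} + \sumTimestamp['']{0}{i''}$; note that $e$ mentions only variables of $\timedCondition$ and $\timedCondition['']$, while $e'$ mentions only variables of $\timedCondition'$ and $\timedCondition['']$. Unfolding the definition of non-triviality, it suffices to establish two set equalities: that $e$ takes the same set of values over the valuations of $\timedCondition \land \timedCondition['']$ as $e'$ does over those of $\timedCondition' \land \timedCondition['']$; and that $e$ takes the same set of values over the valuations of $\symbolicMemQ{\Lg}(\elementary \cdot \elementary[''])$ as $e'$ does over those of $\symbolicMemQ{\Lg}(\elementary['] \cdot \elementary[''])$. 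Granting these, the two inequations that define ``$e$ is non-trivial in $\symbolicMemQ{\Lg}(\elementary \cdot \elementary[''])$'' and ``$e'$ is non-trivial in $\symbolicMemQ{\Lg}(\elementary['] \cdot \elementary[''])$'' become literally the same, which is the claim.

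The first equality will be straightforward. Since $\timedCondition$ and $\timedCondition['']$ have disjoint variables, the set of values of $e$ over $\timedCondition \land \timedCondition['']$ is the Minkowski sum of the set of values of $\sumTimestamp{i}{|u|}$ over $\timedCondition$ and the set of values of $\sumTimestamp['']{0}{i''}$ over $\timedCondition['']$, and symmetrically for $e'$ with the same second summand. As $\timedCondition$ and $\timedCondition'$ are simple and canonical and $\Rename$ contains $\sumTimestamp{i}{|u|} = \sumTimestamp[']{i'}{|u'|}$, \cref{prop:renaming_only_equal} equates the two first summands, and the first equality follows.

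For the second equality, I would work with the common formula $\Psi \coloneqq \symbolicMemQ{\Lg}(\elementary \cdot \elementary['']) \land \Rename \land \timedCondition'$. Since $\elementary \CellSim^{\SuffixSet,\Rename}_{\Lg} \elementary[']$ gives $\elementary \CellSim^{\elementary[''],\Rename}_{\Lg} \elementary[']$, the membership clause of $\CellSub$ makes $\Psi$ equivalent to $\symbolicMemQ{\Lg}(\elementary['] \cdot \elementary['']) \land \Rename \land \timedCondition$. First I would show that projecting $\Psi$ away from the variables of $\timedCondition'$ returns $\symbolicMemQ{\Lg}(\elementary \cdot \elementary[''])$ unchanged: the symbolic membership $\symbolicMemQ{\Lg}(\elementary \cdot \elementary[''])$ entails the defining condition $\timedCondition \land \timedCondition['']$ of $\elementary \cdot \elementary['']$, hence entails $\timedCondition$; and $\timedCondition$ entails $\exists \timeVariables'.\,(\Rename \land \timedCondition')$, since the witness clause of $\elementary \CellSub^{\elementary[''],\Rename}_{\Lg} \elementary[']$ provides, for every $\word \in \elementary$, some $\word[']\in\elementary[']$ with $\timeValuation{\word}, \timeValuation{\word[']} \models \Rename$. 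Because $e$ does not mention the variables of $\timedCondition'$, this gives that $e$ takes the same set of values over $\Psi$ as over $\symbolicMemQ{\Lg}(\elementary \cdot \elementary[''])$. Running the symmetric argument on $\Psi \equiv \symbolicMemQ{\Lg}(\elementary['] \cdot \elementary['']) \land \Rename \land \timedCondition$ with the witness clause of $\elementary['] \CellSub^{\elementary[''],\Rename}_{\Lg} \elementary$ shows that $e'$ takes the same set of values over $\Psi$ as over $\symbolicMemQ{\Lg}(\elementary['] \cdot \elementary[''])$. Finally, $\Psi$ entails $\Rename$ and hence the equation $\sumTimestamp{i}{|u|} = \sumTimestamp[']{i'}{|u'|}$, so $e$ and $e'$ agree on every valuation of $\Psi$ and therefore take the same set of values over $\Psi$; chaining the three identities yields the second equality.

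The step I expect to be the main obstacle is this projection argument, \ie{} making precise that adjoining $\Rename \land \timedCondition'$ (\resp{} $\Rename \land \timedCondition$) to a symbolic membership and then eliminating the fresh variables changes nothing. It rests on two facts that have to be pinned down carefully: that a symbolic membership entails the defining timed condition of its argument, and that $\CellSub$ supplies a renamed partner \emph{in each direction} --- the latter being precisely why the hypothesis has to be $\CellSim$ rather than merely $\CellSub$. The remaining ingredients (the Minkowski-sum decomposition, the appeal to \cref{prop:renaming_only_equal}, and the observation that $\Rename$ identifies $e$ with $e'$) are routine.
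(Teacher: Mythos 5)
Your plan is correct, and it reaches the claim by a route that is organized differently from the paper's. The paper proves only the forward implication --- starting from the non-triviality of $\sumTimestamp{i}{|u|} + \sumTimestamp['']{0}{i''}$ it runs a chain of ``the range remains different'' steps, alternately strengthening the right-hand constraint, replacing $\timedCondition$ by the projection $\timedCondition \land \project{(\timedCondition' \land \Rename)}{\timeVariables}$, switching to $e'$ on the set where $\Rename$ holds, invoking the equivalence of symbolic memberships from $\CellSim$, then weakening/projecting back to land on $\timedCondition' \land \timedCondition''$ versus $\symbolicMemQ{\Lg}(\elementary['] \cdot \elementary[''])$ --- and closes with ``the opposite direction is similar.'' You instead prove the two outright set-level equalities (the range of $e$ over $\timedCondition \land \timedCondition''$ equals that of $e'$ over $\timedCondition' \land \timedCondition''$, and the range of $e$ over $\symbolicMemQ{\Lg}(\elementary \cdot \elementary[''])$ equals that of $e'$ over $\symbolicMemQ{\Lg}(\elementary['] \cdot \elementary[''])$), from which the biconditional is immediate, with no appeal to symmetry. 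For the first equality you bring in the Minkowski-sum decomposition together with \cref{prop:renaming_only_equal}; the paper never cites that proposition here, since it converts $\timedCondition$ into $\timedCondition'$ internally via the projection identities. For the second equality you use the same two facts the paper does --- that a symbolic membership entails the defining timed condition, and that conjoining $\Rename \land \timedCondition'$ to $\symbolicMemQ{\Lg}(\elementary \cdot \elementary[''])$ and projecting out $\timeVariables'$ is an identity (which is exactly the condition behind $\timedCondition = \timedCondition \land \project{(\timedCondition' \land \Rename)}{\timeVariables}$) --- but you route them through the single common formula $\Psi$, which makes clear why both $\CellSub$ directions, hence $\CellSim$, are needed. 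Both arguments are sound. Yours is more modular and handles both directions at once; the cost is the extra Minkowski-sum observation for the first equality, which the paper sidesteps by folding that conversion into the chain.
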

\begin{proof}
 Assume $\sumTimestamp{i}{|u|} + \sumTimestamp['']{0}{i''}$ is non-trivial in $\symbolicMemQ{\Lg}(\elementary \cdot \elementary[''])$,
 \ie{} the range of $\sumTimestamp{i}{|u|} + \sumTimestamp['']{0}{i''}$ is different
 between $\timedCondition \land \timedCondition''$ and $\symbolicMemQ{\Lg}(\elementary \cdot \elementary[''])$.
 Since $\symbolicMemQ{\Lg}(\elementary \cdot \elementary[''])$ implies $\timedCondition \land \timedCondition''$,
 the range of $\sumTimestamp{i}{|u|} + \sumTimestamp['']{0}{i''}$ remains different
 by strengthening $\symbolicMemQ{\Lg}(\elementary \cdot \elementary[''])$.
 Therefore, the range of $\sumTimestamp{i}{|u|} + \sumTimestamp['']{0}{i''}$ is different
 between $\timedCondition \land \timedCondition''$ and
 $\symbolicMemQ{\Lg}(\elementary \cdot \elementary['']) \land \timedCondition' \land \Rename$.
 By $\elementary \CellSim^{\SuffixSet,\Rename}_{\Lg} \elementary[']$, we have $\timedCondition = \timedCondition \land \project{(\timedCondition' \land \Rename)}{\timeVariables}$.
 Therefore, the range of $\sumTimestamp{i}{|u|} + \sumTimestamp['']{0}{i''}$ is different
 between $\timedCondition \land \project{(\timedCondition' \land \Rename)}{\timeVariables} \land \timedCondition''$ and
 $\symbolicMemQ{\Lg}(\elementary \cdot \elementary['']) \land \timedCondition' \land \Rename$.
 Since $\timedCondition' \land \Rename$ is a constraint over $\timeVariables \cup \timeVariables'$,
 the range of $\sumTimestamp{i}{|u|} + \sumTimestamp['']{0}{i''}$ is different
 between $\timedCondition \land \timedCondition' \land \timedCondition'' \land \Rename$ and
 $\symbolicMemQ{\Lg}(\elementary \cdot \elementary['']) \land \timedCondition' \land \Rename$.
 Since $\Rename$ contains $\sumTimestamp{i}{|u|} = \sumTimestamp[']{i'}{|u'|}$,
 the range of $\sumTimestamp[']{i'}{|u'|} + \sumTimestamp['']{0}{i''}$ is also different
 between $\timedCondition \land \timedCondition' \land \timedCondition'' \land \Rename$ and
 $\symbolicMemQ{\Lg}(\elementary \cdot \elementary['']) \land \timedCondition' \land \Rename$.
 Since we have $\elementary \CellSim^{\elementary[''],\Rename}_{\Lg} \elementary[']$,
 $\rename{\symbolicMemQ{\Lg}(\elementary \cdot \elementary[''])}{\Rename} \land \timedCondition'$ holds if and only if
 $\symbolicMemQ{\Lg}(\elementary['] \cdot \elementary['']) \land \Rename \land \timedCondition$ holds.
 Therefore,
 the range of $\sumTimestamp[']{i'}{|u'|} + \sumTimestamp['']{0}{i''}$ is different
 between $\timedCondition \land \timedCondition' \land \timedCondition'' \land \Rename$ and
 $\symbolicMemQ{\Lg}(\elementary['] \cdot \elementary['']) \land \timedCondition \land \Rename$.
 Since $\symbolicMemQ{\Lg}(\elementary['] \cdot \elementary['']) \land \timedCondition \land \Rename$ implies $\timedCondition' \land \timedCondition'' \land \timedCondition \land \Rename$,
 the range of $\sumTimestamp[']{i'}{|u'|} + \sumTimestamp['']{0}{i''}$ remains different
 by relaxing $\timedCondition' \land \timedCondition'' \land \timedCondition \land \Rename$.
 Therefore,
 the range of $\sumTimestamp[']{i'}{|u'|} + \sumTimestamp['']{0}{i''}$ is different
 between $\timedCondition' \land \timedCondition''$ and
 $\symbolicMemQ{\Lg}(\elementary['] \cdot \elementary['']) \land \timedCondition \land \Rename$.
 Since $\symbolicMemQ{\Lg}(\elementary['] \cdot \elementary[''])$ implies $\timedCondition'$,
 the range of $\sumTimestamp[']{i'}{|u'|} + \sumTimestamp['']{0}{i''}$ is different
 between $\timedCondition' \land \timedCondition''$ and
 $\symbolicMemQ{\Lg}(\elementary['] \cdot \elementary['']) \land \timedCondition \land \timedCondition' \land \Rename$.
 Since $\sumTimestamp[']{i'}{|u'|} + \sumTimestamp['']{0}{i''}$ does not contain any variables in $\timeVariables$,
 the range of $\sumTimestamp[']{i'}{|u'|} + \sumTimestamp['']{0}{i''}$ is different
 between $\timedCondition' \land \timedCondition''$ and
 $\symbolicMemQ{\Lg}(\elementary['] \cdot \elementary['']) \land \timedCondition' \land \project{(\timedCondition \land \Rename)}{\timeVariables'}$.
 By $\elementary \CellSim^{\SuffixSet,\Rename}_{\Lg} \elementary[']$, we have $\timedCondition' = \timedCondition' \land \project{(\timedCondition \land \Rename)}{\timeVariables'}$.
 Therefore,
 the range of $\sumTimestamp[']{i'}{|u'|} + \sumTimestamp['']{0}{i''}$ is different
 between $\timedCondition' \land \timedCondition''$ and
 $\symbolicMemQ{\Lg}(\elementary['] \cdot \elementary['']) \land \timedCondition'$.
 Since $\symbolicMemQ{\Lg}(\elementary['] \cdot \elementary[''])$ implies $\timedCondition'$,
 we have $\symbolicMemQ{\Lg}(\elementary['] \cdot \elementary['']) = \symbolicMemQ{\Lg}(\elementary['] \cdot \elementary['']) \land \timedCondition'$, and thus,
 the range of $\sumTimestamp[']{i'}{|u'|} + \sumTimestamp['']{0}{i''}$ is different
 between $\timedCondition' \land \timedCondition''$ and
 $\symbolicMemQ{\Lg}(\elementary['] \cdot \elementary[''])$.
 Therefore,
 $\sumTimestamp{i'}{|u'|} + \sumTimestamp['']{0}{i''}$ is non-trivial in $\symbolicMemQ{\Lg}(\elementary['] \cdot \elementary[''])$.
 The proof of the opposite direction is similar.
\end{proof}

\begin{proposition}%
 \label{prop:only_non_trivial_renaming}
 For any $\elementary$, $\elementary[']$, $\SuffixSet$, and $\Lg$ satisfying
 $\elementary \CellSim^{\SuffixSet}_{\Lg} \elementary[']$,
 if $\timedCondition$ and $\timedCondition'$ are simple and canonical,
 there is a renaming equation $\Rename$ such that
 for any equation $\sumTimestamp{i}{|u|} = \sumTimestamp[']{i'}{|u'|}$ in $\Rename$,
 there are $\elementary[''] \in \SuffixSet$ and $i'' \in \{0,1,\dots,|u''|\}$ such that
 $\sumTimestamp{i}{|u|} + \sumTimestamp['']{0}{i''}$ is non-trivial in $\symbolicMemQ{\Lg}(\elementary \cdot \elementary[''])$.
\end{proposition}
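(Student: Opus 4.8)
The plan is to obtain $\Rename$ by pruning a renaming equation that is already known to work. Since $\elementary \CellSim^{\SuffixSet}_{\Lg} \elementary[']$, fix $\Rename_0$ with $\elementary \CellSim^{\SuffixSet, \Rename_0}_{\Lg} \elementary[']$, and let $\Rename$ be the conjunction of exactly those equations $\sumTimestamp{i}{|u|} = \sumTimestamp[']{i'}{|u'|}$ of $\Rename_0$ for which some $\elementary[''] \in \SuffixSet$ and some $i''$ make $\sumTimestamp{i}{|u|} + \sumTimestamp['']{0}{i''}$ non-trivial in $\symbolicMemQ{\Lg}(\elementary \cdot \elementary[''])$. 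Then $\Rename$ satisfies the stated non-triviality requirement by construction, so (interpreting the proposition, as it is meant, as also demanding a working renaming equation, since otherwise $\Rename = \top$ works vacuously) the real content is to show $\elementary \CellSim^{\SuffixSet, \Rename}_{\Lg} \elementary[']$, i.e.\ that deleting the ``trivial'' equations $E_{\mathrm{triv}} := \Rename_0 \setminus \Rename$ preserves being a witness. I would note first that $E_{\mathrm{triv}}$ is symmetric: by the proposition immediately preceding this one, an equation of $\Rename_0$ has a trivial left-hand side (across all suffixes) iff it has a trivial right-hand side, so the deletion treats $\elementary$ and $\elementary[']$ symmetrically.

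The two ``reachability'' conditions of \cref{definition:equivalence_row} are easy: as recalled in \cref{appendix:row_equivalence}, when $\timedCondition$ and $\timedCondition'$ are simple and canonical they hold precisely when $\timedCondition \land \timedCondition' \land \Rename$ is satisfiable, and since $\Rename_0 \Rightarrow \Rename$ while $\timedCondition \land \timedCondition' \land \Rename_0$ is satisfiable (it holds for $\Rename_0$), so is $\timedCondition \land \timedCondition' \land \Rename$. The substance is the symbolic-membership condition. Fix $\elementary[''] \in \SuffixSet$ and write $\phi = \symbolicMemQ{\Lg}(\elementary \cdot \elementary[''])$, $\psi = \symbolicMemQ{\Lg}(\elementary['] \cdot \elementary[''])$; I must derive $\rename{\phi}{\Rename} \land \timedCondition' \equiv \psi \land \Rename \land \timedCondition$ from $\rename{\phi}{\Rename_0} \land \timedCondition' \equiv \psi \land \Rename_0 \land \timedCondition$. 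The pivot is an ``entailment transfer'' observation: $\phi \land \Rename_0 \land \timedCondition'$ and $\phi$ entail the same constraints over the variables $\timeVariables \cup \timeVariables''$. One direction is trivial; for the other, a valuation of $\timeVariables \cup \timeVariables''$ satisfying $\phi$ restricts (using $\phi \Rightarrow \timedCondition$) to one of $\timeVariables$ satisfying $\timedCondition$, which by the ``every $\word \in \elementary$ has a matching $\word[']$'' clause of $\elementary \CellSub^{\SuffixSet, \Rename_0}_{\Lg} \elementary[']$ extends to $\timeVariables'$ so as to satisfy $\timedCondition' \land \Rename_0$; hence any $\timeVariables \cup \timeVariables''$-constraint forced by $\phi \land \Rename_0 \land \timedCondition'$ is already forced by $\phi$ (and symmetrically with $\psi$).

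Using this, I would show each conjunct of $\psi$ is implied by $\phi \land \Rename \land \timedCondition \land \timedCondition'$, and symmetrically for the reverse inclusion. A conjunct over $\timeVariables''$ only, or one relating two surviving suffix-sums of $\timeVariables'$, is first rewritten over $\timeVariables \cup \timeVariables''$ via the equations of $\Rename$ and then handled by the transfer fact. The delicate case is a conjunct mentioning a primed suffix-sum $\sumTimestamp[']{a}{|u'|}$ whose defining equation was deleted from $\Rename_0$, or which never occurred in $\Rename_0$: here \cref{prop:renaming_only_equal} pins the range of that sum in $\timedCondition'$ to the range of the matching unprimed sum in $\timedCondition$ (when one exists), while triviality says $\phi$ constrains the associated ``carried clock plus suffix'' expression $\sumTimestamp{i}{|u|} + \sumTimestamp['']{0}{i''}$ no more tightly than $\timedCondition \land \timedCondition''$ does; combining these shows the conjunct of $\psi$ is actually forced by $\timedCondition' \land \timedCondition''$ alone, which $\phi \land \Rename \land \timedCondition'$ certainly entails. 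An equivalent, possibly cleaner, packaging is to take a witness minimal under inclusion and prove by contradiction that it contains no trivial equation, the induction step being exactly this ``deletion preserves witnessing'' claim.

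The step I expect to be the main obstacle is the variable bookkeeping for the concatenations $\elementary \cdot \elementary['']$ and $\elementary['] \cdot \elementary['']$: one must argue that a symbolic-membership constraint is always a constraint on a sum of \emph{consecutive} timestamps, so that $\project{\phi}{\timeVariables} = \timedCondition$ (the suffix never over-constrains the prefix internally) and the only ``cross'' constraints are the expressions $\sumTimestamp{i}{|u|} + \sumTimestamp['']{0}{i''}$ that appear in the definition of non-triviality. Once this normal form is available, the case analysis above is routine but tedious, and the reverse inclusion reuses the symmetry of $E_{\mathrm{triv}}$ recorded at the start.
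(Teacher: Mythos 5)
Your proof is correct and follows the paper's own high-level strategy: fix a witness $\Rename_0$ of $\elementary \CellSim^{\SuffixSet}_{\Lg} \elementary[']$, delete the equations whose suffix-sums are trivial, and show the result is still a witness, with \cref{prop:renaming_only_equal} and the unnamed proposition preceding the statement supplying the range-matching and symmetry facts you invoke. The two arguments diverge only in the execution of the crucial symbolic-membership step. The paper removes one trivial equation at a time from $\Rename' = \Rename \land eq$ and argues by a short constraint calculation that conjoining $eq$ into $\phi \land \Rename \land \timedCondition \land \timedCondition' \land \timedCondition''$, where $\phi = \symbolicMemQ{\Lg}(\elementary \cdot \elementary[''])$, has no effect because every sum $\sumTimestamp{i}{|u|} + \sumTimestamp['']{0}{i''}$ is trivial in $\phi$; it then iterates implicitly. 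You remove all trivial equations at once and prove preservation via a conjunct-wise analysis of $\psi = \symbolicMemQ{\Lg}(\elementary['] \cdot \elementary[''])$, organised around an ``entailment transfer'' lemma (any valuation of $\phi$ over $\timeVariables \cup \timeVariables''$ extends across $\timeVariables'$ by the matching clause of $\CellSub$). Your version is longer but surfaces an auxiliary fact the paper uses silently, namely that the only cross-constraints in a symbolic membership are the sums $\sumTimestamp{i}{|u|} + \sumTimestamp['']{0}{i''}$, which is exactly what makes the triviality definition exhaustive and does merit a sentence; the paper's version is more compact. Both routes establish the same conclusion.
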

\begin{proof}
 Let $\Rename' = \Rename \land (\sumTimestamp{i}{|u|} = \sumTimestamp[']{i'}{|u'|})$ be a renaming equation satisfying
 $\elementary \CellSim^{\SuffixSet,\Rename'}_{\Lg} \elementary[']$, and
 for any $\elementary[''] \in \SuffixSet$ and for any $i'' \in \{0,1,\dots,|u''|\}$,
 $\sumTimestamp{i}{|u|} + \sumTimestamp['']{0}{i''}$ is trivial in $\symbolicMemQ{\Lg}(\elementary \cdot \elementary[''])$.
 We show that $\elementary \CellSim^{\SuffixSet,\Rename}_{\Lg} \elementary[']$ also holds.
 By definition of $\elementary \CellSim^{\SuffixSet,\Rename'}_{\Lg} \elementary[']$,
 we have the following three conditions.
 \begin{itemize}
  \item For any $\word \in \elementary$, there is $\word' \in \elementary[']$ satisfying $\timeValuation{\word},\timeValuation{\word'}\models \Rename'$.
  \item For any $\word' \in \elementary[']$, there is $\word \in \elementary$ satisfying $\timeValuation{\word},\timeValuation{\word'}\models \Rename'$.
  \item For any $\elementary['']\in\SuffixSet$, $\rename{\symbolicMemQ{\Lg}(\elementary \cdot \elementary[''])}{\Rename'} \land \timedCondition'$ holds if and only if $\symbolicMemQ{\Lg}(\elementary['] \cdot \elementary['']) \land \Rename' \land \timedCondition$ holds.
 \end{itemize}
 Since $\Rename'$ implies $\Rename$, we immediately have the following two conditions.
 \begin{itemize}
  \item For any $\word \in \elementary$, there is $\word' \in \elementary[']$ satisfying $\timeValuation{\word},\timeValuation{\word'}\models \Rename$.
  \item For any $\word' \in \elementary[']$, there is $\word \in \elementary$ satisfying $\timeValuation{\word},\timeValuation{\word'}\models \Rename$.
 \end{itemize}
 By definition of $\Rename'$,
 for any $\elementary['']\in\SuffixSet$,
 $\rename{\symbolicMemQ{\Lg}(\elementary \cdot \elementary[''])}{\Rename \land (\sumTimestamp{i}{|u|} = \sumTimestamp[']{i'}{|u'|})} \land \timedCondition'$ holds
 if and only if $\symbolicMemQ{\Lg}(\elementary['] \cdot \elementary['']) \land \Rename \land (\sumTimestamp{i}{|u|} = \sumTimestamp[']{i'}{|u'|}) \land \timedCondition$ holds.
 Since $\symbolicMemQ{\Lg}(\elementary \cdot \elementary[''])$ implies $\timedCondition \land \timedCondition''$,
 $\rename{\symbolicMemQ{\Lg}(\elementary \cdot \elementary[''])}{\Rename \land (\sumTimestamp{i}{|u|} = \sumTimestamp[']{i'}{|u'|})} \land \timedCondition'$ is equivalent to
 $\rename{\symbolicMemQ{\Lg}(\elementary \cdot \elementary[''])}{\Rename \land (\sumTimestamp{i}{|u|} = \sumTimestamp[']{i'}{|u'|})} \land \timedCondition \land \timedCondition' \land \timedCondition''$.
 Since for any $i'' \in \{0,1,\dots,|u''|\}$,
 $\sumTimestamp{i}{|u|} + \sumTimestamp['']{0}{i''}$ is trivial in $\symbolicMemQ{\Lg}(\elementary \cdot \elementary[''])$,
 $\rename{\symbolicMemQ{\Lg}(\elementary \cdot \elementary[''])}{\Rename \land (\sumTimestamp{i}{|u|} = \sumTimestamp[']{i'}{|u'|})} \land \timedCondition \land \timedCondition' \land \timedCondition''$ is equivalent to
 $\rename{\symbolicMemQ{\Lg}(\elementary \cdot \elementary[''])}{\Rename} \land \timedCondition \land \timedCondition' \land \timedCondition''$, which is also equivalent to
 $\rename{\symbolicMemQ{\Lg}(\elementary \cdot \elementary[''])}{\Rename} \land \timedCondition'$.
 Therefore,
 for any $\elementary['']\in\SuffixSet$
 $\rename{\symbolicMemQ{\Lg}(\elementary \cdot \elementary[''])}{\Rename'} \land \timedCondition'$ holds if and only if 
 $\rename{\symbolicMemQ{\Lg}(\elementary \cdot \elementary[''])}{\Rename} \land \timedCondition'$ holds.
 By a similar discussion,
 for any $\elementary['']\in\SuffixSet$, 
 $\rename{\symbolicMemQ{\Lg}(\elementary['] \cdot \elementary[''])}{\Rename'} \land \timedCondition$ holds if and only if 
 $\rename{\symbolicMemQ{\Lg}(\elementary['] \cdot \elementary[''])}{\Rename} \land \timedCondition$ holds.
 Overall, 
 for any $\elementary['']\in\SuffixSet$, 
 $\rename{\symbolicMemQ{\Lg}(\elementary \cdot \elementary[''])}{\Rename} \land \timedCondition'$ holds if and only if $\symbolicMemQ{\Lg}(\elementary['] \cdot \elementary['']) \land \Rename \land \timedCondition$ holds, and thus, we have
 $\elementary \CellSim^{\SuffixSet,\Rename}_{\Lg} \elementary[']$.
\end{proof}

\begin{algorithm}[tbp]
 \caption{Outline of an algorithm to check if $\elementary \CellSim^{\SuffixSet}_{\Lg} \elementary[']$}%
 \label{algorithm:row_equivalence}
 \DontPrintSemicolon{}
 \Input{A timed language $\Lg$, elementary languages $\elementary,\elementary[']$, and a set of elementary languages $\SuffixSet$}
 \Output{If $\elementary \CellSim^{\SuffixSet}_{\Lg} \elementary[']$ or not}
 \textbf{construct} $(\tilde{V_1}, \tilde{V_2}, \tilde{E})$\label{algorithm:row_equivalence:construct_bipartite_graph}\;
 $\overline{\Rename} \gets \{\top\}$\;
\tcp{Generate candidate renaming equations}
\ForEach{disjoint subgraph $(\tilde{V'_1}, \tilde{V'_2}, \tilde{E'})$ of $(\tilde{V_1}, \tilde{V_2}, \tilde{E})$} {
      \tcp{Pick one edge for each disjoint subgraph}
      $\overline{\Rename} \gets \{\Rename \land \sumTimestamp{i}{|u|} = \sumTimestamp[']{i'}{|u'|} \mid \Rename \in \overline{\Rename}, (\timestamp_i, \timestamp'_{i'}) \in \tilde{E'}\}$\;
 }
 \While{$\forall \Rename \in \overline{\Rename}.\, \neg \FCheckRowEquivalenceSingle{$\Lg, \elementary, \elementary['], \SuffixSet, \Rename$}$} {
     $\overline{\Rename_{\mathrm{new}}} \gets \emptyset$\;
     \tcp{Try to add another equation}
     \ForEach{disjoint subgraph $(\tilde{V'_1}, \tilde{V'_2}, \tilde{E'})$ of $(\tilde{V_1}, \tilde{V_2}, \tilde{E})$} {
         \ForEach{$\Rename \in \overline{\Rename}$ and $(\timestamp_i, \timestamp'_{i'}) \in \tilde{E'}$} {
             \If{$\timedCondition \land \timedCondition' \land \Rename \land \sumTimestamp{i}{|u|} = \sumTimestamp[']{i'}{|u'|}$ is satisfiable} {
                 \KwPush{} $\Rename \land \sumTimestamp{i}{|u|} = \sumTimestamp[']{i'}{|u'|}$ \KwTo{} $\overline{\Rename_{\mathrm{new}}}$
             }
         }
     }
     \If{$\overline{\Rename_{\mathrm{new}}} = \overline{\Rename} \lor \overline{\Rename_{\mathrm{new}}} = \emptyset$} {
         \KwReturn{} $\bot$
     }
     $\overline{\Rename} \gets \overline{\Rename_{\mathrm{new}}}$
 }
 \KwReturn{} $\top$
\end{algorithm}

\cref{algorithm:row_equivalence} outlines our construction of the candidate renaming equations.
In \cref{algorithm:row_equivalence:construct_bipartite_graph},
we construct the bipartite graph $(\tilde{V_1}, \tilde{V_2}, \tilde{E})$ such that:
\begin{itemize}
 \item $\tilde{V_1} \subseteq \timeVariables$ and $\tilde{V_2} \subseteq \timeVariables'$ are such that $\timestamp_i \in \tilde{V_1}$ (\resp{} $\timestamp'_{i'} \in \tilde{V_2}$) if and only if there are $\elementary[''] \in \SuffixSet$ and $i'' \in \{0,1,\dots,|u''|\}$ such that $\sumTimestamp{i}{|u|} + \sumTimestamp['']{0}{i''}$ (\resp{} $\sumTimestamp[']{i'}{|u'|} + \sumTimestamp['']{0}{i''}$) is non-trivial in $\symbolicMemQ{\Lg}(\elementary \cdot \elementary[''])$ (\resp{} $\symbolicMemQ{\Lg}(\elementary['] \cdot \elementary[''])$);
 \item $\tilde{E}$ is such that $(\timestamp_i, \timestamp'_{i'}) \in \tilde{E}$ if and only if
       the range of $\sumTimestamp{i}{|u|}$ in $\timedCondition$ and
       the range of $\sumTimestamp[']{i'}{|u'|}$ in $\timedCondition'$ are the same.
\end{itemize}
We encode a renaming equation $\Rename$ by a set $\tilde{E}_{\Rename} \subseteq \tilde{E}$ of edges such that
$\Rename$ contains $\sumTimestamp{i}{|u|} = \sumTimestamp[']{i'}{|u'|}$ if and only if $(\timestamp_{i}, \timestamp'_{i'}) \in \tilde{E}_{\Rename}$.
By \cref{prop:only_non_trivial_renaming}, each candidate renaming equation $\Rename$ does not have to contain the variables not in $\tilde{V_1} \cup \tilde{V_2}$.
The edges $\tilde{E}$ consists of the candidate equations satisfying the condition in \cref{prop:renaming_only_equal}.
Therefore, we can check if $\elementary \CellSim^{\SuffixSet}_{\Lg} \elementary[']$ by enumerating $\tilde{E}_{\Rename} \subseteq \tilde{E}$ and checking if we have $\elementary \CellSim^{\SuffixSet,\Rename}_{\Lg} \elementary[']$.
We note that each disjoint subgraph of this bipartite graph is complete.

Since each variable in $\tilde{V_1} \cup \tilde{V_2}$ must be constrained by $\Rename$,
$\Rename$ has to contain at least one equation for each disjoint subgraph of the bipartite $(\tilde{V_1}, \tilde{V_2}, \tilde{E})$.
For each candidate renaming equation $\Rename$, we check if $\elementary \CellSim^{\SuffixSet,\Rename}_{\Lg} \elementary[']$.
If $\elementary \CellSim^{\SuffixSet,\Rename}_{\Lg} \elementary[']$ is satisfied by some $\Rename$, we conclude $\elementary \CellSim^{\SuffixSet}_{\Lg} \elementary[']$.

We remark that, overall, the checking of $\elementary \CellSim^{\SuffixSet}_{\Lg} \elementary[']$ only uses the information in the timed observation table,
and does not affect the overall complexity of the learning algorithm in terms of the number of queries.

\section{On the consistency in L*-style algorithms}\label{subsection:consistency}

The following example shows that 
 we cannot refine $\SuffixSet$ by a ``continuous consistency'' defined by the continuous successors.

\begin{example}%
 \label{example:continuous-consistency}
 Let $\Lg = \{\timestamp_0 a \timestamp_1 b \timestamp_2 c \timestamp_3 \mid \timestamp_0 + \timestamp_1 = \timestamp_1 + \timestamp_2 = 1\}$ and
 let $\prefix = \fractionalElementary$ and $\prefix' = \fractionalElementary[']$ be such that
 $u = u' = ab$,
 $\timedCondition = \{\timestamp_0 \in (0,1) \land \timestamp_1 \in (0,1) \land \timestamp_2 = 0 \land \timestamp_0 + \timestamp_1 = 1 \land \timestamp_1 + \timestamp_2 \in (0,1) \land \timestamp_0 + \timestamp_1 + \timestamp_2 = 1 \}$,
 $\timedCondition' = \{\timestamp_0 \in (0,1) \land \timestamp_1 \in (0,1) \land \timestamp_2 \in (0,1) \land \timestamp_0 + \timestamp_1 = 1 \land \timestamp_1 + \timestamp_2 \in (0,1) \land \timestamp_0 + \timestamp_1 + \timestamp_2 \in (1,2) \}$.
 See \cref{figure:continuous-consistency:languages} for an illustration.
 The orders on the fractional parts are as follows.
 \begin{itemize}
  \item $\fractionalCondition_{\prefix}$:  $0 = \fractional(\sumTimestamp{2}{2}) < \fractional(\sumTimestamp{0}{2}) < \fractional(\sumTimestamp{1}{2})$
  \item $\fractionalCondition_{\prefix'}$: $0 < \fractional(\sumTimestamp{2}{2}) < \fractional(\sumTimestamp{0}{2}) < \fractional(\sumTimestamp{1}{2})$
 \end{itemize}

 Let $\suffix = \elementary[_{\suffix}]$ be an elementary language.
 $(\prefix \cdot \suffix) \cap \Lg$ is nonempty if and only if we have
 $u_{\suffix} = c$ and the range of $\timestamp^{\suffix}_0$ in $\timedCondition_{\suffix}$ contains $(0, 1)$.
 $(\prefix' \cdot \suffix) \cap \Lg$ is also nonempty if and only if we have the same condition.
 Moreover, for any such $\suffix$, we have
 $\symbolicMemQ{\Lg}(\prefix \cdot \suffix) = \timedCondition \land \timedCondition_{\suffix} \land (\timestamp_1 + \timestamp^{\suffix}_0 = 1)$ and
 $\symbolicMemQ{\Lg}(\prefix' \cdot \suffix) = \timedCondition' \land \timedCondition_{\suffix} \land (\timestamp'_1 + \timestamp'_2 + \timestamp^{\suffix}_0 = 1)$.
 Therefore, we have
 $\prefix \CellSim^{\Elementary,\Rename}_{\Lg} \prefix'$, with $\Rename = \{\timestamp_1 = \timestamp'_1 + \timestamp'_2\}$.

 In contrast, for $\successor[t]{\prefix} = \prefix'$ and $\successor[t]{\prefix'}$,
 for $\suffix' = \elementary[_{\suffix'}]$ such that $u_{\suffix'} = c$ and $\timedCondition_{\suffix'} = \{\timestamp^{\suffix'}_0 = \timestamp^{\suffix'}_1 = 0\}$,
 we have $\successor[t]{\prefix} \cdot \suffix' \cap \Lg = \emptyset$ and $\successor[t]{\prefix'} \cdot \suffix' \cap \Lg \neq \emptyset$.
 Therefore, there is no renaming equation $\Rename$ satisfying
 $\successor[t]{\prefix} \NCellSim^{\Elementary,\Rename}_{\Lg} \successor[t]{\prefix'}$.
 Overall, in general, the ``continuous consistency'' does not hold even for $\SuffixSet = \Elementary$, and thus,
 we cannot refine $\SuffixSet$ by enforcing ``consistency'' for continuous successors.
 \begin{figure}[tbp]
  \centering
  \begin{tikzpicture}
   \node at (-0.6,0) {$\Lg: $};
   \draw[->] (-0.3, 0.0) -- (3.3, 0.0);
   \draw (0.0, -0.1) -- (0.0, 0.1);
   \draw (1.0, -0.1) -- (1.0, 0.1);
   \draw (2.0, -0.1) -- (2.0, 0.1);
   \draw (3.0, -0.1) -- (3.0, 0.1);
   \node at (0.0, -0.3) {0};
   \node at (1.0, 0.3) {$\styleact{a}$};
   \node at (2.0, 0.3) {$\styleact{b}$};
   \node at (3.0, 0.3) {$\styleact{c}$};
   \node at (3.3, -0.3) {$t$};
   \draw (0.0, 0.5) to [bend left=30] node [above,midway] {$\timestamp_0 + \timestamp_1 = 1$} (2.0, 0.5);
   \draw (1.0, -0.2) to [bend right=30] node [below,midway] {$\timestamp_1 + \timestamp_2 = 1$} (3.0, -0.2);
  \end{tikzpicture}
  \hfill
  \begin{tikzpicture}[xscale=1.5]
   \node at (-0.6,0) {$\prefix: $};
   \draw[->] (-0.3, 0.0) -- (2.3, 0.0);
   \draw (0.0, -0.1) -- (0.0, 0.1);
   \draw (1.0, -0.1) -- (1.0, 0.1);
   \draw (2.0, -0.1) -- (2.0, 0.1);
   \node at (0.0, -0.3) {0};
   \node at (1.0, 0.3) {$\styleact{a}$};
   \node at (2.0, 0.3) {$\styleact{b}$};
   \node at (2.3, -0.3) {$t$};
   \draw (0.0, 0.5) to [bend left=30] node [above,midway] {$\timestamp_0 \in (0,1)$} (1.0, 0.5);
   \draw (1.0, 0.5) to [bend left=30] node [above,midway] {$\timestamp_1 \in (0,1)$} (2.0, 0.5);
   \draw (0.0, -0.2) to [bend right=30] node [below,midway] {$\timestamp_0 + \timestamp_1 =1$} (2.0, -0.2);
  \end{tikzpicture}
  \begin{tikzpicture}[xscale=2.0]
   \node at (-0.6,0) {$\prefix': $};
   \draw[->] (-0.3, 0.0) -- (3.3, 0.0);
   \draw (0.0, -0.1) -- (0.0, 0.1);
   \draw (1.0, -0.1) -- (1.0, 0.1);
   \draw (2.0, -0.1) -- (2.0, 0.1);
   \draw (3.0, -0.1) -- (3.0, 0.1);
   \node at (0.0, -0.3) {0};
   \node at (1.0, 0.3) {$\styleact{a}$};
   \node at (2.0, 0.3) {$\styleact{b}$};
   \node at (3.0, 0.3) {};
   \node at (3.3, -0.3) {$t$};
   \draw (0.0, 0.5) to [bend left=80] node [above,midway] {$\timestamp_0 + \timestamp_1 + \timestamp_2 \in (1,2)$} (3.0, 0.5);
   \draw (0.0, 0.5) to [bend left=30] node [above,midway] {$\timestamp_0 \in (0,1)$} (1.0, 0.5);
   \draw (1.0, 0.5) to [bend left=30] node [above,midway] {$\timestamp_1 \in (0,1)$} (2.0, 0.5);
   \draw (2.0, 0.5) to [bend left=30] node [above,midway] {$\timestamp_2 \in (0,1)$} (3.0, 0.5);
   \draw (0.0, -0.2) to [bend right=30] node [below,midway] {$\timestamp_0 + \timestamp_1 =1$} (2.0, -0.2);
   \draw (1.0, -0.2) to [bend right=30] node [below,midway] {$\timestamp_1 + \timestamp_2 \in (0,1)$} (3.0, -0.2);
  \end{tikzpicture}
  \begin{tikzpicture}[xscale=2.0]
   \node at (-0.6,0) {$\successor[t]{\prefix'}: $};
   \draw[->] (-0.3, 0.0) -- (3.3, 0.0);
   \draw (0.0, -0.1) -- (0.0, 0.1);
   \draw (1.0, -0.1) -- (1.0, 0.1);
   \draw (2.0, -0.1) -- (2.0, 0.1);
   \draw (3.0, -0.1) -- (3.0, 0.1);
   \node at (0.0, -0.3) {0};
   \node at (1.0, 0.3) {$\styleact{a}$};
   \node at (2.0, 0.3) {$\styleact{b}$};
   \node at (3.0, 0.3) {};
   \node at (3.3, -0.3) {$t$};
   \draw (0.0, 0.5) to [bend left=80] node [above,midway] {$\timestamp_0 + \timestamp_1 + \timestamp_2 \in (1,2)$} (3.0, 0.5);
   \draw (0.0, 0.5) to [bend left=30] node [above,midway] {$\timestamp_0 \in (0,1)$} (1.0, 0.5);
   \draw (1.0, 0.5) to [bend left=30] node [above,midway] {$\timestamp_1 \in (0,1)$} (2.0, 0.5);
   \draw (2.0, 0.5) to [bend left=30] node [above,midway] {$\timestamp_2 \in (0,1)$} (3.0, 0.5);
   \draw (0.0, -0.2) to [bend right=30] node [below,midway] {$\timestamp_0 + \timestamp_1 =1$} (2.0, -0.2);
   \draw (1.0, -0.2) to [bend right=30] node [below,midway] {$\timestamp_1 + \timestamp_2 = 1$} (3.0, -0.2);
  \end{tikzpicture}
  \caption{Illustration of $\Lg$, $\prefix$, $\prefix'$, and $\successor[t]{\prefix'}$ in \cref{example:continuous-consistency}}%
  \label{figure:continuous-consistency:languages}
 \end{figure}
\end{example}

In contrast, any timed observation table becomes \emph{discretely} consistent in the limit.


 \newcommand{\consistencyInLimitStatement}{%
 Let $O = \TimedObsTableInside$ be a timed observation table.
 If $\CellSim^{\SuffixSet}_{\targetLg}$ is equal to $\CellSim^{\Elementary}_{\targetLg}$,
 $O$ is consistent.
 }
\begin{proposition}%
 \label{lemma:consistency_in_limit}
 \consistencyInLimitStatement{}
 \ShortVersion{\qed{}}
 \end{proposition}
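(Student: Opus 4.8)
The plan is to reduce the proposition to a single right-congruence property of the limit relation and then verify that property directly from the definition of $\CellSim$ via symbolic membership. Assume $\CellSim^{\SuffixSet}_{\targetLg} = \CellSim^{\Elementary}_{\targetLg}$, take $\prefix = (u,\timedCondition)$ and $\prefix' = (u',\timedCondition')$ in $\PrefixSet$ with $\prefix \CellSim^{\SuffixSet}_{\targetLg} \prefix'$, and fix $\action \in \Alphabet$. By the hypothesis, $\prefix \CellSim^{\Elementary}_{\targetLg} \prefix'$; and, again by the hypothesis, it suffices to show $\successor[\action]{\prefix} \CellSim^{\Elementary}_{\targetLg} \successor[\action]{\prefix'}$, since this then gives $\successor[\action]{\prefix} \CellSim^{\SuffixSet}_{\targetLg} \successor[\action]{\prefix'}$, i.e. consistency. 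So the whole statement reduces to the claim: $\prefix \CellSim^{\Elementary}_{\targetLg} \prefix'$ implies $\successor[\action]{\prefix} \CellSim^{\Elementary}_{\targetLg} \successor[\action]{\prefix'}$ for the \emph{discrete} successor.

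The engine of the claim would be that appending $\action$ at dwell time $0$ can be absorbed into the suffix. Writing $n = |u|$, $n' = |u'|$, we have $\successor[\action]{\prefix} = (u\cdot\action,\ \timedCondition \land \timestamp_{n+1}=0)$, and for every $\suffix \in \Elementary$ there is a $\suffix_\action \in \Elementary$ --- obtained from $\suffix$ by prepending one event $\action$ read after dwell time $0$, i.e. shifting the variable indices of $\timedCondition_\suffix$ up by one and adjoining the equation forcing the new leading variable to $0$ --- such that $\successor[\action]{\prefix}\cdot\suffix = \prefix\cdot\suffix_\action$ and $\successor[\action]{\prefix'}\cdot\suffix = \prefix'\cdot\suffix_\action$ as timed languages. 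I would record this small identity as a separate lemma; it only requires tracking how concatenation merges the terminal timestamp of the first factor with the leading timestamp of the second. Under the evident bijection of variables identifying the terminal variable $\timestamp_{n+1}$ of $\successor[\action]{\prefix}$ with the leading variable of $\suffix_\action$ (both forced to $0$) and $\timestamp^{\suffix}_i$ with $\timestamp^{\suffix_\action}_{i+1}$, this identity yields $\symbolicMemQ{\targetLg}(\successor[\action]{\prefix}\cdot\suffix) = \symbolicMemQ{\targetLg}(\prefix\cdot\suffix_\action)$, and likewise for $\prefix'$.

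Concretely, I would then take a renaming equation $\Rename$ witnessing $\prefix \CellSim^{\Elementary,\Rename}_{\targetLg} \prefix'$ and define $\hat\Rename$ over the variables of $\successor[\action]{\prefix}$ and $\successor[\action]{\prefix'}$ by replacing each equation $\sumTimestamp{i}{n} = \sumTimestamp[']{j}{n'}$ of $\Rename$ with $\sumTimestamp{i}{n+1} = \sumTimestamp[']{j}{n'+1}$ and adjoining $\sumTimestamp{n+1}{n+1} = \sumTimestamp[']{n'+1}{n'+1}$ (a tautology, both sides being $0$). The key observation is that, modulo the equalities $\timestamp_{n+1}=0$ and $\timestamp'_{n'+1}=0$ --- which are implied by the timed conditions of $\successor[\action]{\prefix}$ and $\successor[\action]{\prefix'}$, hence by all the formulas involved --- $\hat\Rename$ is equivalent to $\Rename$. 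With this in hand: (i) since the appended timestamps are $0$, for $\word \in \prefix$ and $\word' \in \prefix'$ we have $\timeValuation{\word},\timeValuation{\word'}\models\Rename$ iff the words obtained by appending $\action$-then-$0$ satisfy $\hat\Rename$, which transfers the two matching-word clauses of $\prefix\CellSub^{\Elementary,\Rename}_{\targetLg}\prefix'$ and of its symmetric counterpart to $\successor[\action]{\prefix}$ and $\successor[\action]{\prefix'}$; (ii) for each $\suffix \in \Elementary$ I would take the instance $\rename{\symbolicMemQ{\targetLg}(\prefix\cdot\suffix_\action)}{\Rename}\land\timedCondition' \equiv \symbolicMemQ{\targetLg}(\prefix'\cdot\suffix_\action)\land\Rename\land\timedCondition$ of the definition of $\prefix \CellSim^{\suffix_\action,\Rename}_{\targetLg}\prefix'$ (legitimate because $\suffix_\action\in\Elementary$), substitute $\symbolicMemQ{\targetLg}(\successor[\action]{\prefix}\cdot\suffix) = \symbolicMemQ{\targetLg}(\prefix\cdot\suffix_\action)$ and its $\prefix'$-analogue, replace $\Rename$ by $\hat\Rename$ (permissible modulo the forced zeros) and $\timedCondition,\timedCondition'$ by the timed conditions $\timedCondition\land\timestamp_{n+1}=0$, $\timedCondition'\land\timestamp'_{n'+1}=0$ of the successors, and thereby recover exactly the symbolic-membership clause of $\successor[\action]{\prefix}\CellSim^{\suffix,\hat\Rename}_{\targetLg}\successor[\action]{\prefix'}$. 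As $\suffix$ was arbitrary, $\successor[\action]{\prefix}\CellSim^{\Elementary,\hat\Rename}_{\targetLg}\successor[\action]{\prefix'}$, hence $\successor[\action]{\prefix}\CellSim^{\Elementary}_{\targetLg}\successor[\action]{\prefix'}$, which proves the claim and with it the proposition.

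The hard part will not be conceptual but bookkeeping: pinning down the absorption identity $\successor[\action]{\prefix}\cdot\suffix = \prefix\cdot\suffix_\action$ together with its variable correspondence, and checking that passing from $(\Rename,\timedCondition,\timedCondition')$ to $(\hat\Rename,\ \timedCondition\land\timestamp_{n+1}=0,\ \timedCondition'\land\timestamp'_{n'+1}=0)$ preserves the equivalence of the two constraints in the definition of $\CellSim$. Both are routine once the index conventions are fixed. I also note that the argument cannot be shortcut via a region-automaton characterization of $\CellSim^{\Elementary}_{\targetLg}$, since the proposition assumes nothing about recognizability of $\targetLg$; the direct symbolic computation is what is needed, and it is precisely the discreteness of the appended event (dwell time $0$, hence absorbable into a suffix) that makes it go through --- in contrast with continuous successors, cf.\ \cref{example:continuous-consistency}.
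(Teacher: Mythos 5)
Your argument is correct, and it is the paper's own argument run in the forward direction. The paper proves the contrapositive --- $\successor[\action]{\prefix}\NCellSim^{\Elementary}_{\targetLg}\successor[\action]{\prefix'}$ implies $\prefix\NCellSim^{\Elementary}_{\targetLg}\prefix'$ --- via a four-way case split on how $\NCellSim^{\SuffixSet,\Rename}$ can fail (either matching-word clause, or strictness of the symbolic-membership implication in either direction); you instead transport a single witnessing renaming $\Rename$ to $\hat\Rename$ and verify all defining clauses for all $\suffix$ at once, which dispenses with the case split. Both proofs rest on exactly the two facts you isolate: the absorption identity $\successor[\action]{\prefix}\cdot\suffix = \prefix\cdot(\action\cdot\suffix)$ (which the paper records as $\elementary[_\action]\cdot\elementary[''] = \elementary\cdot(\action\cdot\elementary[''])$), and the bijection, modulo the forced zeros $\timestamp_{n+1}=\timestamp'_{n'+1}=0$, between renaming equations over the successor variables and those over the original ones (your $\Rename\mapsto\hat\Rename$; the paper's $\Rename\mapsto\Rename'$). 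You spell out that bijection explicitly, which the paper's proof relies on implicitly when it passes from ``for every $\Rename$ over the successors, $\NCellSim^{\SuffixSet,\Rename}_{\targetLg}$'' to ``$\prefix\NCellSim^{\Elementary}_{\targetLg}\prefix'$''. The one substantive thing your version buys is generality: the paper opens its proof by invoking \cref{theorem:finite_suffix}, hence recognizability of $\targetLg$, to extract a finite distinguishing set --- a step your direct argument shows is unnecessary, since one can work at the level of $\Elementary$ throughout. As the proposition as stated assumes nothing about $\targetLg$ being recognizable, your proof is strictly cleaner on that point, though in the algorithm's intended use recognizability of $\targetLg$ is implicit anyway.
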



 \begin{proof}
 We prove the contraposition, \ie{}
 $\successor[\action]{\prefix} \NCellSim^{\Elementary}_{\targetLg} \successor[\action]{\prefix'}$ implies
 $\prefix \NCellSim^{\Elementary}_{\targetLg} \prefix'$.
 When
 $\successor[\action]{\prefix} \NCellSim^{\Elementary}_{\targetLg} \successor[\action]{\prefix'}$ holds,
 since $\targetLg$ is a recognizable timed language, by \cref{theorem:finite_suffix},
 there is a finite set $\SuffixSet \subseteq \Elementary$ satisfying
 $\successor[\action]{\prefix} \NCellSim^{\SuffixSet}_{\targetLg} \successor[\action]{\prefix'}$.
 Moreover, by definition of $\CellSim^{\SuffixSet}_{\targetLg}$, for any renaming equation $\Rename$, we have
 $\successor[\action]{\prefix} \NCellSim^{\SuffixSet, \Rename}_{\targetLg} \successor[\action]{\prefix'}$.
 For such $\SuffixSet$ and $\Rename$, let
 $\SuffixSet' = \{\action \cdot \suffix \mid \suffix \in \SuffixSet\}$ and let
 $\Rename'$ be the renaming equation such that $\timestamp_{|u|+1}$ and $\timestamp'_{|u'|+1}$ in $\Rename$ are replaced with $\timestamp_{|u|}$ and $\timestamp'_{|u'|}$, respectively,
 where
 $\action \cdot \suffix = \{\action \cdot \word \mid \word \in \suffix\}$, and
 $u$ and $u'$ are such that
 $\prefix = \fractionalElementary$ and $\prefix' = \fractionalElementary[']$.

 Let $\fractionalElementary[_\action] = \successor[\action]{\prefix}$ and $\fractionalElementary['_\action] = \successor[\action]{\prefix'}$.
 When we have $\successor[\action]{\prefix} \NCellSim^{\SuffixSet, \Rename}_{\targetLg} \successor[\action]{\prefix'}$ at least one of the following holds.
 \begin{enumerate}
  \item There is $\word_\action \in \elementary[_\action]$ such that for any $\word'_\action \in \elementary['_\action]$, we have $\timeValuation{\word_\action},\timeValuation{\word'_\action}\not\models \Rename$.\label{list:row_inequivalence:left_to_right}
  \item There is $\word'_\action \in \elementary['_\action]$ such that for any $\word_\action \in \elementary[_\action]$, we have $\timeValuation{\word_\action},\timeValuation{\word'_\action}\not\models \Rename$.\label{list:row_inequivalence:right_to_left}
  \item There is $\elementary[''] \in \SuffixSet$ such that $\rename{\symbolicMemQ{\targetLg}(\elementary[_\action] \cdot \elementary[''])}{\Rename} \land \timedCondition'_\action$ is strictly stronger than $\symbolicMemQ{\targetLg}(\elementary['_\action] \cdot \elementary['']) \land \Rename \land \timedCondition_\action$.\label{list:row_inequivalence:left_stronger}
  \item There is $\elementary[''] \in \SuffixSet$ such that $\symbolicMemQ{\targetLg}(\elementary['_\action] \cdot \elementary['']) \land \Rename \land \timedCondition_\action$ is strictly stronger than $\rename{\symbolicMemQ{\targetLg}(\elementary[_\action] \cdot \elementary[''])}{\Rename} \land \timedCondition'_\action$.\label{list:row_inequivalence:right_stronger}
 \end{enumerate}

 When \cref{list:row_inequivalence:left_to_right} holds,
 by definition of $\elementary[_\action]$ and $\elementary['_\action]$,
 there is $\word \in \elementary$ such that $\word \cdot \action = \word_\action$ and for any $\word' \in \elementary[']$,
 we have $\timeValuation{\word \cdot \action},\timeValuation{\word' \cdot \action}\not\models\Rename$.
 For such $\word$ and $\word'$, we have
 $\timestamp_{|u_a| - 1} = \timestamp_{|u_a|}$ and $\timestamp'_{|u'_a| - 1} = \timestamp_{|u'_a|}$
 in $\timeValuation{\word \cdot \action}$ and $\timeValuation{\word' \cdot \action}$.
 Therefore,
 when we have \cref{list:row_inequivalence:left_to_right},
 there is $\word \in \elementary$ such that for any $\word' \in \elementary[']$, we have
 $\timeValuation{\word}, \timeValuation{\word'} \models \Rename'$,
 and we have $\prefix \NCellSim^{\SuffixSet', \Rename'}_{\targetLg} \prefix'$.
 Similarly,
 \cref{list:row_inequivalence:right_to_left} also implies $\prefix \NCellSim^{\SuffixSet', \Rename'}_{\targetLg} \prefix'$.

 Since $\elementary[_\action] \cdot \elementary[''] = \elementary \cdot (\action \cdot \elementary[''])$ holds,
 $\rename{\symbolicMemQ{\targetLg}(\elementary[_\action] \cdot \elementary[''])}{\Rename} \land \timedCondition'_\action$ is equivalent to
 $\rename{\symbolicMemQ{\targetLg}(\elementary \cdot (\action \cdot \elementary['']))}{\Rename'} \land \timedCondition'$ with a variable renaming.
 Similarly,
 $\symbolicMemQ{\targetLg}(\elementary['] \cdot (\action \cdot \elementary[''])) \land \Rename' \land \timedCondition$ is equivalent to
 $\symbolicMemQ{\targetLg}(\elementary['_\action] \cdot \elementary['']) \land \Rename \land \timedCondition_\action$ with a variable renaming.
 Therefore,
 when we have \cref{list:row_inequivalence:left_stronger},
 $\rename{\symbolicMemQ{\targetLg}(\elementary \cdot (\action \cdot \elementary['']))}{\Rename} \land \timedCondition'$ is also strictly stronger than $\symbolicMemQ{\targetLg}(\elementary['] \cdot (\action \cdot \elementary[''])) \land \Rename \land \timedCondition$, and we have
 $\prefix \NCellSim^{\SuffixSet', \Rename'}_{\targetLg} \prefix'$.
 Similarly,
 \cref{list:row_inequivalence:right_stronger} also implies $\prefix \NCellSim^{\SuffixSet', \Rename'}_{\targetLg} \prefix'$.
 \end{proof}

\section{Details of the implementation}\label{appendix:implementation}

\mw{In a journal version (if we make), we can elaborate this section}
Here, we describe some details of our prototype library \ourTool{}.
\ourTool{} is implemented in C++17 using Boost (as a general-purpose library) and Eigen.
We implemented \ourTool{} focusing on DTAs without unobservable transitions, \ie{} each edge is labeled with an event $\action \in \Alphabet$.
To construct a DTA without unobservable transitions, we require the following condition to a timed observation table in addition to the cohesion.
This can make $\PrefixSet$ larger but does not harm the termination of the learning algorithm intuitively because of the finiteness of the region graph.

\todo{Try a variant without time-saturation and compare the efficiency}
\begin{definition}
 [time-saturation]
 A timed observation table
 $\ObsTableInside$ is \emph{time-saturated} if
 for each $\prefix \in \PrefixSet$
 $\successor[t]{\prefix} \notin \PrefixSet$ implies
 $\prefix \CellSim^{\SuffixSet, \top}_{\targetLg} \successor[t]{\prefix}$.
\end{definition}

\paragraph{On timed conditions}
We use \emph{difference bounded matrices (DBM)}~\cite{Dill89} to represent timed conditions.
We use Eigen to implement DBMs.
A DBM represents a finite conjunction of inequalities constraints of the form
$x - x' \prec \dConstant$, $x \prec \dConstant$, or $-x \prec \dConstant$, where
 $x$ and $x'$ are variables, $\dConstant$ is a constant, and ${\prec} \in \{<, \leq\}$.
We use a DBM over $\{\sumTimestamp{0}{n}, \sumTimestamp{1}{n}, \dots, \sumTimestamp{n}{n}\}$ to encode a timed condition over $\timeVariables = \{\timestampSequence\}$.
By this encoding, we can represent any timed condition 
because $\sumTimestamp{i}{j} = \sumTimestamp{i}{n} - \sumTimestamp{j + 1}{n}$, where $0 \leq i \leq j < n$.

\paragraph{On the generation of candidate renaming equations}
As we show in \cref{appendix:row_equivalence}, 
when we check if $\elementary \CellSim^{\SuffixSet}_{\Lg} \elementary[']$,
we generate candidate renaming equations $\Rename$ and check if $\elementary \CellSim^{\SuffixSet,\Rename}_{\Lg} \elementary[']$.
As we have already mentioned, such a candidate renaming equation is generated from a bipartite graph constructed from $\timedCondition$ and $\timedCondition'$.
In our implementation, to make the DTA construction simpler, we focus on a reaming equation such that the morphism defined by $(u, \timedCondition, u',\timedCondition', \Rename)$ is a function.
More concretely, we generate renaming equations such that $\Rename = \top$ (\ie{} $\Rename$ does not constrain the value after the morphism) or $\Rename$ constrains $\sumTimestamp[']{i}{n}$ if
the value of $\sumTimestamp[']{i}{n}$ is not uniquely defined in $\timedCondition'$.
Such a restriction does not affect the termination of the learning algorithm.

\paragraph{On ``continuous consistency''}
As we observe in \cref{example:continuous-consistency},
we cannot enforce ``continuous consistency'' to timed observation tables.
Nevertheless, we can often refine $\SuffixSet$ by adding some dwell time before $\suffix \in \SuffixSet$.
In our implementation, to reduce the number of equivalence queries, when the timed observation table is continuously inconsistent, we try to add some dwell time before $\suffix \in \SuffixSet$ and refine $\SuffixSet$ if it indeed refines $\SuffixSet$.

\paragraph{On DTA construction}
To simplify the DTA construction, we use clock renaming (\ie{} $\clock \coloneqq \clock'$ for $\clock, \clock' \in \Clock$) and constant assignments (\ie{} $\clock \coloneqq \dConstant$ for $\clock \in \Clock$ and $\dConstant \in \N$).
It is known that these extension does not change the expressive power~\cite{BDFP04} but can reduce the number of states.
We also simplify the candidate DTAs by merging transitions with the same source/target locations, the same resets, and juxtaposed guards, \ie{} the union of two guards is still a guard.
Moreover, we remove the ``dead'' locations, \ie{} locations unreachable to any accepting locations, by a zone-based analysis.
Such simplification makes the constructed DTAs more interpretable by a human.

As we observe, \eg{} in \cref{figure:DTA_learning:final_hypothesis},
our DTA learning algorithm generates a DTA with redundant clock variables due to the DTA construction in~\cite{MP04}.
It is a future work to apply clock reduction techniques~\cite{DBLP:conf/hybrid/SaeedloeiK18,DBLP:conf/concur/GuhaNA14} to generate a DTA with fewer clock variables.

\section{Detail of the benchmarks}
Here, we present the detail of the benchmarks.

\subsection{\Random{}}
\begin{table}[tbp]
 \caption{Summary of the complexity of each DTA in \Random{}. $|\Loc|$ is the number of locations, $|\Alphabet|$ is the alphabet size, and $K_{\Clock}$ is the maximum constant in the guards.}%
 \label{table:benchmark:random:complexity}
 \scriptsize
 \begin{minipage}[t]{.45\textwidth}
  \centering
  \begin{tabular}{lrrr}
\toprule
 & $|\Loc|$ & $|\Alphabet|$ & $K_{\Clock}$ \\
\midrule
3\_2\_10/3\_2\_10-1 & 3 & 2 & 7 \\
3\_2\_10/3\_2\_10-10 & 3 & 2 & 9 \\
3\_2\_10/3\_2\_10-2 & 3 & 2 & 8 \\
3\_2\_10/3\_2\_10-3 & 3 & 2 & 10 \\
3\_2\_10/3\_2\_10-4 & 3 & 2 & 7 \\
3\_2\_10/3\_2\_10-5 & 3 & 2 & 5 \\
3\_2\_10/3\_2\_10-6 & 3 & 2 & 10 \\
3\_2\_10/3\_2\_10-7 & 3 & 2 & 6 \\
3\_2\_10/3\_2\_10-8 & 3 & 2 & 10 \\
3\_2\_10/3\_2\_10-9 & 3 & 2 & 9 \\
4\_2\_10/4\_2\_10-1 & 4 & 2 & 9 \\
4\_2\_10/4\_2\_10-10 & 4 & 2 & 10 \\
4\_2\_10/4\_2\_10-2 & 4 & 2 & 9 \\
4\_2\_10/4\_2\_10-3 & 4 & 2 & 10 \\
4\_2\_10/4\_2\_10-4 & 4 & 2 & 9 \\
4\_2\_10/4\_2\_10-5 & 4 & 2 & 10 \\
4\_2\_10/4\_2\_10-6 & 4 & 2 & 8 \\
4\_2\_10/4\_2\_10-7 & 4 & 2 & 7 \\
4\_2\_10/4\_2\_10-8 & 4 & 2 & 6 \\
4\_2\_10/4\_2\_10-9 & 4 & 2 & 10 \\
4\_4\_20/4\_4\_20-1 & 4 & 4 & 20 \\
4\_4\_20/4\_4\_20-10 & 4 & 4 & 20 \\
4\_4\_20/4\_4\_20-2 & 4 & 4 & 20 \\
4\_4\_20/4\_4\_20-3 & 4 & 4 & 20 \\
4\_4\_20/4\_4\_20-4 & 4 & 4 & 20 \\
4\_4\_20/4\_4\_20-5 & 4 & 4 & 18 \\
4\_4\_20/4\_4\_20-6 & 4 & 4 & 19 \\
4\_4\_20/4\_4\_20-7 & 4 & 4 & 20 \\
4\_4\_20/4\_4\_20-8 & 4 & 4 & 20 \\
4\_4\_20/4\_4\_20-9 & 4 & 4 & 19 \\
\bottomrule
\end{tabular}
  
 \end{minipage}
 \hfill
 \begin{minipage}[t]{.45\textwidth}
  \centering
  \begin{tabular}{lrrr}
\toprule
 & $|\Loc|$ & $|\Alphabet|$ & $K_{\Clock}$ \\
\midrule
5\_2\_10/5\_2\_10-1 & 5 & 2 & 9 \\
5\_2\_10/5\_2\_10-10 & 5 & 2 & 10 \\
5\_2\_10/5\_2\_10-2 & 5 & 2 & 10 \\
5\_2\_10/5\_2\_10-3 & 5 & 2 & 10 \\
5\_2\_10/5\_2\_10-4 & 5 & 2 & 10 \\
5\_2\_10/5\_2\_10-5 & 5 & 2 & 10 \\
5\_2\_10/5\_2\_10-6 & 5 & 2 & 9 \\
5\_2\_10/5\_2\_10-7 & 5 & 2 & 8 \\
5\_2\_10/5\_2\_10-8 & 5 & 2 & 10 \\
5\_2\_10/5\_2\_10-9 & 5 & 2 & 9 \\
6\_2\_10/6\_2\_10-1 & 5 & 2 & 8 \\
6\_2\_10/6\_2\_10-10 & 5 & 2 & 10 \\
6\_2\_10/6\_2\_10-2 & 5 & 2 & 7 \\
6\_2\_10/6\_2\_10-3 & 5 & 2 & 10 \\
6\_2\_10/6\_2\_10-4 & 5 & 2 & 10 \\
6\_2\_10/6\_2\_10-5 & 5 & 2 & 10 \\
6\_2\_10/6\_2\_10-6 & 5 & 2 & 10 \\
6\_2\_10/6\_2\_10-7 & 5 & 2 & 9 \\
6\_2\_10/6\_2\_10-8 & 5 & 2 & 10 \\
6\_2\_10/6\_2\_10-9 & 5 & 2 & 10 \\
\bottomrule
\end{tabular}
  
 \end{minipage}
\end{table}

\Random{} is a benchmark taken from~\cite{ACZZZ20}.
\Random{} consists of five classes: $3\_2\_10$,  $4\_2\_10$,  $4\_4\_20$, $5\_2\_10$, and $6\_2\_10$,
where each value of $|\Loc|\_|\Alphabet|\_M$ is the number of locations, the alphabet size, and the upper bound of the maximum constant in the guards in the DTAs, respectively.
Each class consists of 10 randomly generated DTAs.
\cref{table:benchmark:random:complexity} shows the summary of the complexity of each DTA.\@
The maximum constant $K_{\Clock}$ in the guards can be smaller than its upper bound due to its random construction.

\subsection{\unbalanced{}}

\begin{figure}[tbp]
  \centering
  \scalebox{.8}{\begin{tikzpicture}[node distance=4cm]
   \node[initial,accepting,state] (loc0) [align=center]{$\loc_0$};
   \node[accepting,state,node distance=10cm] (loc1) [right=of loc0, align=center]{$\loc_1$};
   \node[accepting,state] (loc2) [below right=of loc1, align=center]{$\loc_2$};
   \node[accepting,state,node distance=6cm] (loc3) [left=of loc2, align=center]{$\loc_3$};
   \node[accepting,state,node distance=6cm] (loc4) [left=of loc3, align=center]{$\loc_4$};
   \path[->]
   (loc0) edge node[above] {$\styleact{a}, \clock_0 > 1, \clock_0 < 2 / \clock \coloneqq 0$} (loc1)
   (loc1) edge node[below left] {$\styleact{a}, \clock > 1, \clock < 2 / \clock \coloneqq 0$} (loc2)
   (loc2) edge node[above] {$\styleact{a}, \clock > 1, \clock < 2 / \clock \coloneqq 0$} (loc3)
   (loc3) edge node[above] {$\styleact{a}, \clock > 1, \clock < 2 / \clock \coloneqq 0$} (loc4)
   (loc4) edge node[right] {$\styleact{a}, \clock > 1, \clock < 2 / \clock \coloneqq 0$} (loc0)
   ;
  \end{tikzpicture}}
  \caption{The target DTA of \unbalanced{}:1}%
  \label{figure:benchmark:unbalanced:1:target}
\end{figure}
\begin{figure}[tbp]
  \centering
  \scalebox{.8}{\begin{tikzpicture}[node distance=4cm]
   \node[initial,accepting,state] (loc0) [align=center]{$\loc_0$};
   \node[accepting,state,node distance=10cm] (loc1) [right=of loc0, align=center]{$\loc_1$};
   \node[accepting,state] (loc2) [below right=of loc1, align=center]{$\loc_2$};
   \node[accepting,state,node distance=6cm] (loc3) [left=of loc2, align=center]{$\loc_3$};
   \node[accepting,state,node distance=6cm] (loc4) [left=of loc3, align=center]{$\loc_4$};
   \path[->]
   (loc0) edge node[above] {$\styleact{a}, \clock_0 > 1, \clock_0 < 2 / \clock_0 \coloneqq 0$} (loc1)
   (loc1) edge node[below left,align=center] {$\styleact{a}, \clock_0 > 1, \clock_0 < 2, \clock_1 > 2, \clock_1 < 4$\\ $/ \clock_0 \coloneqq 0, \clock_1 \coloneqq 0$} (loc2)
   (loc2) edge node[above] {$\styleact{a}, \clock_0 > 1, \clock_0 < 2 / \clock_0 \coloneqq 0$} (loc3)
   (loc3) edge node[above,align=center] {$\styleact{a}, \clock_0 > 1, \clock_0 < 2, \clock_1 > 2, \clock_1 < 4$\\ $/ \clock_0 \coloneqq 0, \clock_1 \coloneqq 0$} (loc4)
   (loc4) edge node[right] {$\styleact{a}, \clock_0 > 1, \clock_0 < 2 / \clock_0 \coloneqq 0$} (loc0)
   ;
  \end{tikzpicture}}
  \caption{The target DTA of \unbalanced{}:2}%
  \label{figure:benchmark:unbalanced:2:target}
\end{figure}
\begin{figure}[tbp]
  \centering
  \scalebox{.8}{\begin{tikzpicture}[node distance=4cm]
   \node[initial,accepting,state] (loc0) [align=center]{$\loc_0$};
   \node[accepting,state,node distance=10cm] (loc1) [right=of loc0, align=center]{$\loc_1$};
   \node[accepting,state] (loc2) [below right=of loc1, align=center]{$\loc_2$};
   \node[accepting,state,node distance=6cm] (loc3) [left=of loc2, align=center]{$\loc_3$};
   \node[accepting,state,node distance=6cm] (loc4) [left=of loc3, align=center]{$\loc_4$};
   \path[->]
   (loc0) edge node[above,align=center] {$\styleact{a}, \clock_0 > 1, \clock_0 < 2 / \clock_0 \coloneqq 0$} (loc1)
   (loc1) edge node[below left,align=center] {$\styleact{a}, \clock_0 > 1, \clock_0 < 2, \clock_1 > 2, \clock_1 < 4$\\ $/ \clock_0 \coloneqq 0, \clock_1 \coloneqq 0$} (loc2)
   (loc2) edge node[above,align=center] {$\styleact{a}, \clock_0 > 1, \clock_0 < 2, \clock_2 > 2, \clock_2 < 4$\\ $/ \clock_0 \coloneqq 0, \clock_2 \coloneqq 0$} (loc3)
   (loc3) edge node[above,align=center] {$\styleact{a}, \clock_0 > 1, \clock_0 < 2, \clock_1 > 2, \clock_1 < 4$\\ $/ \clock_0 \coloneqq 0, \clock_1 \coloneqq 0$} (loc4)
   (loc4) edge node[right,align=center] {$\styleact{a}, \clock_0 > 1, \clock_0 < 2, \clock_2 > 2, \clock_2 < 4$\\ $/ \clock_0 \coloneqq 0, \clock_2 \coloneqq 0$} (loc0)
   ;
  \end{tikzpicture}}
  \caption{The target DTA of \unbalanced{}:3}%
  \label{figure:benchmark:unbalanced:3:target}
\end{figure}
\begin{figure}[tbp]
  \centering
  \scalebox{.8}{\begin{tikzpicture}[node distance=4cm]
   \node[initial,accepting,state] (loc0) [align=center]{$\loc_0$};
   \node[accepting,state,node distance=10cm] (loc1) [right=of loc0, align=center]{$\loc_1$};
   \node[accepting,state] (loc2) [below right=of loc1, align=center]{$\loc_2$};
   \node[accepting,state,node distance=6cm] (loc3) [left=of loc2, align=center]{$\loc_3$};
   \node[accepting,state,node distance=6cm] (loc4) [left=of loc3, align=center]{$\loc_4$};
   \path[->]
   (loc0) edge node[above, align=center] {$\styleact{a}, \clock_0 > 1, \clock_0 < 2 / \clock_0 \coloneqq 0$} (loc1)
   (loc1) edge node[below left,align=center] {$\styleact{a}, \clock_0 > 1, \clock_0 < 2, \clock_1 > 2, \clock_1 < 4$\\ $/ \clock_0 \coloneqq 0, \clock_1 \coloneqq 0$} (loc2)
   (loc2) edge node[below=0.5cm,align=center] {$\styleact{a}, \clock_0 > 1, \clock_0 < 2, \clock_2 > 2, \clock_2 < 4, \clock_3 > 3, \clock_3 < 6$\\ $/ \clock_0 \coloneqq 0, \clock_2 \coloneqq 0, \clock_3 \coloneqq 0$} (loc3)
   (loc3) edge node[below=0.5cm,align=center] {$\styleact{a}, \clock_0 > 1, \clock_0 < 2, \clock_1 > 2, \clock_1 < 4$\\ $/ \clock_0 \coloneqq 0, \clock_1 \coloneqq 0$} (loc4)
   (loc4) edge node[right,align=center] {$\styleact{a}, \clock_0 > 1, \clock_0 < 2, \clock_2 > 2, \clock_2 < 4$\\ $/ \clock_0 \coloneqq 0, \clock_2 \coloneqq 0$} (loc0)
   ;
  \end{tikzpicture}}
  \caption{The target DTA of \unbalanced{}:4}%
  \label{figure:benchmark:unbalanced:4:target}
\end{figure}
\begin{figure}[tbp]
  \centering
  \scalebox{.8}{\begin{tikzpicture}[node distance=4cm]
   \node[initial,accepting,state] (loc0) [align=center]{$\loc_0$};
   \node[accepting,state,node distance=10cm] (loc1) [right=of loc0, align=center]{$\loc_1$};
   \node[accepting,state] (loc2) [below right=of loc1, align=center]{$\loc_2$};
   \node[accepting,state,node distance=6cm] (loc3) [left=of loc2, align=center]{$\loc_3$};
   \node[accepting,state,node distance=6cm] (loc4) [left=of loc3, align=center]{$\loc_4$};
   \path[->]
   (loc0) edge node[above, align=center] {$\styleact{a}, \clock_0 > 1, \clock_0 < 2 / \clock_0 \coloneqq 0$} (loc1)
   (loc1) edge node[below left,align=center] {$\styleact{a}, \clock_0 > 1, \clock_0 < 2, \clock_1 > 2, \clock_1 < 4$\\ $/ \clock_0 \coloneqq 0, \clock_1 \coloneqq 0$} (loc2)
   (loc2) edge node[below=0.5cm,align=center] {$\styleact{a}, \clock_0 > 1, \clock_0 < 2, \clock_2 > 2, \clock_2 < 4, \clock_3 > 3, \clock_3 < 6$\\ $/ \clock_0 \coloneqq 0, \clock_2 \coloneqq 0, \clock_3 \coloneqq 0$} (loc3)
   (loc3) edge node[below=0.5cm,align=center] {$\styleact{a}, \clock_0 > 1, \clock_0 < 2, \clock_1 > 2, \clock_1 < 4, \clock_4 > 3, \clock_4 < 6$\\ $/ \clock_0 \coloneqq 0, \clock_1 \coloneqq 0, \clock_4 \coloneqq 0$} (loc4)
   (loc4) edge node[right,align=center] {$\styleact{a}, \clock_0 > 1, \clock_0 < 2, \clock_2 > 2, \clock_2 < 4$\\ $/ \clock_0 \coloneqq 0, \clock_2 \coloneqq 0$} (loc0)
   ;
  \end{tikzpicture}}
  \caption{The target DTA of \unbalanced{}:5}%
  \label{figure:benchmark:unbalanced:5:target}
\end{figure}
\unbalanced{} is our original benchmark consisting of five DTAs.
\crefrange{figure:benchmark:unbalanced:1:target}{figure:benchmark:unbalanced:5:target} illustrate the target DTAs.
As shown in \crefrange{figure:benchmark:unbalanced:1:target}{figure:benchmark:unbalanced:5:target}, 
the target DTAs have the same structure with a different number of clock variables and timing constraints.

\subsection{\AKM{}}
\AKM{} is a benchmark on an Authentication and Key management service of the WiFi.
\AKM{} is initially used in~\cite{DBLP:conf/lata/VaandragerB021}.
We took the model from~\cite{Leslieaj/DOTALearningSMT}.

\subsection{\CAS{}}
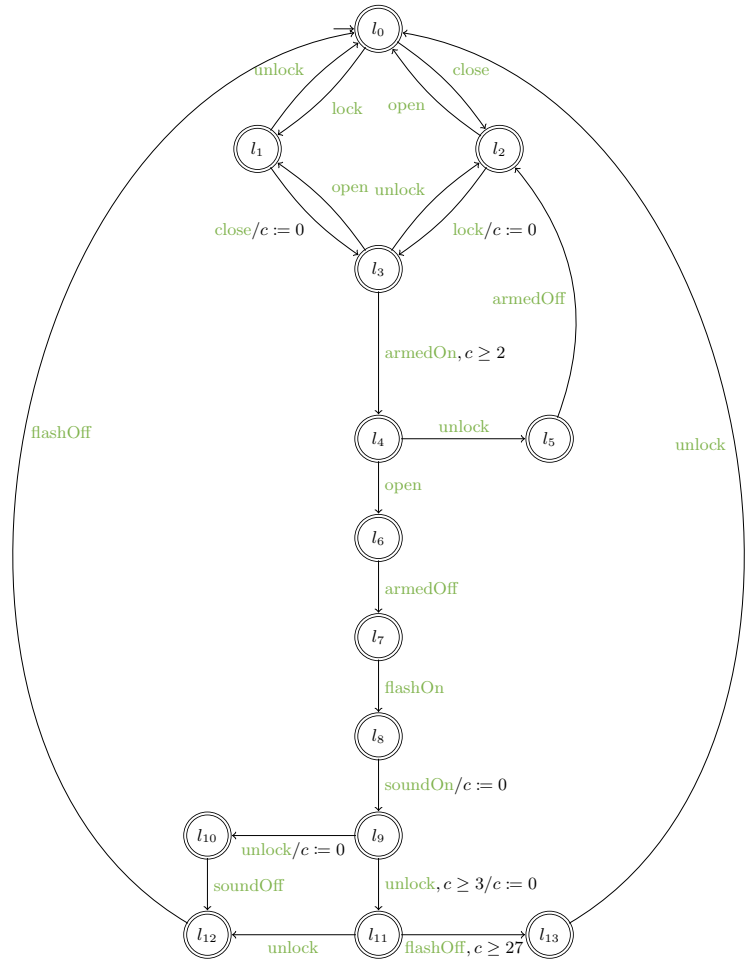
\begin{figure}[tbp]
 \centering
 \scalebox{.8}{\begin{tikzpicture}[shorten >=1pt,auto,node distance=1.0cm,scale=0.9,every node/.style={transform shape},every initial by arrow/.style={initial text={}}]
  \node[initial,state,accepting] (l0) [align=center]{$\loc_0$};
  \node[state,accepting,node distance=2.3cm] (l1) [below left=of l0] [align=center]{$\loc_1$};
  \node[state,accepting,node distance=2.3cm] (l2) [below right=of l0] [align=center]{$\loc_2$};
  \node[state,accepting,node distance=2.3cm] (l3) [below right=of l1] [align=center]{$\loc_3$};
  \node[state,accepting,node distance=2.3cm] (l4) [below=of l3] [align=center]{$\loc_4$};
  \node[state,accepting,node distance=2.3cm] (l5) [right=of l4] [align=center]{$\loc_5$};
  \node[state,accepting] (l6) [below=of l4] [align=center]{$\loc_6$};
  \node[state,accepting] (l7) [below=of l6] [align=center]{$\loc_7$};
  \node[state,accepting] (l8) [below=of l7] [align=center]{$\loc_8$};
  \node[state,accepting] (l9) [below=of l8] [align=center]{$\loc_9$};
  \node[state,accepting,node distance=2.3cm] (l10) [left=of l9] [align=center]{$\loc_{10}$};
  \node[state,accepting] (l11) [below=of l9] [align=center]{$\loc_{11}$};
  \node[state,accepting,node distance=2.3cm] (l12) [left=of l11] [align=center]{$\loc_{12}$};
  \node[state,accepting,node distance=2.3cm] (l13) [right=of l11] [align=center]{$\loc_{13}$};

  \path[->] 
  (l0) edge [bend left=10] node {$\styleact{lock}$} (l1)
  (l1) edge [bend left=10] node {$\styleact{unlock}$} (l0)
  (l0) edge [bend left=10] node {$\styleact{close}$} (l2)
  (l2) edge [bend left=10] node {$\styleact{open}$} (l0)
  (l1) edge [bend right=10] node[below left] {$\styleact{close}/ \clock \coloneqq 0$} (l3)
  (l3) edge [bend right=10] node[above right] {$\styleact{open}$} (l1)
  (l2) edge [bend left=10] node {$\styleact{lock}/ \clock \coloneqq 0$} (l3)
  (l3) edge [bend left=10] node[above left] {$\styleact{unlock}$} (l2)
  (l3) edge node {$\styleact{armedOn}, \clock \geq 2$} (l4)
  (l4) edge node {$\styleact{unlock}$} (l5)
  (l5) edge [bend right] node {$\styleact{armedOff}$} (l2)
  (l4) edge node {$\styleact{open}$} (l6)
  (l6) edge node {$\styleact{armedOff}$} (l7)
  (l7) edge node {$\styleact{flashOn}$} (l8)
  (l8) edge node {$\styleact{soundOn} / \clock \coloneqq 0$} (l9)
  (l9) edge node {$\styleact{unlock} / \clock \coloneqq 0$} (l10)
  (l9) edge node {$\styleact{unlock}, \clock \geq 3 / \clock \coloneqq 0$} (l11)
  (l10) edge node {$\styleact{soundOff}$} (l12)
  (l11) edge node {$\styleact{unlock}$} (l12)
  (l11) edge node[below] {$\styleact{flashOff}, \clock \geq 27$} (l13)
  (l12) edge [bend left=70] node[right] {$\styleact{flashOff}$} (l0)
  (l13) edge [bend right=70] node {$\styleact{unlock}$} (l0)
  ;
 \end{tikzpicture}}
 \caption{The target DTA in \CAS{}}%
 \label{figure:benchmark:CAS}
\end{figure}
\CAS{} is a benchmark on a car alarm system.
\cref{figure:benchmark:CAS} illustrates the target DTA.\@
\CAS{} is initially used in~\cite{DBLP:journals/sigsoft/AichernigBJK11}.
We took the model from~\cite{Leslieaj/DOTALearningSMT}.

\subsection{\light{}}
\begin{figure}[tbp]
 \begin{subfigure}[b]{1.0\textwidth}
 \centering
 \scalebox{1.0}{\begin{tikzpicture}[shorten >=1pt, auto,node distance=2.5cm,every initial by arrow/.style={initial text={}}]
  \node[initial,state,accepting] (l0) at (0,0) [align=center]{$\loc_0$};
  \node[state,accepting] (l1) [below=of l0] [align=center]{$\loc_1$};
  \node[state,accepting] (l2) [right=of l1] [align=center]{$\loc_2$};
  \node[state,accepting] (l3) [right=of l0] [align=center]{$\loc_3$};
  \node[state,accepting] (l4) [left=of l1] [align=center]{$\loc_4$};

  \path[->] 
  (l0) edge [bend left=10] node {$\styleact{press} / \clock \coloneqq 0$} (l1)
  (l1) edge [bend left=10] node {$\styleact{release}, \clock < 5$} (l0)
  (l1) edge node {$\styleact{starthold}, \clock \geq 10$} (l2)
  (l2) edge node {$\styleact{release}$} (l3)
  (l1) edge node[above] {$\styleact{release}, \clock \geq 5$} (l4)
  (l3) edge node {$\styleact{endhold}$} (l0)
  (l4) edge [bend left] node {$\styleact{touch}$} (l0)
  ;
 \end{tikzpicture}}
 \caption{The target DTA}%
 \label{figure:benchmark:light:target}
\end{subfigure}
\begin{subfigure}[b]{1.0\textwidth}
 \centering
 \scalebox{.6}{\begin{tikzpicture}[shorten >=1pt, auto,node distance=9cm,every initial by arrow/.style={initial text={}}]
  \node[initial,state,accepting] (l0) at (0,0) [align=center]{$\loc_0$};
  \node[state,accepting,node distance=5cm] (l1) [below=of l0] [align=center]{$\loc_1$};
  \node[state,accepting] (l2) [right=of l1] [align=center]{$\loc_2$};
  \node[state,accepting] (l3) [right=of l0] [align=center]{$\loc_3$};
  \node[state,accepting] (l4) [left=of l1] [align=center]{$\loc_4$};

  \path[->] 
  (l0) edge [bend left=10] node[align=left] {
                $\styleact{press}, \clock_0 \leq 0 / \clock_1 \coloneqq 0$\\
                $\styleact{press}, \clock_0 > 0 / \clock_0 \coloneqq 0, \clock_1 \coloneqq 0$} (l1)
  (l1) edge node[align=center] {
                $\styleact{starthold}, \clock_0 \geq 10, \clock_0 \leq 10, \clock_1 \geq 10, \clock_1 \leq 10 / \clock_2 \coloneqq 0$\\
                $\styleact{starthold}, \clock_0 > 10, \clock_1 > 10, / \clock_0 \coloneqq 10, \clock_1 \coloneqq 10, \clock_2 \coloneqq 0$} (l2) 
  (l1) edge node[above, align=center] {
                $\styleact{release}, \clock_0 \geq 5, \clock_0 \leq 5, \clock_1 \geq 5, \clock_1 \leq 5 / \clock_2 \coloneqq 0$\\
                $\styleact{release}, \clock_0 > 5, \clock_1 > 5  / \clock_0 \coloneqq 5, \clock_1 \coloneqq 5, \clock_2 \coloneqq 0$} (l4) 
  (l1) edge [bend left=10] node {$\styleact{release}, \clock_0 < 5, \clock_1 < 5 / \clock_0 \coloneqq 0$} (l0) 
  (l4) edge [bend left] node[above=2.5cm,align=center] {
                $\styleact{touch}, \clock_0 \geq 5, \clock_0 \leq 5, \clock_1 \geq 5, \clock_1 \leq 5, \clock_2 \leq 0 / \clock_0 \coloneqq 0$\\
                $\styleact{touch}, \clock_0 > 5, \clock_1 > 5 \clock_2 > 0 / \clock_0 \coloneqq 0, \clock_1 \coloneqq 5,  \clock_2 \coloneqq 0$} (l0)
  (l2) edge node[above left=1.5cm and 0cm,align=right] {
                $\styleact{release}, \clock_0 \geq 10, \clock_0 \leq 10, \clock_1 \geq 10, \clock_1 \leq 10, \clock_2 \leq 0 / \clock_3 \coloneqq 0$\\
                $\styleact{release}, \clock_0 > 10, \clock_1 > 10, \clock_2 > 0 / \clock_0 \coloneqq 10, \clock_1 \coloneqq 10, \clock_2 \coloneqq 0, \clock_3 \coloneqq 0$} (l3)
  (l3) edge node[above=.5cm,align=center] {
                $\styleact{endhold}, \clock_0 \geq 10, \clock_0 \leq 10, \clock_1 \geq 10, \clock_1 \leq 10, \clock_2 \leq 0, \clock_3 \leq 0 / \clock_0 \coloneqq 0$\\
                $\styleact{endhold}, \clock_0 > 10, \clock_1 > 10, \clock_2 > 0, \clock_3 > 0 / \clock_0 \coloneqq 0, \clock_1 \coloneqq 10,  \clock_2 \coloneqq 0, \clock_3 \coloneqq 0$} (l0)
  ;
 \end{tikzpicture}}
 \caption{The learned DTA after the simplification in \cref{appendix:implementation}, \eg{} removing the ``dead'' locations. For simplicity, we displace the edges with the same source and target locations on the same edge.}%
 \label{figure:benchmark:light:result}
\end{subfigure}
 \caption{\light{}}%
 \label{figure:benchmark:light}
\end{figure}
\light{} is a benchmark on a smart light switch.
\light{} is initially used in~\cite{APT20} inspired by~\cite{DBLP:conf/fates/HesselLNPS03}.
\cref{figure:benchmark:light:target} illustrates the target DTA.\@
We took the model from~\cite{Leslieaj/DOTALearningSMT}.

\subsection{\PC{}}
\PC{} is a benchmark on a particle counter.
\PC{} is initially used in~\cite{DBLP:conf/tap/AichernigAJKKSS14}.
We took the model from~\cite{Leslieaj/DOTALearningSMT}.

\subsection{\TCP{}}
\TCP{} is a benchmark on a functional specification of TCP protocol.
\TCP{} is initially used in~\cite{ACZZZ20}.
We took the model from~\cite{Leslieaj/DOTALearningSMT}.

\subsection{\Train{}}

\begin{figure}[tbp]
 \begin{subfigure}{1.0\textwidth}
  \centering
  \scalebox{.9}{\begin{tikzpicture}[node distance=2.7cm]
   \node[initial,accepting,state] (loc2) [align=center]{$\loc_{2}$};
   \node[accepting,state] (loc5) [right=of loc2, align=center]{$\loc_{5}$};
   \node[accepting,state] (loc1) [right=of loc5, align=center]{$\loc_{1}$};
   \node[accepting,state] (loc0) [above=of loc1, align=center]{$\loc_{0}$};
   \node[accepting,state] (loc3) [right=of loc1, align=center]{$\loc_{3}$};
   \node[accepting,state] (loc4) [above=of loc3, align=center]{$\loc_{4}$};
   
   \path[->]
   (loc2) edge node[above] {$\styleact{start} / \clock \coloneqq 0$} (loc5)
   (loc5) edge node[above] {$\styleact{appr}, \clock \geq 5 / \clock \coloneqq 0$} (loc1)
   (loc1) edge node[left] {$\styleact{enter}, \clock \geq 10 / \clock \coloneqq 0$} (loc0)
   (loc1) edge node[above] {$\styleact{stop} / \clock \coloneqq 0$} (loc3)
   (loc0) edge node[above left] {$\styleact{leave}, \clock \geq 3$} (loc2)
   (loc3) edge node[right] {$\styleact{go} / \clock \coloneqq 0$} (loc4)
   (loc4) edge node[above] {$\styleact{enter}, \clock \geq 7 / \clock \coloneqq 0$} (loc0)
   ;
  \end{tikzpicture}}%
  \caption{The target DTA}%
  \label{figure:benchmark:train:target}
 \end{subfigure}
 \begin{subfigure}{1.0\textwidth}
  \centering
  \scalebox{0.55}{\begin{tikzpicture}[node distance=6cm]
   \node[initial,accepting,state] (loc0) [align=center]{$\loc_{0}$};
   \node[accepting,state] (loc1) [right=of loc0, align=center]{$\loc_{1}$};
   \node[accepting,state] (loc2) [right=of loc1, align=center]{$\loc_{2}$};
   \node[accepting,state] (loc3) [right=of loc2, align=center]{$\loc_{3}$};
   \node[accepting,state] (loc4) [above=of loc3, align=center]{$\loc_{4}$};
   \node[accepting,state] (loc5) [above=of loc2, align=center]{$\loc_{5}$};
   \path[->]
   (loc0) edge node[align=center,below] {
   $\styleact{start}, \clock_{0} \leq 0 / \clock_{1} \coloneqq 0$\\
   $\styleact{start}, \clock_{0} > 0 / \clock_{0} \coloneqq 0,  \clock_{1} \coloneqq 0$} (loc1)
   (loc1) edge node [align=center,below] {
   $\styleact{appr}, \clock_{0} \geq 5, \clock_{0} \leq 5, \clock_{1} \geq 5, \clock_{1} \leq 5$\\ $/ \clock_{2} \coloneqq 0$\\
   $\styleact{appr}, \clock_{0} > 5, \clock_{1} > 5$\\ $/ \clock_{0} \coloneqq 5,  \clock_{1} \coloneqq 5,  \clock_{2} \coloneqq 0$} (loc2)
   (loc2) edge node[align=center,left] {$\styleact{enter}, \clock_{2} \geq 10, \clock_{1} \geq 15, \clock_{0} \geq 15$\\ $/ \clock_{0} \coloneqq 12,  \clock_{1} \coloneqq 12,  \clock_{2} \coloneqq 7,  \clock_{3} \coloneqq 7,  \clock_{4} \coloneqq 7,  \clock_{5} \coloneqq 0$} (loc5)
   (loc2) edge node[align=center,below] {
   $\styleact{stop}, \clock_{0} \geq 5, \clock_{0} \leq 5, \clock_{1} \geq 5, \clock_{1} \leq 5, \clock_{2} \leq 0$\\ $/ \clock_{3} \coloneqq 0$\\
   $\styleact{stop}, \clock_{2} > 0, \clock_{1} > 5, \clock_{0} > 5$\\ $/ \clock_{0} \coloneqq 5,  \clock_{1} \coloneqq 5,  \clock_{2} \coloneqq 0,  \clock_{3} \coloneqq 0$} (loc3)
   (loc3) edge node[align=center,left] {
   $\styleact{go}, \clock_{0} \geq 5, \clock_{0} \leq 5, \clock_{1} \geq 5, \clock_{1} \leq 5, \clock_{2} \leq 0, \clock_{3} \leq 0$\\ $/ \clock_{4} \coloneqq 0$\\
   $\styleact{go}, \clock_{0} > 5, \clock_{1} > 5, \clock_{2} > 0, \clock_{3} > 0$\\ $/ \clock_{0} \coloneqq 5,  \clock_{1} \coloneqq 5,  \clock_{2} \coloneqq 0,  \clock_{3} \coloneqq 0,  \clock_{4} \coloneqq 0$} (loc4)
   (loc4) edge node[align=center,above=.5cm] {
   $\styleact{enter}, \clock_{0} \geq 12, \clock_{0} \leq 12, \clock_{1} \geq 12, \clock_{1} \leq 12, \clock_{2} \geq 7, \clock_{2} \leq 7,$\\ $\clock_{3} \geq 7, \clock_{3} \leq 7, \clock_{4} \geq 7, \clock_{4} \leq 7$\\ $/ \clock_{5} \coloneqq 0$\\
   $\styleact{enter}, \clock_{0} > 12, \clock_{1} > 12, \clock_{2} > 7, \clock_{3} > 7, \clock_{4} > 7$\\ $/ \clock_{0} \coloneqq 12,  \clock_{1} \coloneqq 12,  \clock_{2} \coloneqq 7,  \clock_{3} \coloneqq 7,  \clock_{4} \coloneqq 7,  \clock_{5} \coloneqq 0$} (loc5)
   (loc5) edge node[align=center,above left] {
   $\styleact{leave}, \clock_{0} \geq 15, \clock_{0} \leq 15, \clock_{1} \geq 15, \clock_{1} \leq 15,$\\ $\clock_{2} \geq 10, \clock_{2} \leq 10, \clock_{3} \geq 10, \clock_{3} \leq 10, \clock_{4} \geq 10, \clock_{4} \leq 10,$\\ $\clock_{5} \geq 3, \clock_{5} \leq 3/ \clock_{0} \coloneqq 0$\\
   $\styleact{leave}, \clock_{0} > 15, \clock_{1} > 15, \clock_{2} > 10, \clock_{3} > 10, \clock_{4} > 10, \clock_{5} > 3$\\ $/ \clock_{0} \coloneqq 0,  \clock_{1} \coloneqq 15,  \clock_{2} \coloneqq 10,  \clock_{3} \coloneqq 10,  \clock_{4} \coloneqq 10,  \clock_{5} \coloneqq 3$} (loc0)
   ;
  \end{tikzpicture}}
 \caption{The learned DTA after the simplification in \cref{appendix:implementation}, \eg{} removing the ``dead'' locations. For simplicity, we displace the edges with the same source and target locations on the same edge.}%
 \label{figure:benchmark:train:result}
\end{subfigure}
 \caption{\Train{}}%
 \label{figure:benchmark:train}
\end{figure}
\Train{} is an abstract model of a train.
\cref{figure:benchmark:train:target} illustrates the target DTA.\@
\Train{} is initially used in~\cite{TALL19}.
We took the model from~\cite{Leslieaj/DOTALearningSMT}.

\subsection{\FDDI{}}
\FDDI{} is a benchmark of an abstract model of the FDDI protocol.
The model is initially used in~\cite{DBLP:conf/hybrid/DawsOTY95}.
We took a model from TChecker~\cite{ticktac-project/tchecker} with two processes.
The TChecker model of \FDDI{} can be obtained by executing \verb|fddi.sh 2 5 1 0|, where
 \verb|fddi.sh| is distributed in
\url{https://github.com/ticktac-project/tchecker/blob/master/examples/fddi.sh}.

\section{Detailed experiment results}

\begin{table}[tbp]
 \caption{Detailed experiment results}%
 \label{table:experiment_results:full:1}
 \scriptsize
 \centering
 \begin{tabular}{llrrrrrr}
\toprule
 &  & \# of Mem.\ queries & \# of Sym.\ Mem.\ queries & \# of Eq.\ queries & Exec.\ time [sec.] & $|\Loc|$ & $|\Clock|$ \\
\midrule
\multirow[c]{2}{*}{3\_2\_10/3\_2\_10-1} & \ourTool{} & 3897 & 2144 & 5 & 1.01e-01 & 4 & 3 \\
 & \DOTA{} & 59 & N/A & 6 & 8.23e-01 & 3 & 1 \\
\multirow[c]{2}{*}{3\_2\_10/3\_2\_10-10} & \ourTool{} & 22402 & 5259 & 6 & 6.27e-01 & 3 & 3 \\
 & \DOTA{} & 468 & N/A & 11 & 9.58e-01 & 3 & 1 \\
\multirow[c]{2}{*}{3\_2\_10/3\_2\_10-2} & \ourTool{} & 4192 & 3196 & 7 & 7.70e-02 & 4 & 3 \\
 & \DOTA{} & 216 & N/A & 8 & 1.18e-01 & 3 & 1 \\
\multirow[c]{2}{*}{3\_2\_10/3\_2\_10-3} & \ourTool{} & 35268 & 15341 & 11 & 2.32e+00 & 16 & 4 \\
 & \DOTA{} & 263 & N/A & 12 & 2.39e-01 & 3 & 1 \\
\multirow[c]{2}{*}{3\_2\_10/3\_2\_10-4} & \ourTool{} & 4148 & 2516 & 5 & 6.90e-02 & 6 & 3 \\
 & \DOTA{} & 171 & N/A & 8 & 9.71e-02 & 3 & 1 \\
\multirow[c]{2}{*}{3\_2\_10/3\_2\_10-5} & \ourTool{} & 2830 & 1532 & 5 & 9.20e-02 & 7 & 3 \\
 & \DOTA{} & 32 & N/A & 5 & 6.58e-02 & 3 & 1 \\
\multirow[c]{2}{*}{3\_2\_10/3\_2\_10-6} & \ourTool{} & 28643 & 9315 & 9 & 1.25e+00 & 4 & 4 \\
 & \DOTA{} & 175 & N/A & 9 & 1.15e-01 & 3 & 1 \\
\multirow[c]{2}{*}{3\_2\_10/3\_2\_10-7} & \ourTool{} & 31236 & 12704 & 9 & 1.93e+00 & 11 & 4 \\
 & \DOTA{} & 387 & N/A & 13 & 2.80e-01 & 3 & 1 \\
\multirow[c]{2}{*}{3\_2\_10/3\_2\_10-8} & \ourTool{} & 6914 & 3181 & 5 & 1.71e-01 & 3 & 2 \\
 & \DOTA{} & 188 & N/A & 9 & 1.03e-01 & 3 & 1 \\
\multirow[c]{2}{*}{3\_2\_10/3\_2\_10-9} & \ourTool{} & 2887 & 1850 & 4 & 4.50e-02 & 3 & 2 \\
 & \DOTA{} & 91 & N/A & 7 & 9.23e-02 & 3 & 1 \\
\multirow[c]{2}{*}{4\_2\_10/4\_2\_10-1} & \ourTool{} & 47914 & 23601 & 9 & 7.71e+00 & 13 & 3 \\
 & \DOTA{} & 271 & N/A & 14 & 1.50e-01 & 4 & 1 \\
\multirow[c]{2}{*}{4\_2\_10/4\_2\_10-10} & \ourTool{} & 10619 & 6186 & 4 & 4.88e-01 & 12 & 4 \\
 & \DOTA{} & 396 & N/A & 10 & 3.53e-01 & 4 & 1 \\
\multirow[c]{2}{*}{4\_2\_10/4\_2\_10-2} & \ourTool{} & 14258 & 9310 & 9 & 2.17e+00 & 14 & 3 \\
 & \DOTA{} & 257 & N/A & 7 & 1.27e-01 & 4 & 1 \\
\multirow[c]{2}{*}{4\_2\_10/4\_2\_10-3} & \ourTool{} & 29445 & 8279 & 8 & 9.55e-01 & 9 & 3 \\
 & \DOTA{} & 262 & N/A & 13 & 1.44e-01 & 4 & 1 \\
\multirow[c]{2}{*}{4\_2\_10/4\_2\_10-4} & \ourTool{} & 194442 & 69285 & 14 & 2.48e+01 & 26 & 4 \\
 & \DOTA{} & 776 & N/A & 14 & 2.45e-01 & 4 & 1 \\
\multirow[c]{2}{*}{4\_2\_10/4\_2\_10-5} & \ourTool{} & 22111 & 6815 & 7 & 1.06e+00 & 5 & 4 \\
 & \DOTA{} & 366 & N/A & 14 & 1.77e-01 & 4 & 1 \\
\multirow[c]{2}{*}{4\_2\_10/4\_2\_10-6} & \ourTool{} & 77864 & 30636 & 8 & 2.65e+01 & 14 & 4 \\
 & \DOTA{} & 985 & N/A & 10 & 3.03e-01 & 4 & 1 \\
\multirow[c]{2}{*}{4\_2\_10/4\_2\_10-7} & \ourTool{} & 98232 & 26569 & 6 & 9.18e+00 & 15 & 4 \\
 & \DOTA{} & 272 & N/A & 13 & 1.65e-01 & 4 & 1 \\
\multirow[c]{2}{*}{4\_2\_10/4\_2\_10-8} & \ourTool{} & 28427 & 10235 & 8 & 1.78e+00 & 14 & 3 \\
 & \DOTA{} & 255 & N/A & 10 & 1.50e-01 & 4 & 1 \\
\multirow[c]{2}{*}{4\_2\_10/4\_2\_10-9} & \ourTool{} & 36656 & 16403 & 6 & 5.15e+00 & 14 & 3 \\
 & \DOTA{} & 670 & N/A & 16 & 2.76e-01 & 4 & 1 \\
\multirow[c]{2}{*}{4\_4\_20/4\_4\_20-1} & \ourTool{} & 613498 & 209071 & 21 & 5.90e+01 & 14 & 4 \\
 & \DOTA{} & 5329 & N/A & 35 & 1.54e+00 & 4 & 1 \\
\multirow[c]{2}{*}{4\_4\_20/4\_4\_20-10} & \ourTool{} & 670040 & 304827 & 15 & 2.59e+02 & 26 & 4 \\
 & \DOTA{} & 4140 & N/A & 42 & 2.19e+00 & 4 & 1 \\
\multirow[c]{2}{*}{4\_4\_20/4\_4\_20-2} & \ourTool{} & T/O & T/O & T/O & T/O & T/O & T/O \\
 & \DOTA{} & 4239 & N/A & 37 & 1.65e+00 & 4 & 1 \\
\multirow[c]{2}{*}{4\_4\_20/4\_4\_20-3} & \ourTool{} & T/O & T/O & T/O & T/O & T/O & T/O \\
 & \DOTA{} & 3961 & N/A & 29 & 1.60e+00 & 4 & 1 \\
\multirow[c]{2}{*}{4\_4\_20/4\_4\_20-4} & \ourTool{} & 933187 & 331561 & 20 & 2.42e+02 & 22 & 3 \\
 & \DOTA{} & 1740 & N/A & 26 & 1.18e+00 & 4 & 1 \\
\multirow[c]{2}{*}{4\_4\_20/4\_4\_20-5} & \ourTool{} & 1055910 & 373123 & 14 & 4.94e+02 & 35 & 3 \\
 & \DOTA{} & 1966 & N/A & 26 & 8.27e-01 & 4 & 1 \\
\multirow[c]{2}{*}{4\_4\_20/4\_4\_20-6} & \ourTool{} & 1048528 & 492591 & 18 & 6.70e+02 & 34 & 3 \\
 & \DOTA{} & 3013 & N/A & 32 & 1.62e+00 & 4 & 1 \\
\multirow[c]{2}{*}{4\_4\_20/4\_4\_20-7} & \ourTool{} & 248399 & 125653 & 10 & 3.23e+01 & 18 & 3 \\
 & \DOTA{} & 2652 & N/A & 30 & 1.19e+00 & 4 & 1 \\
\multirow[c]{2}{*}{4\_4\_20/4\_4\_20-8} & \ourTool{} & 618743 & 186476 & 14 & 1.15e+03 & 11 & 3 \\
 & \DOTA{} & 3863 & N/A & 36 & 1.40e+00 & 4 & 1 \\
\multirow[c]{2}{*}{4\_4\_20/4\_4\_20-9} & \ourTool{} & 1681769 & 520310 & 11 & 8.34e+03 & 34 & 5 \\
 & \DOTA{} & 4074 & N/A & 35 & 1.01e+00 & 4 & 1 \\
\bottomrule
\end{tabular}

\end{table}

\begin{table}[tbp]
 \caption{Detailed experiment results}%
 \label{table:experiment_results:full:2}
 \scriptsize
 \centering
 \begin{tabular}{llrrrrrr}
\toprule
 &  & \# of Mem.\ queries & \# of Sym.\ Mem.\ queries & \# of Eq.\ queries & Exec.\ time [sec.] & $|\Loc|$ & $|\Clock|$ \\
\midrule
\multirow[c]{2}{*}{5\_2\_10/5\_2\_10-1} & \ourTool{} & 25038 & 10299 & 7 & 7.85e-01 & 8 & 3 \\
 & \DOTA{} & 1332 & N/A & 15 & 5.20e-01 & 5 & 1 \\
\multirow[c]{2}{*}{5\_2\_10/5\_2\_10-10} & \ourTool{} & 21615 & 10292 & 10 & 7.84e-01 & 7 & 3 \\
 & \DOTA{} & 1093 & N/A & 22 & 4.04e-01 & 5 & 1 \\
\multirow[c]{2}{*}{5\_2\_10/5\_2\_10-2} & \ourTool{} & 32562 & 13085 & 6 & 2.41e+00 & 13 & 5 \\
 & \DOTA{} & 359 & N/A & 12 & 2.58e-01 & 4 & 1 \\
\multirow[c]{2}{*}{5\_2\_10/5\_2\_10-3} & \ourTool{} & 37725 & 18962 & 8 & 6.62e+00 & 12 & 6 \\
 & \DOTA{} & 652 & N/A & 16 & 2.84e-01 & 5 & 1 \\
\multirow[c]{2}{*}{5\_2\_10/5\_2\_10-4} & \ourTool{} & 8121 & 4597 & 5 & 1.96e-01 & 5 & 3 \\
 & \DOTA{} & 778 & N/A & 12 & 4.30e-01 & 4 & 1 \\
\multirow[c]{2}{*}{5\_2\_10/5\_2\_10-5} & \ourTool{} & 334000 & 63946 & 12 & 1.67e+02 & 23 & 4 \\
 & \DOTA{} & 752 & N/A & 15 & 3.01e-01 & 5 & 1 \\
\multirow[c]{2}{*}{5\_2\_10/5\_2\_10-6} & \ourTool{} & 627980 & 69007 & 19 & 4.36e+01 & 8 & 4 \\
 & \DOTA{} & 1159 & N/A & 16 & 4.55e-01 & 5 & 1 \\
\multirow[c]{2}{*}{5\_2\_10/5\_2\_10-7} & \ourTool{} & 33157 & 14270 & 8 & 2.57e+00 & 11 & 5 \\
 & \DOTA{} & 639 & N/A & 16 & 3.31e-01 & 5 & 1 \\
\multirow[c]{2}{*}{5\_2\_10/5\_2\_10-8} & \ourTool{} & 56059 & 15114 & 6 & 2.17e+00 & 8 & 5 \\
 & \DOTA{} & 896 & N/A & 20 & 3.41e-01 & 5 & 1 \\
\multirow[c]{2}{*}{5\_2\_10/5\_2\_10-9} & \ourTool{} & 22807 & 12003 & 8 & 1.97e+00 & 17 & 6 \\
 & \DOTA{} & 1100 & N/A & 18 & 3.35e-01 & 5 & 1 \\
\multirow[c]{2}{*}{6\_2\_10/6\_2\_10-1} & \ourTool{} & 2912 & 1976 & 7 & 4.40e-02 & 4 & 2 \\
 & \DOTA{} & 104 & N/A & 11 & 1.73e-01 & 4 & 1 \\
\multirow[c]{2}{*}{6\_2\_10/6\_2\_10-10} & \ourTool{} & 100795 & 23336 & 9 & 4.67e+00 & 8 & 5 \\
 & \DOTA{} & 2122 & N/A & 16 & 8.46e-01 & 6 & 1 \\
\multirow[c]{2}{*}{6\_2\_10/6\_2\_10-2} & \ourTool{} & 3859 & 3060 & 6 & 1.08e-01 & 5 & 5 \\
 & \DOTA{} & 910 & N/A & 14 & 6.89e-01 & 5 & 1 \\
\multirow[c]{2}{*}{6\_2\_10/6\_2\_10-3} & \ourTool{} & 36219 & 21419 & 8 & 3.62e+00 & 12 & 7 \\
 & \DOTA{} & 1445 & N/A & 28 & 7.68e-01 & 6 & 1 \\
\multirow[c]{2}{*}{6\_2\_10/6\_2\_10-4} & \ourTool{} & 26780 & 9860 & 11 & 9.83e-01 & 6 & 4 \\
 & \DOTA{} & 3124 & N/A & 27 & 8.77e-01 & 6 & 1 \\
\multirow[c]{2}{*}{6\_2\_10/6\_2\_10-5} & \ourTool{} & 555939 & 70427 & 12 & 2.44e+02 & 16 & 5 \\
 & \DOTA{} & 2593 & N/A & 16 & 8.27e-01 & 6 & 1 \\
\multirow[c]{2}{*}{6\_2\_10/6\_2\_10-6} & \ourTool{} & 73012 & 28542 & 12 & 1.13e+01 & 7 & 5 \\
 & \DOTA{} & 900 & N/A & 19 & 5.05e-01 & 5 & 1 \\
\multirow[c]{2}{*}{6\_2\_10/6\_2\_10-7} & \ourTool{} & 208647 & 26577 & 10 & 1.37e+01 & 11 & 5 \\
 & \DOTA{} & 1748 & N/A & 14 & 6.21e-01 & 6 & 1 \\
\multirow[c]{2}{*}{6\_2\_10/6\_2\_10-8} & \ourTool{} & 33971 & 11466 & 10 & 1.19e+00 & 6 & 4 \\
 & \DOTA{} & 2073 & N/A & 28 & 9.83e-01 & 6 & 1 \\
\multirow[c]{2}{*}{6\_2\_10/6\_2\_10-9} & \ourTool{} & 22651 & 12691 & 14 & 1.95e+00 & 6 & 3 \\
 & \DOTA{} & 3929 & N/A & 35 & 1.72e+00 & 6 & 1 \\
\multirow[c]{2}{*}{\AKM{}} & \ourTool{} & 12263 & 10881 & 11 & 5.85e-01 & 12 & 7 \\
 & \DOTA{} & 3453 & N/A & 49 & 7.97e+00 & 12 & 1 \\
\multirow[c]{2}{*}{\CAS{}} & \ourTool{} & 66067 & 42611 & 17 & 4.65e+00 & 14 & 10 \\
 & \DOTA{} & 4769 & N/A & 18 & 9.58e+01 & 14 & 1 \\
\FDDI{} & \ourTool{} & 9986271 & 2408549 & 43 & 3.00e+03 & 348 & 10 \\
\multirow[c]{2}{*}{\PC{}} & \ourTool{} & 245134 & 162997 & 23 & 6.49e+01 & 25 & 10 \\
 & \DOTA{} & 10390 & N/A & 29 & 1.24e+02 & 25 & 1 \\
\multirow[c]{2}{*}{\TCP{}} & \ourTool{} & 11300 & 11337 & 15 & 3.82e-01 & 20 & 9 \\
 & \DOTA{} & 4713 & N/A & 32 & 2.20e+01 & 20 & 1 \\
\multirow[c]{2}{*}{\Train{}} & \ourTool{} & 13487 & 8424 & 8 & 1.72e-01 & 6 & 6 \\
 & \DOTA{} & 838 & N/A & 13 & 1.13e+00 & 6 & 1 \\
\unbalanced{}:1 & \ourTool{} & 51 & 52 & 2 & 2.00e-03 & 1 & 1 \\
\unbalanced{}:2 & \ourTool{} & 576142 & 14439 & 3 & 3.64e+01 & 25 & 7 \\
\unbalanced{}:3 & \ourTool{} & 403336 & 13001 & 4 & 2.24e+01 & 25 & 7 \\
\unbalanced{}:4 & \ourTool{} & 4142835 & 46204 & 5 & 2.40e+02 & 93 & 8 \\
\unbalanced{}:5 & \ourTool{} & 10691400 & 86184 & 5 & 8.68e+02 & 140 & 9 \\
\bottomrule
\end{tabular}

\end{table}

\cref{table:experiment_results:full:1,table:experiment_results:full:2} show the detailed experiment results, where the columns $|\Loc|$ and $|\Clock|$ show the number of locations and the clock variables of the learned DTA.\@



\ifdefined\VersionWithComments%
 \setcounter{tocdepth}{1} 
\listoftodos{}
\fi
\fi
\end{document}